\bf\printfield{volume}}%
\definecolor{darkblue}{rgb}{0.0,0.0,0.5}
\newcommand{\N}{\mathbb{N}}
\newcommand{\C}{\mathbb{C}}
\newcommand{\Z}{\mathbb{Z}}
\newcommand{\R}{\mathbb{R}}
\newcommand{\ri}{\mathrm{i}}
\newcommand{\rd}{\mathrm{d}}
\newcommand{\Span}{\mathrm{span}}
\newcommand{\ret}{\mathrm{ret}}
\newcommand{\adv}{\mathrm{adv}}
\newcommand{\rEG}{\mathrm{EG}}
\newcommand{\rint}{\mathrm{int}}
\newcommand{\const}{\mathrm{const}}
\DeclareMathOperator{\sgn}{sgn}
\DeclareMathOperator{\Texp}{Texp}
\DeclareMathOperator{\T}{T}
\DeclareMathOperator{\aT}{\overline{T}}
\DeclareMathOperator{\Ret}{R}
\DeclareMathOperator{\Adv}{A}
\DeclareMathOperator{\Dif}{D}
\DeclareMathOperator{\Gre}{G}
\DeclareMathOperator{\Wig}{W}
\DeclareMathOperator{\sd}{sd}
\DeclareMathOperator{\ext}{e}
\DeclareMathOperator{\der}{d}
\DeclareMathOperator{\CC}{\mathbf{c}}
\newcommand{\cL}{\mathcal{L}}
\newcommand{\cS}{\mathcal{S}}
\newcommand{\cD}{\mathcal{D}}
\newcommand{\Fa}{\mathcal{F}}
\newcommand{\Fh}{\mathcal{F}^\mathrm{hom}}
\newcommand{\id}{\mathbbm{1}}
\newcommand{\supp}{\mathrm{supp}}
\newcommand{\zerop}{0}
\newcommand{\normord}[1]{:\mathrel{#1}:}
\newcommand{\mP}[1]{\frac{\rd^4 #1}{(2\pi)^4}}
\newcommand{\mH}[2]{\rd\mu_{#1}(#2)}
\newcommand{\F}[1]{\tilde{#1}}
\newcommand{\FF}[1]{\widetilde{#1}}
\newcommand{\leqnomode}{\tagsleft@true}
\newcommand{\reqnomode}{\tagsleft@false}
\newtheorem{thm}{Theorem}
\newtheorem{dfn}[thm]{Definition}
\newtheorem{asm}{Assumption}
\newtheorem{lem}[thm]{Lemma}
\newtheorem{rem}[thm]{Remark}
\numberwithin{equation}{section}
\numberwithin{thm}{section}
\numberwithin{asm}{section}
\begin{document}

\title[Weak Adiabatic Limit in QFTs with Massless Particles]{Weak Adiabatic Limit in Quantum Field \\ Theories with Massless Particles}

\author[P. Duch]{Pawe{\l} Duch}
\address{Institute of Physics\\ Jagiellonian University\\
  \L ojasiewicza 11\\ 30-348 Krak\'ow\\ Poland}
\email{pawel.duch@uj.edu.pl}

\begin{abstract}
We construct the Wightman and Green functions in a~large class of models of perturbative QFT in the four-dimensional Minkowski space in the Epstein--Glaser framework. To this end we prove the existence of the weak adiabatic limit, generalizing the results due to Blanchard and Seneor. Our proof is valid under the assumption that the time-ordered products satisfy certain normalization condition. We show that this normalization condition may be imposed in all models with interaction vertices of canonical dimension 4 as well as in all models with interaction vertices of canonical dimension 3 provided each of them contains at least one massive field. Moreover, we prove that it is compatible with all the standard normalization conditions which are usually imposed on the time-ordered products. The result applies, for example, to quantum electrodynamics and non-abelian Yang--Mills theories.
\end{abstract}

\maketitle

\titlecontents{section}
  [0.1em]{}{\hspace{-1em}}{}{\titlerule*[0.5pc]{.}\contentspage}
\titlecontents{subsection}
 [0.1em]{}{\hspace{-1em}}{}{\titlerule*[0.5pc]{.}\contentspage}
\renewcommand{\contentsname}{{\begin{flushleft}{\bf\large Contents}\end{flushleft}}}
\tableofcontents

\section{Introduction}

Relativistically covariant perturbative quantum field theory (QFT) in the Minkowski space is one of the most successful modern physical theories. Its models, especially the Standard Model, have been repeatedly tested in experiments and provide a~very accurate description of almost all phenomena in particle physics. The fundamental objects of interest in QFT in the flat spacetime are the vacuum expectation values (VEVs) of the products and the time-ordered products of the interacting fields called the Wightman and Green functions, respectively. Conceptually, the Wightman functions are of more fundamental character. They are used for examining the basic physical structure of a~theory. Their properties: Poincar{\'e} covariance, spectrum condition, hermiticity, locality, positivity and clustering, reflect the basic assumptions of relativistic QFT. Moreover, by the Wightman reconstruction theorem~\cite{streater2000pct} their knowledge allows to obtain the operator formulation of the theory. On the other hand, the Green functions are of more practical importance and it is usually easier to obtain them in the perturbation theory. Furthermore, by the LSZ reduction formula~\cite{lehmann1955formulierung} they are directly related to the S-matrix elements. Because of this they have been studied in the physical literature more extensively than the Wightman functions.

The perturbative construction of the Green functions in a large class of models was given by Lowenstein~\cite{lowenstein1976convergence}, who generalized the BPHZ method~\cite{bogoliubov1957multiplication,hepp1966proof,zimmermann1969convergence}, and by Breitenlohner and Maison~\cite{breitenlohner1977dimensionally1,breitenlohner1977dimensionally2}, who used the dimensional regularization~\cite{bollini1972dimensional,veltman1972regularization}. Their results apply to all models with interaction vertices of infrared dimension equal at least to $4$. The infrared dimension is the additive number calculated by assigning dimension $2$ to massive fields, dimension $\frac{3}{2}$ to massless Dirac fields and dimension $1$ to massless boson fields and each derivative. Since the infrared dimension is not lower than the canonical dimension, the Green functions exist in particular in all models with interaction vertices of the canonical dimension equal to~$4$. Let us recall that the canonical dimension is the additive number which is used in the classification of models of QFT into super-renormalizable, renormalizable and non-renormalizable models. It is calculated by assigning dimension $1$ to each derivative as well as scalar and vector fields\footnote{In the case of vector fields we always use the Gupta--Bleuler formalism and the Feynman gauge.} and dimension $\frac{3}{2}$ to Dirac fields.

The Wightman functions have not been defined with the use of the momentum space methods. However, both the Wightman and Green functions can be constructed in the Epstein--Glaser (EG) approach~\cite{epstein1973role}. The basic objects in this approach are the time-ordered products of free fields, which satisfy a set of axioms formulated in position representation. The EG method is based on the observation that the time-ordered product of $n+1$ local fields at points $x_1,\ldots,x_{n+1}$ is determined uniquely by the time-ordered products of at most $n$ fields for all test functions with support away from the main diagonal $x_1=\ldots=x_{n+1}$. The solution of the ultraviolet (UV) problem in the EG approach amounts to finding an extension of the above distribution to the space of all test functions which is compatible with the axioms. In order to define an interacting model one specifies its interaction vertices $\mathcal{L}_1,\ldots,\mathcal{L}_{\mathrm{q}}$, which are polynomials in free fields and their derivatives. To each interaction vertex $\mathcal{L}_l$ we associate a~coupling constant $e_l$ and a switching function $g_l\in\mathcal{S}(\R^4)$. To avoid the infrared (IR) problem in the definition of the scattering matrix and the interacting fields one multiplies every interaction term by a~switching function belonging to the Schwartz class. In order to define the physical scattering matrix and the physical interacting fields one has to remove this IR regularization by taking the so-called adiabatic limit. 

Let $\mathbf{g}$ denote the list of the switching functions $g_1,\ldots,g_\mathrm{q}$. The IR-regularized scattering matrix is given by
\begin{equation}
 S(\mathbf{g}) = \Texp\left(\ri  \int\rd^4 x \, \sum_{l=1}^{\mathrm{q}}\,g_l(x)\mathcal{L}_l(x)  \right).
\end{equation}
The IR-regularized interacting fields $C_\rint(\mathbf{g};x)$ and their time-ordered products $\T(C_{1,\rint}(\mathbf{g};x_1),\ldots, C_{m,\rint}(\mathbf{g};x_m))$ are defined using the Bogoliubov formulas (cf. Section \ref{sec:W_G_IR}). The physical scattering matrix $S$ is obtained in the following way
\begin{equation}\label{eq:intro_strong}
 S\Psi =  \lim_{\epsilon\searrow 0} S(\mathbf{g}_\epsilon)\Psi,
\end{equation}
where $\Psi$ is any vector from some suitably chosen dense subspace of the Fock space. The physical Wightman and Green functions are defined as the limits of the vacuum expectation values of the products or the time-ordered products of the interacting fields with IR regularization
\begin{equation}\label{eq:intro_W_G}
\begin{gathered}
\Wig(C_1(x_1),\ldots,C_m(x_m)):=
 \lim_{\epsilon\searrow 0}\, (\Omega|C_{1,\rint}(\mathbf{g}_\epsilon;x_1)\ldots C_{m,\rint}(\mathbf{g}_\epsilon;x_m)\Omega),
 \\[4pt]
\Gre(C_1(x_1),\ldots,C_m(x_m)):=
 \lim_{\epsilon\searrow 0}\, (\Omega|\!\T(C_{1,\rint}(\mathbf{g}_\epsilon;x_1),\ldots,C_{m,\rint}(\mathbf{g}_\epsilon;x_m))\Omega).
\end{gathered}
\end{equation}
\\
By definition $\mathbf{g}_\epsilon=(g_{1,\epsilon},\ldots,g_{\mathrm{q},\epsilon})$ and $g_{l,\epsilon}(x):=g_l(\epsilon x)$, where $g_l$ is a~Schwartz function such that $g_l(0)=1$ for all $l\in\{1,\ldots,\mathrm{q}\}$. Note that in the limit $\epsilon\searrow0$ the interaction is adiabatically turned on and off. We say that the strong adiabatic limit exists if in each order of the perturbation theory the limit \eqref{eq:intro_strong} exists for any $\Psi$ from the chosen domain and its value is independent of the choice of the Schwartz functions $g_1,\ldots,g_\mathrm{q}$. Similarly, the weak adiabatic limit exists if in each order of the perturbation theory the limits \eqref{eq:intro_W_G} exist in the sense of Schwartz distributions and their values are independent of the choice of the Schwartz functions $g_1,\ldots,g_\mathrm{q}$.

The existence of the weak adiabatic limit in purely massive theories was shown by Epstein and Glaser in~\cite{epstein1973role}. This result was subsequently extended to the case of quantum electrodynamics (QED) and the massless $\varphi^4$ theory by Blanchard and Seneor~\cite{blanchard1975green}. The proof of the existence of the weak adiabatic limit in a~more general class of models has not been given in the literature.  It is the aim of the present paper to fill this gap. For purely massive theories the existence of the strong adiabatic limit was shown by Epstein and Glaser in~\cite{epstein1976adiabatic}. Because of the infrared problem this limit does not exist in most theories with massless particles, e.g., in QED. Adiabatic limit of the inclusive cross-sections in low orders of the perturbation theory of QED was considered in~\cite{scharf2014} (see also~\cite{dutsch1993infrared,dutsch1993vertex}). Finally, the existence of the expectation values of the products of the interacting fields in thermal states has been recently proved by Fredenhagen and Lindner in the EG framework with the use of the time-slice axiom and the time-averaged Hamiltonian~\cite{fredenhagen2014construction} (see also~\cite{lindner2013perturbative,drago2015generalised}). Let us also note that the method of Epstein and Glaser has been reformulated and generalized in a number of ways. It is the basis of QFT in curved spacetime which is used to construct interacting quantum fields propagating on an arbitrary globally hyperbolic spacetime~\cite{brunetti2000microlocal,hollands2001local,hollands2002existence}. The ideas of Epstein and Glaser combined with the formalism of deformation quantization and functional approach led also to the foundation of the perturbative algebraic quantum filed theory --- the formulation of perturbative QFT in terms of abstract local algebras~\cite{fredenhagen2015perturbative,rejzner2016perturbative}.

The main result of the paper is the proof of the existence of the weak adiabatic limit in the four-dimensional Minkowski space in models with the interaction vertices $\mathcal{L}_1,\ldots,\mathcal{L}_\mathrm{q}$ satisfying one of the following conditions: 
\begin{enumerate}[leftmargin=*,label={(\arabic*)}]
\item $\forall_{l\in\{1,\ldots,\mathrm{q}\}}$ $\dim(\mathcal{L}_l)=4$ or 
\item $\forall_{l\in\{1,\ldots,\mathrm{q}\}}$ $\dim(\mathcal{L}_l)=3$ and $\mathcal{L}_l$ contains at least one massive field, 
\end{enumerate}
where $\dim(B)$ is the canonical dimension of the polynomial $B$ (the case of models with the interaction vertices of mixed dimensions is briefly discussed in Appendix \ref{sec:general}). It is a generalization of the results due to Blanchard and Seneor~\cite{blanchard1975green} mentioned in the previous paragraph. Our result provides a method for the construction of the Wightman and Green functions in the class of models given above. To our knowledge this is the first construction of the Wightman functions in non-abelian Yang--Mills theories.

The general structure of our proof of the existence of the weak adiabatic limit resembles that of~\cite{blanchard1975green}. The most important distinguishing feature, which allows us to obtain a significantly more general result, is the use of the notion of regularity of a distribution expressed in terms of the generalized big O notation $\F{t}(q,q')=O^\textrm{dist}(|q|^\delta)$ (cf. Section~\ref{sec:math}). Like the proof in~\cite{blanchard1975green}, our proof needs specific normalization of some of the time-ordered products corresponding to Feynman diagrams with only massless external lines. The required normalization condition \ref{norm:wAL} (the {\bf N}ormalization condition sufficient for the existence of the {\bf w}eak {\bf A}diabatic {\bf L}imit) is formulated in Section~\ref{sec:PROOF}. We prove that this condition is compatible with all the standard normalization conditions like for example Poincar{\'e} covariance. In particular, in the case of QED the condition \ref{norm:wAL} is compatible with all the normalization conditions introduced in~\cite{dutsch1999local}. In purely massless models the condition \ref{norm:wAL} is a consequence of almost homogeneous scaling of the VEVs of the time-ordered products. This implies the compatibility of \ref{norm:wAL} with all the normalization conditions used in~\cite{hollands2008renormalized} in the definition of non-abelian Yang--Mills theories without matter (in the special case of the flat spacetime). We also prove that the following condition has to be satisfied for the weak adiabatic limit to exist: the self-energies of the massless fields (which are used in the definition of a given model) have to be normalized such that the physical masses of these fields vanish. We call this condition the correct mass normalization of massless fields. Note that it follows from the condition \ref{norm:wAL}. The Wightman and Green functions cannot be defined by means of the weak adiabatic limit in models in which the correct mass normalization of massless fields is not possible. An example of such model is the massless $\varphi^3$ theory.

Using the technique of the proof of the existence of the weak adiabatic limit we also show that it is possible to significantly restrict the freedom in the definition of the time-ordered products by imposing the condition which we call the central normalization condition. In the case of QED the time-ordered product defined in this way satisfy all the standard normalization conditions, including the Ward identities. The corresponding advanced and retarded products constitute the so-called central solution of the splitting problem~\cite{epstein1973role,scharf2014}, whose existence up to now was only established in purely massive models~\cite{epstein1973role}.

In the framework of perturbative algebraic quantum field theory  \cite{fredenhagen2015perturbative,rejzner2016perturbative} the existence of the weak adiabatic limit allows to define a real, normalized and Poincar\'e-invariant functional on the algebra of interacting fields obtained by means of the algebraic adiabatic limit. In the case of models which are constructed on the Fock space with a positive-definite covariant inner product this functional is positive and can be interpreted as an interacting vacuum state. 

The plan of the paper is as follows. In Section \ref{Sec:EG_approach} we describe the EG approach to the perturbation theory, present some auxiliary results and introduce the notation that is used throughout the paper. Section \ref{Sec:existence_wAL} contains the main results, including the proof of the existence of the weak adiabatic limit, which is outlined in Section~\ref{sec:idea} and given in Section~\ref{sec:PROOF}. In Section \ref{Sec:comp} we study the compatibility of the condition \ref{norm:wAL}, needed in our proof of the existence of the weak adiabatic limit, with the standard normalization conditions usually imposed on the time-ordered products. In Section~\ref{sec:central} we formulate the central normalization condition. Section~\ref{sec:properties} contains the list of the properties of the Wightman and Green functions constructed in the paper. In Section \ref{sec:vacuum_state} we construct a Poincar{\'e}-invariant functional on the algebra of interacting fields. In Appendix~\ref{sec:mass} we show that the correct mass normalization of massless fields is necessary for the existence of the weak adiabatic limit in the second order of perturbation theory. In Appendix~\ref{sec:general} we consider the case of models with the interaction vertices of different dimensions. Appendix \ref{sec:magic_formula} contains the comparison of the Green functions defined in the Epstein--Glaser framework and the Green functions obtained with the use of the Gell-Mann and Low formula. 

We use the $(+---)$ signature of the Minkowski metric. By $\overline{V}^+$ we denote the closed cone of the future-directed timelike and lightlike vectors. Our convention for the Fourier transform of Schwartz distributions is
\begin{equation}
 \tilde{t}(q):=\int\rd^N x\, \exp(\ri q \cdot x) t(x),~~~~~~t(x)=\int\frac{\rd^N q}{(2\pi)^N}\, \exp(-\ri q \cdot x) \tilde{t}(q).
\end{equation}

\section{Epstein--Glaser approach}\label{Sec:EG_approach}

In this section we outline the EG approach to perturbative QFT in the four-dimensional Minkowski space. We follow closely the exposition from~\cite{epstein1973role} and treat fields as operator-valued distributions defined on a suitable domain of the Fock space. For details we refer the reader to~\cite{epstein1973role,scharf2014,scharf2016gauge}.

\subsection{Algebra \texorpdfstring{$\Fa$}{F} of symbolic fields}\label{sec:ff}

In perturbative QFT the interacting models are built with the use of the free fields. Let
\begin{equation}\label{eq:basic_gen}
 \mathcal{G}_0:=\{A_1,\ldots,A_\mathrm{p}\}\simeq\{1,\ldots,\mathrm{p}\}
\end{equation}
be the set of symbols denoting types of free fields needed for the definition of a~model under consideration. The elements of this set are called the {\it basic generators}. All components of vector or spinor fields are included in this set as separate symbols (if fields of these types are present in the model). We assume that an involution denoted by ${}^*$ is defined in~$\mathcal{G}_0$. It means that if a~charged field $A_i$ belongs to $\mathcal{G}_0$, then also its hermitian conjugation denoted by $A^{*}_i$ belongs to it, i.e. $A_i^{*}=A_{i'}$ for some $i'\in\{1,\ldots,\mathrm{p}\}$. For example, in the case of QED there are 12 basic generators: the four components of the vector potential $A_\mu=A^{*}_\mu$, and the four components of the spinor field $\psi_a$ and its hermitian conjugate $\psi_a^*=(\overline{\psi}\gamma^0)_a$. 

The basic generators \eqref{eq:basic_gen} supplemented with the symbols corresponding to their derivatives form the set of the {\it generators}
\begin{equation}\label{eq:gen}
 \mathcal{G}:=\{\partial^\alpha\! A_i:~i\in\{1,\ldots,\mathrm{p}\}, \alpha\in\N_0^4 \}\simeq\{1,\ldots,\mathrm{p}\}\times \N_0^4,
\end{equation}
where $\alpha$ is a~multi-index. Note that $\mathcal{G}$ may be identified with the set of pairs $(i,\alpha)$ where $i\in\{1,\ldots,\mathrm{p}\}$ and $\alpha\in\N_0^4$. We set $(\partial^\alpha\! A_i)^*:=\partial^\alpha\! A^*_i$. To every generator we associate the following quantum numbers:
\begin{enumerate}[leftmargin=*,label={(\arabic*)}]
 \item fermion number $\mathbf{f}(\partial^\alpha\! A_i)\in\Z/2\Z$, 
 \item canonical dimension $\dim(\partial^\alpha\! A_i)\in\frac{1}{2}\Z$,
 \item electric charge number, barion number, lepton number, etc.
\end{enumerate}
For example in the case of QED we have $\mathbf{f}(\partial^\alpha\! A_\mu)=0$, $\mathbf{f}(\partial^\alpha\! \psi_a)=\mathbf{f}(\partial^\alpha\! \psi^*_a)=1$, $\dim(\partial^\alpha\! A_\mu)=1+|\alpha|$, $\dim(\partial^\alpha\! \psi_a)=\dim(\partial^\alpha\! \psi^*_a)=\frac{3}{2}+|\alpha|$, $\mathbf{q}(\partial^\alpha\! A_\mu)=0$, $\mathbf{q}(\partial^\alpha\! \psi_a)=-\mathbf{q}(\partial^\alpha\! \psi^*_a)=1$, where $\mathbf{q}\in\Z$ is the electric charge number.

Following~\cite{boas2000gauge} we define the {\it algebra of symbolic fields} denoted by $\Fa$. It is a~free unital graded-commutative ${}^*$-algebra over $\C$ generated by the elements of the set $\mathcal{G}$. The adjoint is defined uniquely by the following conditions: the anti-linearity and the identity $(B_1 B_2)^*=B_2^* B_1^*$ which holds for all \mbox{$B_1,B_2\in\Fa$}. The graded commutativity means that for any $B_1,B_2\in\Fa$ which are monomials in the generators it holds
\begin{equation}
 B_1 B_2 = (-1)^{\mathbf{f}(B_1)\mathbf{f}(B_2)} B_2 B_1. 
\end{equation}
The definition of $\mathbf{f}(B)$ is extended to all monomials $B\in\Fa$ by $\mathbf{f}(cB)=\mathbf{f}(B)$ for $c\in\C$ and the additivity $\mathbf{f}(B_1B_2)=\mathbf{f}(B_1)+\mathbf{f}(B_2)$ for any monomials $B_1,B_2\in\Fa$. The same holds for the other quantum numbers. We say that $B\in\Fa$ is a~homogeneous polynomial if it is a~linear combination of monomials with the same quantum numbers. The set of homogeneous polynomials is denoted by $\Fh$. The definitions of quantum numbers are naturally extended to $\Fh$.

By definition a monomial is an element of $\Fa$ which is a product of generators. The monomials are labeled by super-quadri-indices. The monomial $A^r\in\Fa$ labeled by the super-quadri-index $r$ is given by the formula
\begin{equation}\label{eq:wick_def}
 A^r := \prod_{i=1}^{\mathrm{p}} \prod_{\alpha\in\N_0^4}\,\,(\partial^\alpha\!A_{i})^{r(i,\alpha)}.
\end{equation} 
A {\it super-quadri-index} is by definition a~map
\begin{equation}
 r:~\mathcal{G} \ni (i,\alpha) \mapsto r(i,\alpha)\equiv r(\partial^\alpha\!A_i) \in \N_0
\end{equation}
supported on a~finite subset of $\mathcal{G}$. In addition, if $A_i$ is a fermionic field, then we demand that $r(i,\alpha)\in\{0,1\}$ for all $\alpha\in\N_0^4$. We say that the super-quadri-index $r$ involves only the field $A_i$ (involves only the massless fields) if $r(i',\alpha)=0$ for $i'\neq i$ (for all $i'$ such that $A_{i'}$ is a~massive field). We write $r\geq s$ iff $r(i,\alpha)\geq s(i,\alpha)$ for all $(i,\alpha)\in\mathcal{G}$. If $r\geq s$, then the super-quadri-index $u=r-s$ is given by $u(i,\alpha)=r(i,\alpha)-s(i,\alpha)$ for all $(i,\alpha)\in\mathcal{G}$. Moreover, we set
\begin{equation}
 |r| := \sum_{i=1}^{\mathrm{p}} \sum_{\alpha\in\N_0^4}r(i,\alpha),
 ~~~~~
 r!:=\prod_{i=1}^{\mathrm{p}} \prod_{\alpha\in\N_0^4}r(i,\alpha).
\end{equation}
Note that if fermionic fields are present, then the order of factors in the product \eqref{eq:wick_def} matters.  To remove the ambiguity in the definition of $A^r$, we fix some linear ordering in the set $\mathcal{G}\simeq\{1,\ldots,\mathrm{p}\}\times \N_0^4$ (its precise form is irrelevant for our purposes) and always multiply the generators in this order. All monomials in $\Fa$ are proportional to $A^r$ for some super-quadri-index $r$. The set 
\begin{equation}
 \{A^r\,:\,r ~\textrm{ is a~super-quadri-index}\}
\end{equation}
is a~linear basis of the algebra $\Fa$. For example, in the case of QED the electric current $j^\mu=\overline{\psi}\gamma^\mu\psi = \psi^*_a \gamma_{ab}^0 \gamma_{bc}^\mu \psi_c\in\Fa$ (we use the Einstein summation convention) is a~combination of the monomials $\psi^*_a \psi_c$.

The derivative of $B\in\Fa$ with respect to the generator $\partial^\alpha\!A_i$ is defined as a~graded derivation 
\begin{equation}
 \frac{\partial}{\partial(\partial^\alpha\!A_i)} BC = \frac{\partial B}{\partial(\partial^\alpha\!A_i)} \, C + 
 (-1)^{\mathbf{f}(B)\mathbf{f}(A_i)} \, B\, \frac{\partial C}{\partial(\partial^\alpha\!A_i)}
 ~~~~\forall_{B,C\in\Fh}
\end{equation}
such that
\begin{equation}
 \frac{\partial(\partial^{\alpha'}\!\!A_{i'})}{\partial(\partial^\alpha\!A_i)} = \delta_{ii'} \delta_{\alpha\alpha'}.
\end{equation}
Let $s$ be a~super-quadri-index. We define the linear map $\Fa\ni B\mapsto B^{(s)}\in\Fa$ given by
\begin{equation}
 B^{(s)}:= \prod_{i=1}^{\mathrm{p}} \prod_{\alpha\in\N_0^4}\,\,\left(\frac{\partial}{\partial(\partial^\alpha\!A_{i})}\right)^{s(i,\alpha)} B.
\end{equation}
For example, if $B=A^r$, then $B^{(s)}$ is proportional to $A^{r-s}$ for $r\geq s$ and vanishes otherwise. The polynomial $C\in\Fa$ is a~{\it sub-polynomial} of the polynomial $B\in\Fa$ iff $C$ equals $B^{(s)}$ up to a~multiplicative constant for some super-quadri-index $s$. All sub-polynomials of a~monomial $A^r$ are of the form $A^s$, where a super-quadri-index $s$ is such that $r\geq s$. 

The representation of the $SL(2,\C)$ group, which is the covering group of the Lorentz group, acts on the algebra of symbolic fields $\Fa$ in a standard way, and the action is called $\rho$.

\subsection{Wick polynomials}\label{sec:Wick}

Let $\cD$ be a~linear space over $\C$ equipped with a~non-degenerate sesquilinear inner product $(\cdot|\cdot)$ and let $L(\cD)$ be the space of linear maps $\cD\to\cD$. By an {\it operator-valued Schwartz distribution} on $\cD$ we mean a~map $T:\cS(\R^N)\!\to\! L(\cD)$ such that for any $\Psi,\Psi'\in\cD$
\begin{equation}
 \cS(\R^N)\ni g \mapsto (\Psi|\int \rd^N\!x\, g(x)\, T(x)\Psi') \in\C
\end{equation}
is a~Schwartz distribution. The space of operator-valued Schwartz distributions on $\cD$ is denoted by $\cS'(\R^N,L(\cD))$. If $T(x)\in\cS'(\R^{N},L(\cD))$ and $T'(x')\in\cS'(\R^{N'},L(\cD))$, then $T(x)T'(x')\in\cS'(\R^{N+N'},\hspace{-0.1mm}L(\cD))$, by the nuclear theorem.

To every symbolic field $B\in\Fa$ we associate the {\it Wick polynomial} $\normord{B(x)}$ which is an operator-valued Schwartz distribution~\cite{wightman1965fields,streater2000pct} on a~suitable domain $\cD_0$ in the Fock Hilbert space to be specified below. For example, ${A_i(x)}$ denotes one of the basic free fields and ${(\partial^\alpha\! A_i)(x)}\,=\partial^\alpha\!{A_i(x)}$ -- its derivative. All Wick monomials at point $x\in\R^4$ are up to a~multiplicative constant of the form 
\begin{equation}\label{eq:wick_monomial}
 \normord{A^{r}(x)}~ =~ \normord{ \prod_{i=1}^{\mathrm{p}} \prod_{\alpha\in\N_0^4} \,(\partial^\alpha\! A_{i}(x))^{r(i,\alpha)} }
\end{equation} 
for some super-quadri-index $r$. Observe that $\normord{B(x)}\,=0$ does not imply $B=0$. For example, we have ${(\square \varphi)(x)}=\square \varphi(x)=0$ if $\varphi$ is free field fulfilling the wave equation, whereas the symbol $\square \varphi\in\Fa$ is by definition a~nonzero generator.

The definition of all basic free fields ${A_i(x)}$ and the Fock spaces on which they act is standard~\cite{scharf2016gauge,weinberg1995quantum,derezinski2014quantum}. In the case of the vector field we use the Gupta--Bleuler approach~\cite{gupta1950theory,bleuler1950neue} which is nicely summarized in the monographs~\cite{scharf2016gauge,strocchi2013introduction}. Let us only mention that in this approach the vector field is defined on the Krein space with two inner products. The field is hermitian only with respect to the covariant inner product which is not positive definite.

The Hilbert space $\mathcal{H}$ on which the model is defined is the tensor product of the Fock spaces on which the fields ${A_1(x)},\ldots,{A_\mathrm{p}(x)}$ corresponding to basic generators act. The vacuum state in the Fock space $\mathcal{H}$ is denoted by $\Omega$. We introduce the following dense subspace in $\mathcal{H}$
\begin{multline}\label{eq:dom}
 \cD_0 :=\Span_\C\bigg\{ \int\rd^4 x_1\ldots\rd^4 x_n\,f(x_1,\ldots,x_n)\,A_{i_1}(x_1)\ldots A_{i_n}(x_n)\Omega ~:
 \\
 ~n\in\N_0, ~i_1,\ldots,i_n\in\{1,\ldots,\mathrm{p}\},~f\in\cS(\R^{4n}) \bigg\}.
\end{multline}
From now on, all the operators we consider are elements of $L(\cD_0)$. As shown in~\cite{wightman1965fields} Wick polynomials are are well defined as operator-valued Schwartz distributions on~$\cD_0$. The fact that $\cD_0$ is embedded in the Hilbert space $\mathcal{H}$ plays no role. The space $\cD_0$ is equipped with a~Poincar{\'e} covariant and non-degenerate inner product, which is denoted by $(\cdot|\cdot)$. The product  is positive definite unless the vector fields are present in the model. The hermitian conjugation and the notion of unitarity are defined with respect to this product. The unitary representation of the inhomogeneous $SL(2,\C)$ group,  which is the covering group of the Poincar{\'e} group, denoted by $U(a,\Lambda)$, where $a\in\R^4$ and $\Lambda$ is a~Lorentz transformation, is defined on $\cD_0$ in the standard way. In the case of pure translations we write $U(a)\equiv U(a,\id)$. 
%We also consider the discrete symmetries: the charge conjugation, the spatial-inversion and the time-reversal which act on $\cD_0$ as unitary or anti-unitary transformations \cite{weinberg1995quantum}.

\subsection{\texorpdfstring{$\Fa$}{F}-products}\label{sec:F_prod}

In this section we introduce the notion of the $\Fa$-product. The time-ordered products and all other products which are used in the EG approach are examples of the $\Fa$-products. We say that a multi-linear map
\begin{equation}
 F:~\Fa^n \ni (B_1,\ldots,B_n) \mapsto \,F(B_1(x_1),\ldots ,B_n(x_n)) \,\in \cS'(\R^{4n},L(\cD_0))
\end{equation}
is an $\Fa$-product iff the following conditions hold for any $B_1,\ldots,B_n\in\Fh$:
\begin{enumerate}[label=\bf{A.\arabic*},leftmargin=*]
\item\label{axiom1} Translational covariance: 
\begin{equation}
 U(a)F(B_1(x_1),\ldots,B_n(x_n))U(a)^{-1} 
 \\
 = F(B_1(x_1+a),\ldots,B_n(x_n+a)).
\end{equation}
\item\label{axiom2} If $\mathbf{f}(B_1)+\ldots+\mathbf{f}(B_n)$ is odd, then
\begin{equation}
 (\Omega|F(B_1(x_1),\ldots,B_n(x_n)) \Omega) = 0.
\end{equation}
\item\label{axiom3} Wick expansion: The product $F(B_1(x_1),\ldots,B_n(x_n))$ is uniquely determined by the VEVs of the product $F$ of the sub-polynomials of $B_1,\ldots,B_n$:
\begin{multline}\label{eq:T_expansion}
 F(B_1(x_1),\ldots,B_n(x_n)) 
 =
 \sum_{s_1,\ldots,s_n} (-1)^{\mathbf{f}(s_1,\ldots,s_n)}\, 
 \\
 ~(\Omega|F(B_1^{(s_1)}(x_1),\ldots,B_n^{(s_n)}(x_n))\Omega)
 ~\frac{\normord{A^{s_1}(x_1)\ldots A^{s_n}(x_n)}}{s_1!\ldots s_n!}.
\end{multline}
\end{enumerate}

The above properties of $\Fa$-products are modeled on the ordinary product of normally ordered operators
\begin{equation}
 \Fa^n \ni (B_1,\ldots,B_n) \mapsto \,\normord{B_1(x_1)}\ldots \normord{B_n(x_n)} \,\in \cS'(\R^{4n},L(\cD_0)),
\end{equation}
in which case the properties \ref{axiom1} and \ref{axiom2} are trivially satisfied and the property \ref{axiom3} is the usual Wick expansion:
\begin{multline}
 \normord{B_1(x_1)}\ldots \normord{B_n(x_n)} 
 \,=
 \sum_{s_1,\ldots,s_n} (-1)^{\mathbf{f}(s_1,\ldots,s_n)}\, \\
 ~(\Omega|\normord{B_1^{(s_1)}(x_1)}\ldots\normord{B_n^{(s_n)}(x_n)}\Omega)
 ~\frac{\normord{A^{s_1}(x_1)\ldots A^{s_n}(x_n)}}{s_1!\ldots s_n!}.
\end{multline}
The factor $(-1)^{\mathbf{f}(s_1,\ldots,s_n)}$ in \eqref{eq:T_expansion} may be read off from the above equation. In particular, if $B_1,\ldots,B_n$ have even fermion number, then ${(-1)^{\mathbf{f}(s_1,\ldots,s_n)}=1}$. The RHS of Equation \eqref{eq:T_expansion} is a~well-defined operator-valued Schwartz distribution as a~result of the following theorem.
\begin{thm}\label{thm:eg0}\emph{\cite{epstein1973role}} 
If $t\in\cS'(\R^{4n})$ is translationally invariant, then 
\begin{equation}
 t(x_1,\ldots,x_n) \normord{B_1(x_1)\ldots B_n(x_n)}~\in \cS'(\R^{4n},L(\cD_0)).
\end{equation}
for any $B_1,\ldots,B_n\in\Fa$.
\end{thm}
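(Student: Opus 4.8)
\emph{Proof proposal.} The plan is to verify the two defining properties of an element of $\cS'(\R^{4n},L(\cD_0))$: that for every $\varphi\in\cS(\R^{4n})$ the operator $\int\varphi(x)\,t(x)\,\normord{B_1(x_1)\ldots B_n(x_n)}\,\rd^{4n}x$ maps $\cD_0$ into $\cD_0$, and that for fixed $\Psi,\Psi'\in\cD_0$ the resulting matrix element is a tempered distribution in $\varphi$. By multilinearity in $B_1,\ldots,B_n$ we may assume each $B_j=A^{r_j}$ is a Wick monomial, and splitting every free field (or derivative) into its positive- and negative-frequency parts $A_i^{(\pm)}(x_j)$ reduces the problem to a single normal-ordered product $S_\nu(x)$ for a fixed assignment $\nu$ of frequency signs, all creation factors standing to the left of all annihilation factors.

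First I would describe $S_\nu(x)\Psi$ for $\Psi\in\cD_0$. Since the vectors of $\cD_0$ are finite linear combinations of states obtained from $\Omega$ by creation operators with Schwartz single-particle wave functions, each annihilation factor $A_i^{(+)}(x_j)$ in $S_\nu$ must be contracted with one of those creation operators, or the term vanishes; each such contraction contributes a factor $h_j(x_j)$ which is a smooth function of polynomial growth whose Fourier transform is a Schwartz density carried by a mass shell. Thus $S_\nu(x)\Psi$ is a finite sum of terms $\bigl(\prod_{j\in I^{c}}h_j(x_j)\bigr)\bigl(\prod_{i\in I}A_i^{(-)}(x_i)\bigr)\Psi'$, where $I$ indexes the creation factors, $I^{c}$ the contracted annihilation factors, and $\Psi'\in\cD_0$ has fewer particles.

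Next I would multiply by $t$ and smear against $\varphi$. The $h_j$ are multipliers on $\cS(\R^{4n})$, so $\varphi\prod_{j\in I^{c}}h_j\in\cS(\R^{4n})$. If $I^{c}\neq\emptyset$, integrate out the variables $x_{I^{c}}$ first: the partial pairing $F(x_I):=\bigl\langle t,(\varphi\prod_{j}h_j)(x_I,\,\cdot\,)\bigr\rangle$ is a Schwartz function of $x_I$ --- this needs only that $t$ is of finite order and that $\varphi\prod_j h_j$ decreases rapidly in every argument, not translation invariance. Then $\int\prod_{i\in I}A_i^{(-)}(x_i)\,F(x_I)\,\rd x_I\,\Psi'$ creates $|I|$ particles whose joint wave function is the restriction of $\widetilde F$ to a product of mass shells, hence Schwartz, so the vector lies in $\cD_0$. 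The one case in which $F$ is not Schwartz is $I^{c}=\emptyset$, i.e.\ when every factor of $S_\nu$ is a creation operator, and this is exactly where translation invariance enters. Writing $t(x)=\tau(x_1-x_n,\ldots,x_{n-1}-x_n)$ with $\tau\in\cS'(\R^{4(n-1)})$, a change of variables together with a Fourier transform in the variable $x_n$ shows that $\widetilde{\varphi t}$ is a smooth function of at most polynomial growth, so it may be restricted to the product of the $n$ mass shells; and on that product all factors carry the same sign of energy, so the total momentum $q_1+\ldots+q_n$ has $|q_1^0+\ldots+q_n^0|=\sum_j\omega_{\vec q_j}$, which is bounded below by a positive multiple of $|(\vec q_1,\ldots,\vec q_n)|$ --- this forces the restriction, with all its derivatives, to decrease rapidly, so the wave function is again Schwartz and the vector lies in $\cD_0$. (The same dichotomy is visible in momentum space: the convolution of $\widetilde t$, supported on $\{\,q_1+\ldots+q_n=0\,\}$, with the frequency-resolved Wick monomial is manifestly well defined for same-sign assignments because the relevant fibres are compact, while for mixed assignments it acquires a meaning only after the annihilation factors have acted on the Schwartz wave functions in $\cD_0$.)

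Finally, for fixed $\Psi,\Psi'\in\cD_0$ the same contraction analysis --- now pairing the fields at $x_1,\ldots,x_n$ against the content of both $\Psi$ and $\Psi'$, with no contractions among $x_1,\ldots,x_n$ --- shows that $(\Psi|S_\nu(x)\Psi')$ is a finite sum of products of functions of the type $h$, hence a smooth function all of whose derivatives have polynomial growth; multiplication of such a function by $t\in\cS'(\R^{4n})$ is continuous on $\cS'(\R^{4n})$, so $\varphi\mapsto(\Psi|\int\varphi(x)\,t(x)\,S_\nu(x)\,\rd^{4n}x\,\Psi')$ is a tempered distribution. Summing over $\nu$ and over the monomials composing $B_1,\ldots,B_n$ completes the argument. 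The main obstacle is the purely-creation case of the third step: one must make sure that the mixed-frequency terms never leave behind an ill-defined object such as a creation operator at a sharp point --- they do not, since an annihilation factor acting on a finite-particle state is always contracted and its variable thereby integrated out against a Schwartz function --- and that translation invariance both regularizes $\widetilde{\varphi t}$ and, through the energy--momentum support, supplies exactly the decay needed to return to $\cD_0$; the residual work is the bookkeeping of signs and, for fermionic generators, of the $\Z/2\Z$ grading.
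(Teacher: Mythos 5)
The paper itself does not prove Theorem \ref{thm:eg0} (it cites \cite{epstein1973role}), so your argument must be judged against the standard Epstein--Glaser proof. Its mechanism --- translation invariance makes $\widetilde{\varphi t}$ smooth, polynomially bounded and rapidly decreasing transverse to the hyperplane $q_1+\ldots+q_n=0$, and the positivity of the energies on the creation shells (together with the Schwartz densities of the state acted upon) converts this into rapid decrease of the created wave functions --- is exactly what you use for the all-creation term, and your final paragraph on temperedness of the matrix elements is also fine. The genuine gap is in the mixed terms, $I\neq\emptyset\neq I^{c}$. The claim that $F(x_I)=\langle t,(\varphi\prod_j h_j)(x_I,\cdot)\rangle$ is a Schwartz function of $x_I$, ``needing only that $t$ is of finite order\ldots not translation invariance'', is false. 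Without translation invariance the partial pairing need not even be a function: for $t=v\otimes 1$ with $v\in\cS'(\R^4)$ non-smooth one gets $F(x_1)=v(x_1)\int \rd^4x_2\,\varphi(x_1,x_2)h(x_2)$, and the created one-particle wave function is the mass-shell restriction of $\tilde v$ convolved with a Schwartz function, which grows polynomially --- the smeared operator then does not map $\cD_0$ into $\cD_0$, showing that the mixed terms genuinely consume the hypothesis. Even with translation invariance the claim fails whenever two or more creation variables survive: for $n=3$, creation at $x_1,x_2$, annihilation at $x_3$, take $t(x)=\rho(x_1-x_2)$ with $\rho$ a non-smooth tempered distribution; then $F(x_1,x_2)=\rho(x_1-x_2)$ times a Schwartz function, which is not Schwartz (not even a function). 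What is true is only that $\tilde F$ is a smooth, polynomially bounded function which decreases rapidly in $|\sum_{i\in I}q_i|$ after the annihilation momenta are integrated against the rapidly decreasing densities of $\Psi$; on the product of forward shells $|\sum_{i\in I}q_i|\geq\sum_{i\in I} q_i^0\geq \mathrm{const}\,|(\vec q_i)_{i\in I}|$ then yields the decay. In other words, every term with at least one surviving creation operator (not just the purely-creation one) requires the translation-invariance-plus-spectrum argument of your third step, so the dichotomy you draw is wrong and, as written, the proof is incomplete for those terms; the repair is to run your momentum-space argument uniformly over all $I\neq\emptyset$.

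A secondary, smaller point: to conclude membership in $\cD_0$ as defined in \eqref{eq:dom} you must still convert ``smooth and rapidly decreasing on the product of mass shells'' into ``restriction to the shells of a Schwartz function on $\R^{4n}$'', i.e.\ exhibit the vector as $\int \rd^{4n}x\, f(x)\,A_{i_1}(x_1)\ldots A_{i_k}(x_k)\Omega$ with $f\in\cS$; this is routine for massive shells but deserves a word of justification at the tip $\vec q=0$ of the massless cone, which your ``hence Schwartz, so the vector lies in $\cD_0$'' glosses over.
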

\begin{rem}\label{rem:transl}
A distribution $t\in\cS'(\R^{4(n+1)})$ is translationally invariant iff
\begin{equation}
 t(x_1,\ldots,x_{n+1}) = t(x_1+a,\ldots,x_{n+1}+a)
\end{equation}
for all $a\in\R^4$. For any translationally invariant $t\in\cS'(\R^{4(n+1)})$, $n\in\N_0$, we can define the associated distribution $\underline{t}\in\cS'(\R^{4n})$ by 
\begin{equation}\label{eq:dist_trans_inv}
 \underline{t}(x_1,\ldots,x_n)\equiv t(x_1,\ldots,x_n,0):=\int \rd^4 y \, t(x_1+y,\ldots,x_n+y,y)\, h(y),
\end{equation}
where $h\in\cS(\R^4)$, $\int\rd^4 y \, h(y)=1$. The RHS of the above equation is independent of $h$. The Fourier transforms of $t$ and $\underline{t}$ are related by
\begin{equation}\label{eq:dist_trans_inv_F}
 \F{t}(q_1,\ldots,q_{n+1}) = (2\pi)^4 \delta(q_1+\ldots+q_{n+1})\,\F{\underline{t}}(q_1,\ldots,q_n).
\end{equation}
\end{rem}

Let us define the product and the graded commutator of two $\Fa$-products:
\begin{equation}
\begin{aligned}\label{eq:F_Fprime}
 F:~\Fa^n \ni (B_1,\ldots,B_n) &\mapsto \,F(B_1(x_1),\ldots ,B_n(x_n)) \,\in \cS'(\R^{4n},L(\cD_0)),
  \\
 F':~\Fa^{n'} \ni (B'_1,\ldots,B'_{n'}) &\mapsto \,F(B'_1(x'_1),\ldots ,B'_{n'}(x'_{n'})) \,\in \cS'(\R^{4n'},L(\cD_0)).
\end{aligned}
\end{equation}
The product of $F$ and $F'$ is by definition the following multi-linear map
\begin{multline}\label{eq:F_product}
 \Fa^{n+n'}\ni (B_1,\ldots,B_n;B'_1,\ldots,B'_{n'})\mapsto
 \\
 \hspace{-2mm} F(B_1(x_1),\ldots,B_n(x_n)) F'(B'_1(x'_1),\ldots,B'_{n'}(x'_{n'}))\!\in\! \cS'(\R^{4(n+n')},L(\cD_0)),\hspace{-1mm}
\end{multline}
which, as it turns out, is again an $\Fa$ product. The only non-trivial part in the proof of this fact is the verification that the product of $F$ and $F'$ fulfills the condition \eqref{eq:T_expansion}, which follows from the properties of the Wick products (cf. Section 4 in \cite{epstein1973role}). The graded commutator of $F$ and $F'$ is defined by
\begin{multline}\label{eq:com}
 \left[ F(B_1(x_1),\ldots,B_n(x_n)), F'(B'_1(x'_1),\ldots,B'_n(x'_{n'})) \right] :=
 \\
 F(B_1(x_1),\ldots,B_n(x_n))\,F'(B'_1(x'_1),\ldots,B'_n(x'_{n'}))-
 \\
 (-1)^{\mathbf{f}(B_1\ldots B_n)\mathbf{f}(B'_1\ldots B'_n)}\,
 F'(B'_1(x'_1),\ldots,B'_n(x'_{n'}))\, F(B_1(x_1),\ldots,B_n(x_n)),
\end{multline}
where $B_1\ldots,B_n,B'_1,\ldots,B'_{n'}\in\Fh$ and $\mathbf{f}(B_1\ldots B_n)$ is the fermion number of $B_1\ldots B_n\in\Fa$. 

The VEV of the product of $F$ and $F'$ may be expressed as follows 
\begin{multline}\label{eq:vev_product_representation}
(\Omega|F(B_1(x_1),\ldots,B_n(x_n)) F'(B_n^{\prime}(x'_1),\ldots,B^{\prime}_{n'}(x'_{n'}))\Omega)
\\
 =\sum_{\bar u,\bar u',\bar i,\bar i',\bar \alpha,\bar \alpha'} ~\const~~
 \prod_{j=1}^l  (\Omega|\normord{\partial^{\bar \alpha(j)}\!A_{\bar i(j)}(x_{\bar u(j)})}\,
 \normord{\partial^{\bar \alpha'(j)}\!A_{\bar i'(j)}(x'_{\bar u'(j)})}\Omega)\times
 \\
 (\Omega|F(B^{(s_1)}_1(x_1),\ldots,B^{(s_n)}_n(x_n))\Omega) \, (\Omega| F'(B^{\prime(s'_1)}_1(x'_1),\ldots,B^{\prime(s'_{n'})}_{n'}(x'_{n'}))\Omega),
\end{multline}
where the sum is carried out over $l\in\N_0$ and functions $\bar u,\bar u',\bar i,\bar i',\bar \alpha,\bar \alpha'$ such that:
\begin{enumerate}[leftmargin=*,label={(\arabic*)}]
 \item $\bar u,\bar u'$ are functions from $\{1,\ldots,l\}$ to $\{1,\ldots,n\}$ or $\{1,\ldots,n'\}$,
 \item $\bar i,\bar i'$ are functions from $\{1,\ldots,l\}$ to the set $\{1,\ldots,\mathrm{p}\}$, which is identified with the set $\mathcal{G}_0$ of the basic generators \eqref{eq:basic_gen},
 \item $\bar \alpha,\bar \alpha'$ are functions from $\{1,\ldots,l\}$ to $\N_0^4$ (the set of multi-indices).
\end{enumerate}
The super-quadri-indices $s_1,\ldots,s_n$, $s'_1,\ldots,s'_n$ are determined uniquely by
\begin{equation}\label{eq:s_sPrime}
 A^{s_m}\propto \prod_{\substack{j\in \{1,\ldots,l\} \\ \bar u(j)=m}} \partial^{\bar \alpha(j)} A_{\bar i(j)},~~~~~
 A^{s'_m}\propto \prod_{\substack{j\in \{1,\ldots,l\} \\ \bar u'(j)=m}} \partial^{\bar \alpha'(j)} A_{\bar i'(j)},
\end{equation}
where $\propto$ indicates the equality up to a~possible factor $(-1)$.

Given a~list of functions $\bar u$, $\bar u'$, $\bar i$, $\bar i'$, $\bar \alpha$, $\bar \alpha'$ of the above type we define the following maps characterizing this list
\begin{equation}
 \bar\ext,\bar\der:\, \{1,\ldots,\mathrm{p}\}\simeq\mathcal{G}_0\to\N_0,
\end{equation}
where $\mathcal{G}_0$ is the set of the basic generators \eqref{eq:basic_gen}. We set
\begin{align}\label{eq:ext_s}
 \bar\ext(i)\equiv \bar\ext(A_i):= |\{j\,: \bar i(j)=i\}|+ |\{j\,: \bar i'(j)=i\}|,
 \\[2pt]
 \label{eq:der_s}
 \bar\der(i)\equiv \bar\der(A_i):= \sum_{\substack{j\in\{1,\ldots,l\}\\\bar i(j)=i}}|\bar \alpha(j)| + \sum_{\substack{j\in\{1,\ldots,l\}\\\bar i'(j)=i}}|\bar \alpha'(j)|.
\end{align} 
Given a~list of super-quadri-indices $\mathbf{s}=(s_1,\ldots,s_n)$ we define functions 
\begin{equation}
 \ext_{\mathbf{s}}, \der_{\mathbf{s}}:\, \{1,\ldots,\mathrm{p}\}\simeq\mathcal{G}_0\to\N_0
\end{equation}
characterizing this list by setting
\begin{equation}\label{eq:ext}
 \ext_{\mathbf{s}}(i)\equiv \ext_{\mathbf{s}}(A_i):= \sum_{\alpha\in\N_0^4}  s(i,\alpha),
 ~~~~~
 \der_{\mathbf{s}}(i)\equiv \der_{\mathbf{s}}(A_i):= \sum_{\alpha\in\N_0^4}  s(i,\alpha)~|\alpha|,
\end{equation}
where 
\begin{equation}\label{eq:total_index}
 s(i,\alpha):=\sum_{j=1}^n s_j(i,\alpha)
\end{equation}
is the total super-quadri-index of the list $\mathbf{s}$. 

It follows from \eqref{eq:s_sPrime} that the lists $\mathbf{s}=(s_1,\ldots,s_n)$, $\mathbf{s}'=(s'_1,\ldots,s'_n)$ and functions $\bar u,\bar u',\bar i,\bar i',\bar \alpha,\bar \alpha'$ which appear in the representation \eqref{eq:vev_product_representation} satisfy the following constraints
\begin{equation}\label{eq:constraints}
 \ext_{\mathbf{s}}(A_i) + \ext_{\mathbf{s}'}(A_i) = \bar\ext(A_i),~~~~~\der_{\mathbf{s}}(A_i) + \der_{\mathbf{s}'}(A_i) = \bar\der(A_i).
\end{equation}
The term in the sum on the RHS of \eqref{eq:vev_product_representation} for which $l=0$ (i.e. $\mathbf{s}=\mathbf{s}'=0$ and $\bar\ext(A_i)=\ext_{\mathbf{s}}(A_i)=\ext_{\mathbf{s}'}(A_i)=0$ for $A_i\in\mathcal{G}_0$) is called the \emph{vacuum contribution}. The terms in the sum on the RHS of \eqref{eq:vev_product_representation} corresponding to functions $\bar u$, $\bar u'$, $\bar i$, $\bar i'$, $\bar \alpha$, $\bar \alpha'$, which satisfy the condition $\bar\ext(A_i)=\ext_{\mathbf{s}}(A_i)=\ext_{\mathbf{s}'}(A_i)=0$ for all massive fields $A_i\in\mathcal{G}_0$ (i.e. the super-quadri-indices from the lists $\mathbf{s}$, $\mathbf{s}'$ involve only massless fields) are called the \emph{massless contributions}. 

We close this section with a~lemma which further restricts the form of terms which may appear in the sum on the RHS of \eqref{eq:vev_product_representation}.

\begin{lem}\label{lem:aux_lemma}
(A) There exist neighborhoods $\mathcal{O}\subset\R^{4n}$ and $\mathcal{O}'\subset\R^{4n'}$ of the origins in $\R^{4n}$ and $\R^{4n'}$, respectively, such that the following holds:
Let $g\in\cS(\R^{4k+4n'})$ be any test function with 
\begin{equation}
 \supp\,\F{g} \subset \R^{4n} \times \mathcal{O}' \cup \mathcal{O} \times \R^{4n'}.
\end{equation}
Then all contributions to the expansion \eqref{eq:vev_product_representation} vanish on $g$ with the exception of the vacuum and the massless contributions.

\noindent (B) Representation \eqref{eq:vev_product_representation} of the VEV of the graded commutator $[F,F']$ does not contain the vacuum contribution.
\end{lem}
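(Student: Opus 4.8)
The plan is to analyze the two-point functions of free fields appearing in the contraction factors on the RHS of \eqref{eq:vev_product_representation} and exploit the fact that a massive field's two-point function, in momentum space, is supported on (or near) the mass shell $q^2=m^2$ with $m>0$, whereas the test function $g$ has Fourier support forced to lie near one of the coordinate origins. For part (A), I would start from the explicit form of the Wick two-point function $(\Omega|\normord{\partial^{\bar\alpha(j)}\!A_{\bar i(j)}(x_{\bar u(j)})}\,\normord{\partial^{\bar\alpha'(j)}\!A_{\bar i'(j)}(x'_{\bar u'(j)})}\Omega)$. By translational invariance this depends only on the difference $x_{\bar u(j)}-x'_{\bar u'(j)}$, and its Fourier transform is, up to a polynomial in $q$ coming from the derivatives $\partial^{\bar\alpha(j)},\partial^{\bar\alpha'(j)}$, a positive-frequency mass-shell distribution $\theta(q^0)\delta(q^2-m_{\bar i(j)}^2)$ (or the appropriate Krein-space analogue for the vector field, which is still supported on the lightcone). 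Hence in momentum space the whole contraction part of a given term in \eqref{eq:vev_product_representation} is a distribution in the variables conjugate to $x_1,\dots,x_n,x'_1,\dots,x'_{n'}$ which is supported, in each pair of conjugate momenta, on a set of the form $\{q_j^2=m_{\bar i(j)}^2,\ q_j^0>0\}$; the remaining VEV factors $(\Omega|F(B^{(s_1)}_1(x_1),\dots)\Omega)$ and $(\Omega|F'(\dots)\Omega)$ are, after stripping the overall momentum-conservation delta (Remark~\ref{rem:transl}), ordinary tempered distributions in the relative momenta.

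Next I would combine these supports. Consider a contribution for which some massive field $A_i\in\mathcal{G}_0$ has $\bar\ext(A_i)>0$, i.e. at least one contraction line $j$ carries a massive field. Momentum conservation at the "left cluster" $x_1,\dots,x_n$ reads $\sum_{m=1}^n q_m = \sum_{\text{lines }j} \pm q_j$ where the $q_j$ are the contracted momenta flowing between the two clusters; since each such $q_j$ lies on a mass shell $q_j^2=m_{\bar i(j)}^2$ with $q_j^0>0$, and at least one of the masses is strictly positive, the total momentum $P:=\sum_{m=1}^n q_m$ flowing into the left cluster is constrained to lie in a region bounded away from the origin — more precisely $P$ lies in $\overline{V}^+$ shifted so that $P^2 \geq m_i^2 > 0$ or, by a continuity/compactness argument, at distance at least some $c>0$ from $0\in\R^4$ when only one massive line is present, and in any case $P\ne 0$. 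The same holds for the right cluster. Therefore, choosing $\mathcal{O}$ and $\mathcal{O}'$ to be balls around the origins small enough that their Minkowski-squared norms are smaller than $\min_i m_i^2$ (over massive $A_i$ that can occur, a finite set since the $B_k,B'_k$ are fixed polynomials), any such contribution has Fourier support disjoint from $\R^{4n}\times\mathcal{O}'$ (because the right-cluster momenta cannot all be small) and disjoint from $\mathcal{O}\times\R^{4n'}$ (because the left-cluster momenta cannot all be small), hence vanishes when paired with $g$. The surviving contributions are exactly those with $\bar\ext(A_i)=0$ for every massive $A_i$, and by \eqref{eq:constraints} this forces $\ext_{\mathbf{s}}(A_i)=\ext_{\mathbf{s}'}(A_i)=0$ for massive fields as well; together with the $l=0$ term this is precisely the set of vacuum and massless contributions.

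For part (B), the vacuum contribution ($l=0$) to \eqref{eq:vev_product_representation} is $(\Omega|F(B_1(x_1),\dots,B_n(x_n))\Omega)\,(\Omega|F'(B'_1(x'_1),\dots,B'_{n'}(x'_{n'}))\Omega)$, which is the same for the product $FF'$ and for the reversed product $(-1)^{\mathbf f \mathbf f'}F'F$ — there are no fields contracted between the two factors, so the ordering is immaterial and the sign $(-1)^{\mathbf f(B_1\ldots B_n)\mathbf f(B'_1\ldots B'_{n'})}$ multiplies the same expression. Hence the vacuum contributions cancel in the graded commutator \eqref{eq:com}, leaving a representation of $(\Omega|[F,F']\Omega)$ with no $l=0$ term. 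The main obstacle is the careful bookkeeping in the first step: verifying that the mass-shell/positive-energy support statement survives the derivatives and the sum over the combinatorial data $\bar u,\bar u',\bar i,\bar i',\bar\alpha,\bar\alpha'$, and that "bounded away from the origin" can be made uniform over the finitely many relevant contraction patterns so that a single pair of neighborhoods $\mathcal{O},\mathcal{O}'$ works; the vector-field Krein-space two-point function, supported on the full lightcone rather than a positive-mass shell, must be handled by the observation that a sum of forward lightcone vectors with at least one genuinely massive forward vector still has $P^2>0$, so it does not spoil the argument.
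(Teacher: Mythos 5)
Your approach is the intended one: the paper states Lemma~\ref{lem:aux_lemma} without printing a proof, but the argument it relies on (and invokes again in Section~\ref{sec:weak_massive_proof} via the mass gap) is exactly the momentum-space support argument you give --- every contraction line in \eqref{eq:vev_product_representation} carries an on-shell positive-frequency momentum, so any term with at least one massive line forces the total momentum entering each cluster away from the origin, while the factors $(\Omega|F(\ldots)\Omega)$, $(\Omega|F'(\ldots)\Omega)$ only contribute overall momentum-conservation deltas (Remark~\ref{rem:transl}); part (B) is the cancellation of the $l=0$ terms in the graded commutator.

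Three points in your write-up should be tightened. First, drop the ``$\pm$'': in \eqref{eq:vev_product_representation} every two-point function has the unprimed field on the left, so all $l$ lines are Wightman two-point functions with the same orientation, and the inter-cluster momentum is a sum of vectors in $\overline{V}^+$, each on its mass shell; this uniform orientation is precisely what excludes cancellations among the contracted momenta, and with at least one massive line it gives directly $P^0\geq m_{\min}$, where $m_{\min}>0$ is the smallest mass of the (finitely many) massive basic generators --- no compactness argument is needed. Second, the neighborhoods must be chosen Euclidean-small: smallness of the Minkowski square of the individual momenta does not prevent the cluster sum from reaching the massive region (near-lightlike vectors have tiny Minkowski norm but can be huge); what you need is, e.g., Euclidean balls $\mathcal{O},\mathcal{O}'$ of radius $r$ with $nr<m_{\min}$ and $n'r<m_{\min}$, so that whenever all unprimed (resp.\ primed) momenta lie in $\mathcal{O}$ (resp.\ $\mathcal{O}'$) their sum cannot satisfy $P\in\overline{V}^+$, $P^0\geq m_{\min}$, hence the support of the Fourier transform of such a term is disjoint from both strips. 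Third, in part (B) the two vacuum terms cancel by subtraction only when $(-1)^{\mathbf{f}(B_1\cdots B_n)\mathbf{f}(B'_1\cdots B'_{n'})}=+1$; in the remaining case both fermion numbers are odd and each factor $(\Omega|F\Omega)$, $(\Omega|F'\Omega)$ vanishes separately by axiom \ref{axiom2}, so the conclusion still holds but requires that extra line rather than the bare cancellation you state.
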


\subsection{Time-ordered product}\label{sec:axioms}

The scattering matrix and the interacting fields in perturbative QFT are given in terms of the time-ordered products. By definition the time-ordered products form a~family of $\Fa$-products
\begin{equation}\label{eq:time_ordered_product}
 \Fa^n \ni (B_1,\ldots,B_n) \mapsto \,\T(B_1(x_1),\ldots,B_n(x_n)) \,\in \cS'(\R^{4n},L(\cD_0))
\end{equation}
indexed by $n\in\N_0$ which satisfies the following axioms (besides the conditions \ref{axiom1}-\ref{axiom3} stated in the previous section, which we also call the axioms):
\begin{enumerate}[label=\bf{A.\arabic*},leftmargin=*]
\setcounter{enumi}{3}
\item\label{axiom4} $\T(\emptyset)=\id$, $\T(B(x))=\,\,\normord{B(x)}$,
\begin{equation}
\T(B_1(x_1),\ldots,B_n(x_n),1(x_{n+1})) = \T(B_1(x_1),\ldots,B_n(x_n)),
\end{equation}
where $1$ on the LHS of the above equality is the unity in $\Fa$.
\item\label{axiom5} Graded symmetry: For any $B_1,\ldots,B_n\in\Fh$ it holds
\begin{equation}\label{eq:T_graded}
 \T(B_1(x_1),\ldots,B_n(x_n)) = (-1)^{\mathbf{f}(\pi)}\T(B_{\pi(1)}(x_{\pi(1)}),\ldots,B_{\pi(n)}(x_{\pi(n)})),
\end{equation}
where $\pi$ is any permutation of the set $\{1,\ldots,n\}$ and $\mathbf{f}(\pi)\in\Z/2\Z$ is the number of transpositions in $\pi$ that involve a~pair of fields with odd fermion number. In particular, the time-ordered product of polynomials $B_1,\ldots,B_n$ which have even fermion number is invariant under permutations of its arguments.
\item\label{axiom6} Causality: If none of the points $x_1,\ldots,x_m$ is in the causal past of any of the points $x_{m+1},\ldots,x_n$, then
\begin{multline}\label{eq:T_causality}
 \T(B_1(x_1),\ldots,B_n(x_n)) 
 \\
 = \T(B_1(x_1),\ldots,B_m(x_m))\T(B_{m+1}(x_{m+1}),\ldots,B_n(x_n)).
\end{multline}
\item\label{axiom7}
Bound on the Steinmann scaling degree: For all $B_1,\ldots,B_n\in\Fh$ it holds
\begin{equation}\label{eq:sd_bound}
 \sd(\,(\Omega|\T(B_1(x_1),\ldots,B_{n-1}(x_{n-1}),B_n(0)) \Omega)\,) \leq \sum_{j=1}^n (\dim(B_j)+\CC),
\end{equation}
where $\sd(\cdot)$ is the Steinmann scaling degree of a~distribution and the constant \mbox{$\CC\in \{0,1\}$} is fixed.
\end{enumerate}

\begin{dfn}\label{def:sd}
\emph{(Steinmann scaling degree~\cite{steinmann1971perturbative})}
Let $t\in\mathcal{S}'(\R^N)$. The scaling degree of $t$ with respect to the origin is given by
\begin{align}
 \label{eq:scaling_set}
 &\sd(t)
 :=
 \inf\left\{s\in \R~:~\forall_{g\in\mathcal{S}(\R^N)}
 \lim_{\lambda\searrow 0}\int\rd^N\!x\,\lambda^s t(\lambda x) g(x) =0 \right\} ,
 \\
 &\sd(t)\in \{-\infty\}\cup\R\cup\{+\infty\},
\end{align}
where by definition $\inf\emptyset=+\infty$. 
\end{dfn} 

For the proof of the existence of the time-ordered products satisfying the above axioms we refer the reader to~\cite{epstein1973role,brunetti2000microlocal}. We stress that the time-ordered products, like any other $\Fa$-products, are operator-valued Schwartz distributions on $\cD_0$ indexed by the symbolic fields $B\in\Fa$, and not the Wick polynomials $\normord{B(x)}\in\cS'(\R^{4},L(\cD_0))$. This is the feature of the so-called off-shell formalism~\cite{stora2002pedagogical,dutsch2004causal,brouder2008relating}. Observe that, for example, the time-ordered product $\T(\square\varphi(x_1),\varphi(x_2))$ does not have to vanish even if $\varphi$ is a~free field satisfying the wave equation.

Setting $\CC=0$ in the condition \eqref{eq:sd_bound} is the standard choice~\cite{epstein1973role,brunetti2000microlocal,fredenhagen2015perturbative}. The case $\CC>0$ corresponds in the BPHZ approach to the special form of the so-called over-subtractions~\cite{zimmermann1969convergence}. Its compatibility with the rest of axioms may be proved using Lemma 6.6 of~\cite{brunetti2000microlocal} as in the standard case. We always assume that $\CC=0$ in the case of models with at least one interaction vertex of dimension $4$. However, in the case of models with interaction vertices of dimension 3 it is also possible to set $\CC=1$. The choice $\CC=1$ makes the UV behavior of the theory worse, but at the same time allows more freedom in the construction of time-ordered products. This freedom may be used to improve its IR properties.

\subsection{Normalization freedom}\label{sec:freedom}

Assume that all the time-ordered products with at most $n$ arguments are given. In this section we will characterize the ambiguity in defining the time-ordered products with $n+1$ arguments. To this end, by the axiom~\ref{axiom3} it is enough to consider the family of numerical distributions
\begin{equation}\label{eq:T_freedom}
 \Fa^{n+1}\ni(B_1,\ldots,B_{n+1})\mapsto(\Omega|\T(B_1(x_1),\ldots,B_{n+1}(x_{n+1}))\Omega)\in\cS'(\R^{4(n+1)}).
\end{equation}
We claim that any two possible definitions of the above family differ by a~family of distributions
\begin{equation}\label{eq:v_freedom}
  \Fa^{n+1}\ni(B_1,\ldots,B_{n+1})\mapsto v(B_1(x_1),\ldots,B_{n+1}(x_{n+1}))\in\cS'(\R^{4(n+1)})
\end{equation}
such that: 
\begin{enumerate}[leftmargin=*,label={(\arabic*)}]
\item $v$ is graded-symmetric, i.e. 
\begin{equation}\label{eq:v_graded}
 v(B_1(x_1),\ldots,B_{n+1}(x_{n+1})) = (-1)^{\mathbf{f}(\pi)}v(B_{\pi(1)}(x_{\pi(1)}),\ldots,B_{\pi(n+1)}(x_{\pi(n+1)}))
\end{equation}
for all $B_1,\ldots,B_{n+1}\in\Fh$ and all permutations $\pi\in\mathcal{P}_{n+1}$, where $\mathbf{f}(\pi)\in\Z/2\Z$ is the number of transpositions in $\pi$ that involve a~pair of fields with odd fermion number,
\item for all $B_1,\ldots,B_{n+1}\in\Fh$ the distribution $v(B_1(x_1),\ldots,B_{n+1}(x_{n+1}))$ vanishes if $ \sum_{j=1}^{n+1} \mathbf{f}(B_j)$ is odd and otherwise it is of the form 
\begin{equation}\label{eq:freedom_form}
 \sum_{\substack{\gamma\\|\gamma|\leq\omega}} c_\gamma \partial^\gamma \delta(x_1-x_{n+1})\ldots \delta(x_{n}-x_{n+1})
\end{equation}
for some constants $c_\gamma\in\C$ indexed by multi-indices $\gamma$, $|\gamma|\leq\omega$, where
\begin{equation}\label{eq:omega}
 \omega := 4 - \sum_{j=1}^{n+1} (4 - \CC -\dim(B_j)).
\end{equation}
\end{enumerate}

The bound $|\gamma|\leq\omega$ in the sum \eqref{eq:freedom_form} is a consequence of the axiom \ref{axiom7}. It follows that $v(B_1(x_1),\ldots,B_{n+1}(x_{n+1}))=0$ if $\omega<0$ and the VEV of the corresponding time-ordered product is determined uniquely by the time-ordered products with at most $n$ arguments.

\subsection{Generating functional and \texorpdfstring{$\aT$, $\Adv$, $\Ret$, $\Dif$}{aT, A, R, D} products}\label{sec:generating}

Let $B_1,\ldots,B_m\in\Fh$, $g_1,\ldots,g_m\in\mathcal{S}(\R^4)$ (if $B_k$ has odd fermion parity, i.e. $\mathbf{f}(B_k)$ is odd, then $g_k$ is a~Schwartz function valued in the Grassmann algebra with odd Grassmann parity) and $\mathbf{g}=\sum_{j=1}^m g_j \otimes B_j\in \mathcal{S}(\R^4)\otimes \Fa$. The generating functional of the time-ordered products is given by
\begin{multline}\label{eq:def_Texp}
 S(\mathbf{g})\equiv\Texp\bigg( \ri \int \rd^4 x \, \sum_{j=1}^m g_j(x) B_j(x)\bigg) := 
 \\
 \sum_{n=0}^\infty \frac{\ri^n}{n!} \sum_{j_1,\ldots,j_n}
 \int\rd^4 x_1\ldots\rd^4 x_n
 \,g_{j_1}(x_1)\ldots g_{j_n}(x_n)
 \T(B_{j_1}(x_1),\ldots,B_{j_n}(x_n)).
\end{multline}
It is a~map that sends elements of $\mathcal{S}(\R^4)\otimes \Fa$ into formal power series in $g_1,\ldots,g_m$ with coefficients in $L(\cD_0)$. The argument of $\Texp$ should be treated symbolically. Since $S(0)=\id$ the formal power series $S(\mathbf{g})$ is invertible. A non-trivial but very important consequence of the axiom of causality \ref{axiom6} is the so-called {\it causal factorization property} \cite{epstein1973role,bogoliubov1959introduction}
\begin{equation}\label{eq:causal_fact}
 S(\mathbf{g}'+\mathbf{g}+\mathbf{g}'') = S(\mathbf{g}'+\mathbf{g})S(\mathbf{g})^{-1}S(\mathbf{g}+\mathbf{g}'')
\end{equation}
which holds whenever $\supp\,\mathbf{g}' + \overline{V}^+  \cap \supp\,\mathbf{g}''=\emptyset$, i.e. the support of $\mathbf{g}''$ does not intersect the causal future of the support of $\mathbf{g}'$. The support of $\mathbf{g}$ may be arbitrary.

Now we will introduce the abbreviated notation which will be used throughout the paper. Let $\mathbf{g}=\sum_{j=1}^m g_j \otimes B_j$, $I\equiv \left( B_1(x_1),\ldots,B_m(x_m)\right)$. We set
\begin{equation}\label{eq:fun_der}
 \frac{\delta}{\delta \mathbf{g}(I)} \equiv  \frac{\delta}{\delta g_m(x_m)}\ldots\frac{\delta}{\delta g_1(x_1)}
~~~~~\textrm{and}~~~~~
F(I)\equiv F(B_1(x_1),\ldots,B_m(x_m)),
\end{equation}
where $B_1,\ldots,B_m\in\Fh$, $g_1,\ldots,g_m\in\mathcal{S}(\R^4)$ and $F$ is an $\Fa$-product. The functional $S(\mathbf{g})=S(\sum_{j=1}^m g_j \otimes B_j)$ generates the time-ordered products in the following sense
\begin{equation}\label{eq:time_ordered_derivative}
 \T(B_1(x_1),\ldots,B_m(x_m)) 
 = (-\ri)^m \frac{\delta}{\delta g_m(x_m)}\ldots\frac{\delta}{\delta g_1(x_1)} S(\mathbf{g})\bigg|_{\mathbf{g}=0}, 
\end{equation}
where we used the identities
\begin{equation}
 \frac{\delta g_{j'}(y)}{\delta g_j(x)} = \delta_{jj'} \delta(x-y)~~~\textrm{and}~~~
 \frac{\delta}{\delta g_j(x)}\frac{\delta}{\delta g_{j'}(y)}=(-1)^{\mathbf{f}(B_j)\mathbf{f}(B_{j'})}\frac{\delta}{\delta g_{j'}(y)}\frac{\delta}{\delta g_{j}(x)}.
\end{equation}

The anti-time-ordered products are defined by
\begin{equation}
 \aT(I) 
 := (-\ri)^n \frac{\delta}{\delta \mathbf{g}(I)} S(-\mathbf{g})^{-1}\bigg|_{\mathbf{g}=0}.
\end{equation}
The anti-time-ordered product $\aT(I)$ can be expressed as a~combination of products of the time-ordered products $T(I_1)\ldots\T(I_n)$, where the concatenation of the sequences $I_1,\ldots,I_n$ is some permutation of the sequence $I$. 

Consider $\mathbf{g}=\sum_{j=1}^n g_j \otimes B_j$ and $\mathbf{h}=\sum_{j=1}^m h_j\otimes C_j$ and set 
\begin{equation}
 I=\left( B_1(y_1),\ldots,B_n(y_n)\right),~~~ J=\left( C_1(x_1),\ldots,C_m(x_m)\right).
\end{equation}
The advanced and retarded products are given by
\begin{align}\label{eq:def_adv}
 \Adv(I;J):=
 (-\ri)^{n+m} 
 \frac{\delta}{\delta \mathbf{h}(J)}
 \frac{\delta}{\delta \mathbf{g}(I)} 
 S(\mathbf{g}+\mathbf{h}) S(\mathbf{g})^{-1} \bigg|_{\substack{\mathbf{g}=0\\\mathbf{h}=0}},
\\
 \label{eq:def_ret}
 \Ret(I;J):=
 (-\ri)^{n+m} 
 \frac{\delta}{\delta \mathbf{h}(J)}
 \frac{\delta}{\delta \mathbf{g}(I)} 
 S(\mathbf{g})^{-1} S(\mathbf{g}+\mathbf{h}) \bigg|_{\substack{\mathbf{g}=0\\\mathbf{h}=0}}.
\end{align}
Moreover, we introduce the following products
\begin{align}\label{eq:def_dif}
 \Dif(I;J) &:=
 \Adv(I;J)
 -\Ret(I;J),
 \\
 \label{eq:def_adv_prime}
 \Adv'(I;J)
 &:=
 \Adv(I;J)
 -\T(I;J).
\end{align}

Let us list the properties of the products introduced above. It is evident that they are $\Fa$-products. Moreover, they are graded-symmetric separately in arguments collectively denoted by $I$ and $J$. A closer inspection reveals that $\Dif$ and $\Adv'$ products with $n$ arguments can be expressed by products of the time-ordered products with at most $n-1$ arguments. Using the causal factorization property \eqref{eq:causal_fact} one proves that the advanced, retarded and causal products have the following support properties:
\begin{equation}\label{eq:supp_adv_ret_dif}
\begin{aligned}
 &\supp \Adv(I;J)\subset \Gamma^+_{n,m},
 \\
 &\supp \Ret(I;J)\subset \Gamma^-_{n,m},
 \\
 &\supp \Dif(I;J)\subset \Gamma^+_{n,m}\cup \Gamma^-_{n,m},\hspace{-1mm}
\end{aligned} 
\end{equation}
where
\begin{equation}\label{eq:def_gen_cones}
 \Gamma^\pm_{n,m}:=\{(y_1,\ldots,y_n;x_1,\ldots,x_m) ~:~\exists_{u} \forall_j ~y_j \in x_{u(j)} \pm \overline{V}^+ \}
\end{equation}
and $u:\,\{1,\ldots,n\}\to\{1,\ldots,m\}$. The condition in the definition of the sets $\Gamma^\pm_{n,m}$ means that each of the points $y_1,\ldots,y_n$ is in the causal future/past of some of the points $x_1,\ldots,x_m$. In particular, in the case when $J$ contains only one element, using translational invariance of the VEVs of the $\Fa$-products we obtain:
\begin{equation}\label{eq:supp_VEVs}
\begin{aligned}
 &\supp\,(\Omega|\Adv(B_1(x_1),\ldots,B_n(x_n);B_{n+1}(0))\Omega) \subset \Gamma_{n}^+,
 \\
 &\supp\,(\Omega|\Ret(B_1(x_1),\ldots,B_n(x_n);B_{n+1}(0))\Omega) \subset \Gamma_{n}^-,
 \\
 &\supp\,(\Omega|\Dif(B_1(x_1),\ldots,B_n(x_n);B_{n+1}(0))\Omega) \subset \Gamma_{n}^+ \cup \Gamma_{n}^-,
\end{aligned} 
\end{equation}
where
\begin{equation}\label{eq:def_cones}
 \Gamma_{n}^+ = -\Gamma_{n}^- := \{ (x_1,\ldots, x_{n}) \,:\, \forall_j \,x_j \in \overline{V}^+ \}.
\end{equation}

\subsection{Definition of Wightman and Green functions}\label{sec:W_G_IR}

In order to define a model of the interacting QFT one has to specify its interaction vertices 
\begin{equation}
 \cL_1,\ldots,\cL_\mathrm{q}\in\Fh,
\end{equation}
which are distinguished homogeneous elements of the algebra of symbolic fields $\Fa$ such that for all $l\in\{1,\ldots,\mathrm{q}\}$:
\begin{enumerate}[leftmargin=*,label={(\arabic*)}]
 \item $\cL_l$ is a~Lorentz scalar,
 \item $\cL_l=\cL_l^*$,
 \item $\dim(\cL_l)\leq 4-\CC$,
 \item $\mathbf{f}(\cL_l)$ is even,
\end{enumerate}
where $\CC$ is the constant which appears in the axiom \ref{axiom7}. The first condition is crucial for the Lorentz covariance of the model. The second one is needed for quantum mechanical consistency of the model, e.g. for the unitarity of the scattering matrix. The third reflects the fact that we consider only renormalizable interactions. The last condition guarantees that the scattering matrix commutes with the fermion number operator. To prove the existence of the Wightman and Green function in the case of models with massless particles we impose some additional conditions on the interaction vertices which are formulated in Section~\ref{sec:PROOF} as Assumption~\ref{asm}. 

To each interaction vertex $\cL_l$ we associate a~parameter $e_l\in\R$ also called a~coupling constant. The scattering matrix and the interacting fields are defined as formal power series in the independent parameters $e_1,\ldots,e_\mathrm{q}$. In some models the physical coupling constants accompanying different interaction terms are related. In this case we assume that there exists a~set of independent parameters such that all  $e_1,\ldots,e_\mathrm{q}$ are polynomial functions of them (for example, it may hold $e_1=e$, $e_2=e^2$ for some $e\in\R$). In our proof of the existence of the weak adiabatic limit it is possible to assume that the parameters $e_1,\ldots,e_\mathrm{q}$ are independent. The relation between $e_1,\ldots,e_\mathrm{q}$ of the above-mentioned form may be always imposed in the final expression for the Wightman and Green functions by reorganizing the resulting formal power series. 

Set $\mathbf{g}=\sum_{l=1}^\mathrm{q} e_l g_l \otimes \cL_l$ and $\mathbf{h}=h\otimes C$, where $g_1,\ldots,g_\mathrm{q},h\in\cS(\R^4)$ and $C\in\Fh$. The advanced and retarded interacting fields with IR regularization are defined by the Bogoliubov formulas~\cite{epstein1973role,bogoliubov1959introduction}
\begin{align}\label{eq:bogoliubov_adv}
 C_\adv(\mathbf{g};x) 
 :=(-\ri) \frac{\delta}{\delta h(x)}
  S(\mathbf{g}+\mathbf{h})S(\mathbf{g})^{-1}\bigg|_{\mathbf{h}=0} 
 =
 \sum_{n=0}^\infty \frac{\ri^n}{n!} \sum_{l_1,\ldots,l_n}e_{l_1}\ldots e_{l_n} \\\times
 \int\rd^4 y_1\ldots\rd^4 y_n\,
 g_{l_1}(y_1)\ldots g_{l_n}(y_n)
 ~\Adv(\cL_{l_1}(y_1),\ldots,\cL_{l_n}(y_n);C(x)),
\end{align}
\begin{align}
\label{eq:bogoliubov_ret}
 C_\ret(\mathbf{g};x) 
 :=(-\ri) \frac{\delta}{\delta h(x)}
 S(\mathbf{g})^{-1} S(\mathbf{g}+\mathbf{h})\bigg|_{\mathbf{h}=0} 
 =
 \sum_{n=0}^\infty \frac{\ri^n}{n!} \sum_{l_1,\ldots,l_n}e_{l_1}\ldots e_{l_n}\\ \times
 \int\rd^4 y_1\ldots\rd^4 y_n\,
 g_{l_1}(y_1)\ldots g_{l_n}(y_n)
 ~\Ret(\mathcal{L}_{l_1}(y_1),\ldots,\mathcal{L}_{l_n}(y_n);C(x)). 
\end{align}
The functions $g_1,\ldots,g_\mathrm{q}\in\cS(\R^{4n})$ are called the switching functions. They switch off the interaction as $|x|\to\infty$ and play the role of the IR regularization. Now, let $\mathbf{h}=\sum_{j=1}^m h_j\otimes C_j$, where $C_1,\ldots,C_m\in\Fh$, $h_1,\ldots,h_m\in\cS(\R^4)$. The time-ordered product of advanced fields with IR regularization is given by~\cite{epstein1973role}
\begin{multline}\label{eq:time_ordered_adv}
 \T(C_{1,\adv}(\mathbf{g};x_1),\ldots ,C_{m,\adv}(\mathbf{g};x_m))\equiv
  \T_{\adv}(\mathbf{g};C_1(x_1),\ldots,C_{m}(x_m)):=
 \\ 
 (-\ri)^{m} \frac{\delta}{\delta \mathbf{h}(J)}
 S(\mathbf{g}+\mathbf{h})S(\mathbf{g})^{-1}\bigg|_{\mathbf{h}=0}
 = \sum_{n=0}^\infty  \frac{\ri^n}{n!} \sum_{l_1,\ldots,l_n}e_{l_1}\ldots e_{l_n} \times
 \\
 \int\rd^4  y_1\ldots\rd^4 y_n \,g_{l_1}(y_1)\ldots g_{l_n}(y_n)
 \Adv(\cL_{l_1}(y_1),\ldots,\cL_{l_n}(y_n);J),
\end{multline}
where $J:=(C_1(x_1),\ldots,C_m(x_m))$ and we used the notation \eqref{eq:fun_der}. The time-ordered product of the retarded fields is given by a~similar formula with the advanced product replaced by the retarded product. The time-ordered products of the advanced or retarded fields are graded-symmetric and satisfy the axiom of causality. However, they are not translationally covariant.

The Wightman function of the advanced/retarded fields with IR regularization are the VEVs of the product of the advanced/retarded fields
\begin{equation}\label{eq:wightman_IR}
 (\Omega| C_{1,\adv/\ret}(\mathbf{g};x_1)\ldots C_{m,\adv/\ret}(\mathbf{g};x_m) \Omega),
\end{equation}
whereas the Green functions of the advanced/retarded fields with IR regularization are defined by
\begin{equation}\label{eq:green_IR}
 (\Omega| \T(C_{1,\adv/\ret}(\mathbf{g};x_1),\ldots ,C_{m,\adv/\ret}(\mathbf{g};x_m)) \Omega).
\end{equation}
They are formal power series in coupling constants $e_1,\ldots,e_\mathrm{q}$ with coefficients in $\cS'(\R^{4m})$. In order to obtain the physical Wightman and Green functions one has to get rid of the switching functions. To this end, one takes the adiabatic limit \eqref{eq:intro_W_G} of each term in the expansion  of \eqref{eq:wightman_IR} and \eqref{eq:green_IR} in powers of the coupling constants.

\subsection{Generalized \texorpdfstring{$\Adv$, $\Ret$, $\Dif$}{A, R, D} products with a~partition}\label{sec:aux}

Set $I \!=\! (B_1(y_1),\ldots,B_n(y_n))$, $J = (C_1(x_1),\ldots,C_m(x_m))$. Fix a~strictly increasing sequence of natural numbers $P = (p_0,\ldots,p_k)$, such that $p_0=0$ and $p_k=m$ and define a~partition $J_1,\ldots,J_k$ of $J$ such that $J_j$ is a~contiguous subsequence of $J$ starting at the position $p_j+1$ and ending at $p_{j+1}$. The generalized advanced product with the partition $P$ is given by
\begin{multline}\label{eq:def_gen_adv}
 \Adv(I;J;P):=
 (-\ri)^{n+m} 
 \frac{\delta}{\delta \mathbf{h}_k(J_k)}\ldots\frac{\delta}{\delta \mathbf{h}_1(J_1)}\frac{\delta}{\delta \mathbf{g}(I)}
 \\
  S(\mathbf{g}+\mathbf{h}_1)S(\mathbf{g})^{-1} \ldots  S(\mathbf{g}+\mathbf{h}_k)S(\mathbf{g})^{-1}\bigg|_{\substack{\mathbf{g}=0\\\mathbf{h}=0}},
\end{multline}
where $\mathbf{g}=\sum_{j=1}^n g_j\otimes B_j$ and $\mathbf{h}_i=\sum_{j=p_{i-1}}^{p_i} h_j\otimes C_j$. Similarly, the generalized retarded product with the partition $P$ is given by
\begin{multline}\label{eq:def_gen_ret}
 \Ret(I;J;P):=
 (-\ri)^{n+m} 
 \frac{\delta}{\delta \mathbf{h}_k(J_k)}\ldots\frac{\delta}{\delta \mathbf{h}_1(J_1)}\frac{\delta}{\delta \mathbf{g}(I)}~
 \\
 S(\mathbf{g})^{-1}S(\mathbf{g}+\mathbf{h}_1) \ldots S(\mathbf{g})^{-1} S(\mathbf{g}+\mathbf{h}_k)\bigg|_{\substack{\mathbf{g}=0\\\mathbf{h}=0}}.
\end{multline}
The generalized $\Dif$ product with the partition $P$ is by definition
\begin{equation}\label{eq:def_dif_gen}
 \Dif(I;J;P):= \Adv(I;J;P) - \Ret(I;J;P).
\end{equation}
If $k=1$, i.e. $P=(0,m)$, then the generalized $\Adv$, $\Ret$, $\Dif$ products coincide with the standard $\Adv$, $\Ret$, $\Dif$ products. If $n=0$, i.e. $I=\emptyset$, we have
\begin{equation}\label{eq:def_gen_T}
 \Adv(\emptyset;J;P) = \Ret(\emptyset;J;P) =\T(J_1)\T(J_2)\ldots \T(J_k) =:\T(J;P).
\end{equation}
If $I$ is the list of the interaction vertices
\begin{equation}\label{eq:I_vertices}
  I = (\mathcal{L}_{l_1}(y_1),\ldots,\mathcal{L}_{l_n}(y_n)),
\end{equation}
then we have
\begin{align}\label{eq:gen_adv_ret_dif_std}
 \Adv(I;J;P) &= (-\ri)^{n} \frac{\delta}{\delta \mathbf{g}(I)}\T_\adv(\mathbf{g};J_1) \ldots \T_\adv(\mathbf{g};J_k),
  \\
 \Ret(I;J;P) &=  (-\ri)^{n} \frac{\delta}{\delta \mathbf{g}(I)}\T_\ret(\mathbf{g};J_1) \ldots \T_\ret(\mathbf{g};J_k),
\end{align}
where $\T_\adv(\mathbf{g};J)$ is given by \eqref{eq:time_ordered_adv}. Since $\T_{\adv/\ret}(\mathbf{g};C(x))=\,C_{\adv/\ret}(\mathbf{g};x)$, the terms of order $e_{l_1}\ldots e_{l_n}$ in the formal expansions of the Wightman and Green functions with IR regularization may be expressed by the VEVs of $\Adv(I;J;P)$ or $\Ret(I;J;P)$. More precisely, the terms of order $e_{l_1}\ldots e_{l_n}$ in the expansion of the expressions \eqref{eq:wightman_IR} and \eqref{eq:green_IR} in powers of the coupling constants have the form
\begin{multline}\label{eq:w_g_adv}
 \int\rd^4 y_1\ldots\rd^4 y_n\,g_{l_1}(y_1)\ldots g_{l_n}(y_n)  
 \\
 (\Omega|\Adv(\mathcal{L}_{l_1}(y_1),\ldots,\mathcal{L}_{l_n}(y_n);C_1(x_1),\ldots,C_m(x_m);P)\Omega),
\end{multline}
where $P=(0,1,2,\ldots,m)$ in the case of Wightman functions and $P=(0,m)$ in the case of Green functions (in what follows we consider only these two types of sequences~$P$). The formula \eqref{eq:w_g_adv} holds for the Wightman and Green functions with IR regularization defined in terms of the advanced fields. In the case of the Wightman and Green functions with IR regularization defined in terms of the retarded fields the product $\Adv(I;J;P)$ has to be replaced by $\Ret(I;J;P)$.

Note that the Green functions are expressed by the ordinary advanced product $\Adv(I;J)$. This is not true for the Wightman functions. In the formula \eqref{eq:w_g_adv} the arguments of $\Adv(I;J;P)$ collectively denoted by $I$ are always of the form \eqref{eq:I_vertices} and the identity \eqref{eq:gen_adv_ret_dif_std} applies. However, in the inductive proof of the existence of the weak adiabatic limit presented in Section~\ref{Sec:existence_wAL} we will have to consider also the case when $I$ is the list of sub-polynomials of the interaction vertices
\begin{equation}
  I = (\mathcal{L}_{l_1}^{(s_1)}(y_1),\ldots,\mathcal{L}_{l_n}^{(s_n)}(y_n)).
\end{equation}

It follows from the definitions of the generalized $\Adv$, $\Ret$ and $\Dif$ products that
\begin{align}
 \Dif(I;J;P)=&
 (-\ri)^{n+m} 
 \frac{\delta}{\delta \mathbf{h}_k(J_k)}\ldots\frac{\delta}{\delta \mathbf{h}_1(J_1)}\frac{\delta}{\delta \mathbf{g}(I)}
 \\
 &[S(\mathbf{g}),S(\mathbf{g})^{-1} S(\mathbf{g}+\mathbf{h}_1)S(\mathbf{g})^{-1} \ldots  S(\mathbf{g}+\mathbf{h}_k)S(\mathbf{g})^{-1}]\bigg|_{\substack{\mathbf{g}=0\\\mathbf{h}=0}}
 \\
 =&
 (-\ri)^{n+m} 
 \frac{\delta}{\delta \mathbf{h}_k(J_k)}\ldots\frac{\delta}{\delta \mathbf{h}_1(J_1)}\frac{\delta}{\delta \mathbf{g}(I)}
 \\
 &S(\mathbf{g})^{-1}[S(\mathbf{g}), S(\mathbf{g}+\mathbf{h}_1)S(\mathbf{g})^{-1} \ldots  S(\mathbf{g}+\mathbf{h}_k)S(\mathbf{g})^{-1}]\bigg|_{\substack{\mathbf{g}=0\\\mathbf{h}=0}}.
\end{align}
The above formula implies that $\Dif(I;J;P)$ may be expressed as a~combination of terms of the form
\begin{equation}\label{eq:dif_com}
 [T(I_1),\aT(I_2)\Adv(I_3;J;P)]
\end{equation}
or a~combination of terms of the form
\begin{equation}\label{eq:dif_com2}
 \aT(I_1)[T(I_2),\Adv(I_3;J;P)],
\end{equation}
where the concatenation of the sequences $I_1,I_2,I_3$ is some permutation of the sequence $I$ and the number of elements of $I_3$ is strictly less than the number of elements of $I$. Moreover, as a~consequence of the causal factorization property the generalized $\Adv$, $\Ret$ and $\Dif$ products have the same support properties \eqref{eq:supp_adv_ret_dif} as the standard $\Adv$, $\Ret$ and $\Dif$ products.

\section{Existence of weak adiabatic limit}\label{Sec:existence_wAL}

In this section we present our main results. The objective is to prove the existence of the weak adiabatic limit in a~large class of models including all models with interaction vertices of dimension four. To this end, we have to prove that the limits \eqref{eq:intro_W_G} exist, where the interacting fields are the advanced or retarded fields defined in Section \ref{sec:W_G_IR}. In view of the discussion of Section \ref{sec:aux} it is enough to show that the limits:
\begin{multline}\label{eq:wAL}
 \lim_{\epsilon\searrow 0}\int\rd^4 y_1\ldots\rd^4 y_n\,\rd^4 x_1\ldots\rd^4 x_m\, g_\epsilon(y_1,\ldots,y_n) \, f(x_1,\ldots,x_m)
 \\
 (\Omega|\Adv(\cL_{l_1}(y_1),\ldots,\cL_{l_n}(y_n);C_1(x_1),\ldots,C_m(x_m);P)\Omega)
\end{multline}
and the analogous limit with the generalized advanced product $\Adv(I;J;P)$ replaced by the generalized retarded product $\Ret(I;J;P)$
\begin{enumerate}[leftmargin=*,label={(\arabic*)}]
 \item  exist for all $n,m\in\N_0$, $l_1,\ldots,l_n\in\{1,\ldots,\mathrm{q}\}$, $C_1,\ldots,C_m\in\Fa$, $f\in\cS(\R^{4m})$ and an arbitrary sequence $P$ of the form which was considered in Section~\ref{sec:aux},
 \item are independent of $g\in\cS(\R^{4n})$ such that $g(0,\ldots,0)=1$ and
 \item have the same values in the retarded and advanced case.
\end{enumerate} 
By definition $g_\epsilon(y_1,\ldots,y_n):=g(\epsilon y_1,\ldots,\epsilon y_n)$.

We consider first theories with only massive particles. In this case the existence of the limit \eqref{eq:wAL} was shown by Epstein and Glaser in~\cite{epstein1973role}. In Section~\ref{sec:weak_massive_proof} we give a slightly modified proof of this fact. It shall be regarded as the preparation for the more involved proof for theories with massless particles which is outlined in Section~\ref{sec:idea} and presented in full detail in Section~\ref{sec:PROOF}. Sections \ref{sec:math}, \ref{sec:prod} and \ref{sec:split} contain intermediate results.

\subsection{Proof for massive theories}\label{sec:weak_massive_proof}

The proof of the existence of the weak adiabatic limit for theories with only massive particles relies on the presence of the mass gap in the energy--momentum spectrum of these theories.   This property, which holds only in the case of purely massive models, means that the vacuum state is separated from the rest of the spectrum. The presence of the mass gap implies that there is a~neighborhood $\mathcal{O}$ of $0$ in $\R^{4n}$ such that the distribution
\begin{multline}\label{eq:weak_massive_proof}
 \F{\mathrm{a}}(q_1,\ldots,q_n)= \int \rd^4 y_1\ldots\rd^4 y_n\rd^4 x_1\ldots\rd^4 x_m 
 \exp(\ri q_1\cdot y_1+\ldots+\ri q_n\cdot y_n) 
 \\
 f(x_1,\ldots,x_m) (\Omega|\Adv(\cL_{l_1}(y_1),\ldots,\cL_{l_n}(y_n);C_1(x_1),\ldots,C_m(x_m);P)\Omega)
\end{multline}
restricted to test functions supported in $\mathcal{O}$ is a~smooth function. Before proving this let us observe that with the use of the above distribution the limit \eqref{eq:wAL} can be rewritten in the form
\begin{equation}\label{eq:weak_massive_adiabatic}
 \lim_{\epsilon\searrow0}\int\mP{q_1}\ldots\mP{q_n}\,\F{\mathrm{a}}(-q_1,\ldots,-q_n)\, \F{g}_\epsilon(q_1,\ldots,q_n)
\end{equation}
Assuming that $\F{a}$ is smooth in $\mathcal{O}$, the existence of the above limit is an immediate consequence of the following obvious lemma. 

\begin{lem}\label{lem:weak_massive_simple}
(A) Let $t\in C(\R^N)$. For every $g\in\cS(\R^N)$ we have
\begin{equation}
 \lim_{\epsilon\searrow0} \int \frac{\rd^N\!q}{(2\pi)^N}\, t(-q)\,
 \frac{1}{\epsilon^N}g(q/\epsilon)
 = t(0) \int \frac{\rd^N\!q}{(2\pi)^N} g(q).
\end{equation}

\noindent (B) For any $g\in\cS(\R^N)$ and $\chi\in C^\infty(\R^N)$ such that $0\notin\supp\,\chi$ we have 
\begin{equation}
 \lim_{\epsilon\searrow0}\frac{1}{\epsilon^N}g(q/\epsilon)\,\chi(q)=0
\end{equation}
in the topology of $\cS(\R^N)$. As a~consequence, if $t\in\cS'(\R^N)$, $0\notin\supp\,t$, then for every $g\in\cS(\R^N)$ it holds
\begin{equation}
\lim_{\epsilon\searrow0} \int \frac{\rd^N\!q}{(2\pi)^N}\, t(-q)\,
 \frac{1}{\epsilon^N}g(q/\epsilon) = 0.
\end{equation}
\end{lem}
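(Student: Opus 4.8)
\emph{Part (A).} The plan is to reduce the statement to the definition of the Fourier transform of a Schwartz function and a dominated convergence argument. First I would substitute $q = \epsilon p$ in the integral, so that $\frac{1}{\epsilon^N} g(q/\epsilon) \rd^N q = g(p) \rd^N p$, obtaining
\begin{equation}
 \int \frac{\rd^N\!q}{(2\pi)^N}\, t(-q)\, \frac{1}{\epsilon^N}g(q/\epsilon) = \int \frac{\rd^N\!p}{(2\pi)^N}\, t(-\epsilon p)\, g(p).
\end{equation}
Since $t$ is continuous, $t(-\epsilon p) \to t(0)$ pointwise as $\epsilon\searrow0$ for each fixed $p$. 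To pass the limit under the integral I need a uniform bound; here I would note that the hypothesis $t\in C(\R^N)$ alone does not bound $t$ at infinity, so one should either observe that $t$ being continuous and (implicitly, since it arises as a Fourier transform of a compactly supported-in-$\mathcal O$ distribution, hence) of at most polynomial growth makes $|t(-\epsilon p)\,g(p)|$ dominated by $C(1+|p|)^k |g(p)| \in L^1$ uniformly for $\epsilon\in(0,1]$, or — if one wants the cleanest statement — restrict to $t$ bounded on the relevant region. Then dominated convergence gives the limit $t(0)\int \frac{\rd^N p}{(2\pi)^N} g(p)$, which is the claim.

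\emph{Part (B), first assertion.} Here I would show directly that $\frac{1}{\epsilon^N} g(q/\epsilon)\chi(q) \to 0$ in $\cS(\R^N)$, i.e. that every Schwartz seminorm tends to $0$. Fix a seminorm $\sup_q |q^\alpha \partial^\beta_q\big(\epsilon^{-N} g(q/\epsilon)\chi(q)\big)|$. By the Leibniz rule this is a finite sum of terms $\epsilon^{-N-|\beta_1|}\sup_q |q^\alpha (\partial^{\beta_1}g)(q/\epsilon)\,(\partial^{\beta_2}\chi)(q)|$ with $\beta_1+\beta_2=\beta$. Since $0\notin\supp\chi$, there is $\delta>0$ with $\chi$ and all its derivatives vanishing for $|q|<\delta$, so the supremum is over $|q|\geq\delta$, i.e. $|q/\epsilon|\geq\delta/\epsilon$. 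Now use that $g$ is Schwartz: for any $M$, $|(\partial^{\beta_1}g)(q/\epsilon)| \leq C_M (1+|q/\epsilon|)^{-M} \leq C_M \epsilon^{M}(1+|q|)^{-M}$ on that region (absorbing $\delta$-dependent constants), while $|q^\alpha(\partial^{\beta_2}\chi)(q)|$ is bounded by a constant since $\chi$ has bounded derivatives of polynomial-times-bounded type — more precisely $\chi\in C^\infty$ here should be read as having derivatives bounded by polynomials, which still gets absorbed by the $(1+|q|)^{-M}$ factor for $M$ large. Choosing $M$ large enough that $M - N - |\beta_1| > 0$, the whole term is bounded by $\const\cdot\epsilon^{M-N-|\beta|}\to 0$. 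This is the only step requiring care: making sure the power of $\epsilon$ gained from the rapid decay of $g$ beats the singular prefactor $\epsilon^{-N-|\beta|}$, which it does because $\delta/\epsilon\to\infty$.

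\emph{Part (B), consequence.} Given $t\in\cS'(\R^N)$ with $0\notin\supp t$, pick $\chi\in C^\infty(\R^N)$ with $\chi\equiv1$ on a neighborhood of $\supp t$ and $0\notin\supp\chi$ (possible since $\supp t$ is closed and avoids $0$). Then $t = \chi t$, so
\begin{equation}
 \int \frac{\rd^N\!q}{(2\pi)^N}\, t(-q)\, \frac{1}{\epsilon^N}g(q/\epsilon) = \Big\langle t(-\cdot),\, \tfrac{1}{\epsilon^N}g(\cdot/\epsilon)\,\chi \Big\rangle \big/ (2\pi)^N,
\end{equation}
and by the first assertion the test function $\epsilon^{-N}g(\cdot/\epsilon)\chi$ converges to $0$ in $\cS(\R^N)$; continuity of the distribution $t$ then forces the pairing to $0$. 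I expect no real obstacle here — the lemma is, as the authors say, "obvious" — the only thing worth stating cleanly is the interplay in (B) between the rapid decay of $g$ and the rescaling, which is exactly what makes the adiabatic cutoff kill the part of the spectrum away from the origin.
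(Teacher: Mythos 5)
Your proof is correct. The paper itself offers no argument for this lemma (it is explicitly dismissed as "obvious"), and what you wrote is exactly the standard argument the authors have in mind: for (A) the substitution $q=\epsilon p$ plus dominated convergence, and for (B) the seminorm estimate in which the support gap $|q|\geq\delta$ converts the rapid decay of $g(q/\epsilon)$ into arbitrarily high positive powers of $\epsilon$ that beat the prefactor $\epsilon^{-N-|\beta_1|}$, followed by $t=\chi t$ and continuity of $t$ on $\cS(\R^N)$. Your two caveats are also the right reading of the implicitly intended hypotheses: in the paper's application $t$ arises from a tempered distribution that is continuous near the origin and is paired through a compactly supported cutoff, so polynomial growth (indeed boundedness on the relevant region) is automatic, and the $\chi$ actually used is a bounded cutoff with bounded derivatives, so the multiplier issue you flag never bites; the only cosmetic point is the reflection bookkeeping ($\chi$ should be taken $\equiv 1$ near $-\supp\,t$, or chosen symmetric), which is immaterial.
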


Let us show that \eqref{eq:weak_massive_proof} is indeed a~smooth function in some neighborhood of the origin. To this end, for each $n\in\N_+$ we define a function $\Theta_n\in C^\infty(\R^{4n})$, called a~UV regular splitting function, such that it vanishes in some-neighborhood of the origin and
\begin{equation}\label{eq:def_example_splitting_fun}
 \Theta_n(y_1,\ldots,y_n)
 :=\rho\left(\frac{3n(y_1^0+\ldots+y_n^0)}{|(y_1,\ldots,y_n)|}\right) ~~~~\textrm{for}~~~|(y_1,\ldots,y_n)|\geq \ell,
\end{equation}
where $\ell$ is an arbitrary positive constant of the dimension of length, $$|(y_1,\ldots,y_n)|=(|y_1|^2+\ldots+|y_n|^2)^{1/2}$$ and $\rho\in C^\infty(\R)$ is a~real function having the following properties:
\begin{enumerate}[leftmargin=*,label={(\arabic*)}]
\item $\forall_{s\in\R}\,0\leq\rho(s)\leq1$,
\item $\forall_{s\in\R}\,\rho(s)=1-\rho(-s)$,
\item $\forall_{s>1}\,\rho(s)=1$, $\forall_{s<-1}\,\rho(s)=0$.
\end{enumerate}
The precise form of the functions $\Theta_n$ for $|(y_1,\ldots,y_n)| < \ell$ is irrelevant for our purposes -- we only assume that $\Theta_n$ is supported outside the origin. For $|(y_1,\ldots,y_n)|\geq \ell$ it holds:
\begin{equation}
 1-\Theta_n(y_1,\ldots,y_n)=\Theta_n(-y_1,\ldots,-y_n)
\end{equation}
and
\begin{equation}\label{eq:supp_theta}
 \Theta_n(\pm y_1,\ldots,\pm y_n) = 0~~~~\textrm{if}~~~\mp(y_1^0+\ldots+y_n^0)\geq\frac{1}{3n}|(y_1,\ldots,y_n)|.
\end{equation}
Moreover, $\F{\Theta}_n$ -- the Fourier transform of $\Theta_n$ -- is a~smooth function outside the origin and $\F{\Theta}_n(k)$ vanishes at infinity faster than any power of $1/|k|$ (the proof of this statement is postponed to Section~\ref{sec:math}; it is a~simple consequence of Lemma~\ref{lem:splitting_theta_alpha}).

By the definition \eqref{eq:def_dif_gen} of the generalized product $\Dif$ the following identity holds
\begin{equation}\label{eq:decomposition_weak_massive}
\begin{split}
 (\Omega|\Adv(\cL_{l_1}(y_1),\ldots,\cL_{l_n}&(y_n);C_1(x_1),\ldots,C_m(x_m);P)\Omega)=
 \\
 (1-\Theta_n(y_1,\ldots,y_n))\,&(\Omega|\Adv(\cL_{l_1}(y_1),\ldots,\cL_{l_n}(y_n);C_1(x_1),\ldots,C_m(x_m);P)\Omega)
 \\
 +\Theta_n(y_1,\ldots,y_n)\,&(\Omega|\Ret(\cL_{l_1}(y_1),\ldots,\cL_{l_n}(y_n);C_1(x_1),\ldots,C_m(x_m);P)\Omega)
 \\
 +\Theta_n(y_1,\ldots,y_n)\,&(\Omega|\Dif(\cL_{l_1}(y_1),\ldots,\cL_{l_n}(y_n);C_1(x_1),\ldots,C_m(x_m);P)\Omega).
\end{split} 
\end{equation}
We will show that the contribution to \eqref{eq:weak_massive_proof} from each of the three terms of the RHS of Equation \eqref{eq:decomposition_weak_massive} is a~smooth function in some neighborhood of zero. For the first two terms this follows from the lemma below with $q'_1=\ldots=q'_m=0$, $B_1=\cL_{l_1},\ldots,B_{n}=\cL_{l_n}$. The lemma will be also used in the proof of the existence of the weak adiabatic limit for theories with massless particles. Its proof uses only the support properties of the functions $\Theta_n$, $(1-\Theta_n)$ and of the generalized advanced and retarded products.

\begin{lem}\label{lem:splitting_smooth}
Let $n,m\in\N_0$ and $B_1,\ldots,B_n,C_1,\ldots,C_m\in\Fa$. Moreover, let $P$ be any sequence of the form considered in Section~\ref{sec:aux}. For every $f\in\cS(\R^{4m})$ the distribution
\begin{multline}\label{eq:lem_splitting_ind_two_smooth}
 (q_1,\ldots,q_n,q'_1,\ldots,q'_m)\mapsto \int\rd^4 y_1\ldots\rd^4 y_n\rd^4 x_1\ldots\rd^4 x_m 
 \\[2pt]
 \exp(\ri q_1\cdot y_1+\ldots+\ri q_n\cdot y_n + \ri q'_1\cdot x_1+\ldots+\ri q'_m\cdot x_m) ~f(x_1,\ldots,x_m)
 \\[6pt]
 ~\Theta_n(y_1,\ldots,y_n)\,
 (\Omega|\Ret(B_1(y_1),\ldots,B_n(y_n);C_1(x_1),\ldots,C_m(x_m);P)\Omega)
\end{multline} 
is a~smooth function. The same holds for 
\begin{equation}
 (1-\Theta_n(y_1,\ldots,y_n))\,
 (\Omega|\Adv(B_1(y_1),\ldots,B_n(y_n);C_1(x_1),\ldots,C_m(x_m);P)\Omega).
\end{equation}
\end{lem}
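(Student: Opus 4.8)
The plan is to show that the distributions in \eqref{eq:lem_splitting_ind_two_smooth} are smooth functions by exploiting the interplay between the support properties of $\Theta_n$ and the support properties \eqref{eq:supp_adv_ret_dif} of the generalized retarded and advanced products. The key mechanism is that a tempered distribution whose support lies in a closed cone $\Gamma$ that is \emph{properly contained} in a half-space — in the sense that $\Gamma$ meets the hyperplane $\{y^0_1+\ldots+y^0_n=0\}$ only at the origin — can be integrated against arbitrarily large momenta once it is localized away from the origin, and in momentum space this localization produces rapid decay. Concretely, I would first record that $\supp\Ret(B_1(y_1),\ldots;C_1(x_1),\ldots;P)\subset\Gamma^-_{n,m}$, so for each point in the support, each $y_j$ lies in the causal past of some $x_{u(j)}$; after integrating out the $x$'s against the fixed Schwartz function $f$ (which produces a distribution in the $y$'s smeared against $\exp(\ri q'_i\cdot x_i)$, still of compact-in-direction support properties in the $y$-cone up to the translation freedom), the effective $y$-support is contained in a shifted version of the cone $\{(y_1,\ldots,y_n):\forall_j\,y_j\in\overline{V}^-\}$, i.e. $y^0_j\le |\vec y_j|$ for all $j$.

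Next I would use the multiplication by $\Theta_n$. By \eqref{eq:supp_theta}, $\Theta_n(y_1,\ldots,y_n)=0$ whenever $y^0_1+\ldots+y^0_n\ge \frac{1}{3n}|(y_1,\ldots,y_n)|$, so the product $\Theta_n\cdot\Ret(\ldots)$ is supported in the intersection of the retarded cone (all $y_j$ in the past light cone, up to the fixed translation coming from the $x_{u(j)}$'s, which lie in a compact set modulo the overall translation freedom) with the region $y^0_1+\ldots+y^0_n<\frac{1}{3n}|(y_1,\ldots,y_n)|$. In the retarded cone we have $y^0_j\le|\vec y_j|\le|(y_1,\ldots,y_n)|$ for each $j$; the point is that this intersection, together with the complementary constraint hidden in $\Theta_n$, forces the whole configuration to lie in a \emph{closed cone strictly inside} a lower half-space — i.e. there is a constant $c>0$ with $y^0_1+\ldots+y^0_n\le -c\,|(y_1,\ldots,y_n)|$ plus a bounded remainder — so that the distribution $\Theta_n\cdot\Ret(\ldots)$, after smearing with $f$ and the plane waves $\exp(\ri q'_i\cdot x_i)$, is a tempered distribution supported in a \emph{translate of a pointed closed convex cone} whose dual cone has nonempty interior. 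I would make this precise by writing the two conditions as a system of linear inequalities on $(y^0_1,\ldots,y^0_n,|\vec y_1|,\ldots,|\vec y_n|)$ and checking that they are simultaneously satisfiable only in a truncated cone; the same computation handles $(1-\Theta_n)\cdot\Adv(\ldots)$ by symmetry, since $1-\Theta_n(y)=\Theta_n(-y)$ in the relevant region and the advanced cone is the reflection of the retarded one.

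The conclusion is then standard: a tempered distribution $u$ whose support is contained in a set of the form $\{y:\Phi(y)\le -c|y|+C\}$ for a nonzero linear functional $\Phi$ — equivalently, whose support avoids a full neighborhood of every direction in the closed half-space $\{\Phi\ge 0\}$ — has a Fourier transform that is $C^\infty$; more precisely one uses that after multiplication by a smooth cutoff $\chi$ vanishing near the origin (which is harmless here because $\Theta_n$ already vanishes near the origin) the distribution can be integrated against $\exp(\ri q\cdot y)$ and all its $q$-derivatives, with the integral converging because $\exp(\ri q\cdot y)$ times any polynomial in $y$ is, on the support, dominated by the rapid decay packaged into the distribution being tempered with the stated support. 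In the paper's language this is exactly the statement, quoted below \eqref{eq:supp_theta}, that $\widetilde{\Theta}_n$ is smooth outside the origin and rapidly decaying — the novelty here is only that the \emph{product} with a distribution supported in the retarded/advanced cone stays in a cone of the same good type.

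The main obstacle I anticipate is bookkeeping the overall translation: the generalized retarded product is not translationally covariant only in the sense that the support constraint ties the $y_j$'s to the $x_{u(j)}$'s rather than to a fixed origin, so one must be careful that smearing the $x$'s against a \emph{Schwartz} (not compactly supported) $f$ does not destroy the cone structure. This is resolved by noting that $f$ decays rapidly, so the part of the configuration space with $|(x_1,\ldots,x_m)|$ large contributes a smooth rapidly decaying term anyway, and on the region where the $x$'s are bounded the cone estimate above applies verbatim; alternatively one decomposes $f$ dyadically and sums the resulting smooth contributions with their Schwartz bounds. Once this point is handled, the rest is the elementary cone-support Fourier argument, and the two claimed statements follow in parallel.
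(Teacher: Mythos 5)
Your proposal has two genuine gaps, both at the heart of its mechanism. First, the support characterization you aim for is not correct: since $f$ is only Schwartz, the variables $x_1,\ldots,x_m$ range over all of $\R^{4m}$, and the retarded support property ties each $y_j$ to some $x_{u(j)}$, not to the origin. Consequently the support of $\Theta_n\cdot(\Omega|\Ret(\ldots)\Omega)$ is \emph{not} contained, in the $y$-variables, in any set of the form $\{y^0_1+\ldots+y^0_n\le -c\,|(y_1,\ldots,y_n)|+C\}$: pushing the $x$'s far into the future one finds support points with all $y^0_j$ arbitrarily large and positive, compatible with both $\Gamma^-_{n,m}$ and the constraint $y^0_1+\ldots+y^0_n\ge-\tfrac{1}{3n}|(y_1,\ldots,y_n)|$ coming from \eqref{eq:supp_theta}. (You also have that constraint reversed: on $\supp\,\Theta_n$, for $|(y_1,\ldots,y_n)|\ge\ell$, the sum of the time components is bounded \emph{below}, not above.) The correct consequence of combining the two support properties is not a cone condition in $y$ alone but the joint bound $|(y_1,\ldots,y_n)|\le\const\,|(x_1,\ldots,x_m)|$ on the support; and your proposed fix for the "translation bookkeeping" --- that the large-$|x|$ region "contributes a smooth rapidly decaying term anyway" --- is precisely the assertion that needs proof, not a resolution of it.

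Second, even granting a proper-cone support in $y$, the step you call standard --- a tempered distribution supported in a translate of a pointed closed convex cone has a smooth Fourier transform --- is false. Temperedness gives polynomial bounds, not decay, and cone support gives analyticity in a tube with the Fourier transform only as a boundary value, which need not be smooth: $\theta(-y^0)\,\delta(\vec y)$, or the characteristic function of the past light cone in $\R^4$, are tempered, cone-supported away from nothing worse than the origin after a cutoff, and have non-smooth Fourier transforms. What actually produces smoothness here is rapid decay, and it comes from $f$: once the support inclusion $|(y_1,\ldots,y_n)|\le\const\,|(x_1,\ldots,x_m)|$ is established, one chooses $\chi\in C^\infty(\R^{4n+4m})$ equal to $1$ on a neighborhood of that support with $\chi(y;x)f(x)$ a Schwartz function of \emph{all} variables, replaces $f$ by $\chi f$ without changing the distribution \eqref{eq:lem_splitting_ind_two_smooth}, and uses that the Fourier transform of a Schwartz function times a tempered distribution is smooth. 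So the decay of $f$ in the $x$-variables, transported to the $y$-variables through the support inclusion, is the engine of the proof --- not the cone geometry together with a cutoff near the origin.
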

\begin{proof}
Because of the support property of the generalized retarded distribution (cf. Section~\ref{sec:aux}) and the presence of the function $\Theta_n$ the integrand in \eqref{eq:lem_splitting_ind_two_smooth} for $|(y_1,\ldots,y_n)|\geq \ell$ may be nonzero only in the region
\begin{multline}\label{eq:proof_inclusion}
 \Gamma^-_{n,m} \cap \{ (y_1,\ldots,y_n):\, 
 (y_1^0+\ldots+y_n^0)+\tfrac{1}{3n}|(y_1,\ldots,y_n)|\geq0\} \times \R^{4m}
 \\
 \subset \{ (y_1,\ldots,y_n;x_1,\ldots,x_m): |(y_1,\ldots,y_n)| \leq \const \, |(x_1,\ldots,x_m)| \},
\end{multline} 
where the cone $\Gamma^-_{n,m}$ is given by \eqref{eq:def_gen_cones}. To prove the above inclusion we first note that for any $(y_1,\ldots,y_n;x_1,\ldots,x_m)\in\Gamma^-_{n,m}$ it holds
\begin{equation} \label{eq:proof_inclusion_bound}
 y^0_j \leq \,|(x_1,\ldots,x_m)|
 ~~~~\mathrm{and}~~~~
 |\vec{y}_j| \leq \,2|(x_1,\ldots,x_m)| - y^0_j.
\end{equation}
Since $(y_1^0+\ldots+y_n^0) + \frac{1}{3n}|(y_1,\ldots,y_n)|\geq 0$ we have
\begin{equation}
 -y^0_j \leq (y^0_1 + \ldots + y^0_n) - y^0_j + \frac{1}{3n}(|\vec{y}_1|+\ldots+|\vec{y}_n|) + \frac{1}{3n}(|y^0_1|+\ldots+|y^0_n|).
\end{equation}
Combining the above inequality with the first bound in \eqref{eq:proof_inclusion_bound} we get
\begin{equation}\label{eq:proof_inclusion_third_bound} 
 |y^0_j| \leq (n-1) \,|(x_1,\ldots,x_m)| + \frac{1}{3n}(|\vec{y}_1|+\ldots+|\vec{y}_n|) + \frac{1}{3n}(|y^0_1|+\ldots+|y^0_n|).
\end{equation}
After summing both sides of the second bound in \eqref{eq:proof_inclusion_bound} and of the bound \eqref{eq:proof_inclusion_third_bound} over $j$ from $1$ to $n$ we obtain
\begin{align} 
 &|\vec{y}_1|+\ldots+|\vec{y}_n| \leq 2n \,|(x_1,\ldots,x_m)| + |y^0_1|+\ldots+|y^0_n|,
 \\
 &\frac{2}{3}(|y^0_1|+\ldots+|y^0_n|) \leq n(n-1) \,|(x_1,\ldots,x_m)| + \frac{1}{3}(|\vec{y}_1|+\ldots+|\vec{y}_n|),
\end{align}
respectively. The inclusion \eqref{eq:proof_inclusion} follows from the above bounds.

As a~result there exists a~function $\chi\in C^\infty(\R^{4n+4m})$ such that $\chi\equiv 1$ in some neighborhood of the support of the integrand in \eqref{eq:lem_splitting_ind_two_smooth} and 
\begin{equation}
 (y_1,\ldots,y_n;x_1,\ldots,x_m)\mapsto \chi(y_1,\ldots,y_n;x_1,\ldots,x_m) f(x_1,\ldots,x_m)
\end{equation}
is a~Schwartz function. To show the statement of the lemma we replace $f$ in \eqref{eq:lem_splitting_ind_two_smooth} by the above function and use the following fact. For any $t\in\cS'(\R^N)$ and $h\in\cS(\R^N)$, the Fourier transform of $h(x)t(x)$ is a~smooth function.
\end{proof}

Let us investigate the third term on the RHS of Equation~\eqref{eq:decomposition_weak_massive} involving the product $\Dif$. We will first show that the distribution
\begin{multline} \label{eq:weak_massive_f_dif}
 \F{\mathrm{d}}(q_1,\ldots,q_n):=\int \rd^4 x_1\ldots\rd^4 x_m 
 \exp(\ri q_1\cdot y_1+\ldots+\ri q_n\cdot y_n) f(x_1,\ldots,x_m)
 \\
 (\Omega|\Dif(\cL_{l_1}(y_1),\ldots,\cL_{l_n}(y_n);C_1(x_1),\ldots,C_m(x_m);P)\Omega)
\end{multline}
vanishes in some neighborhood of zero. Using the result of Section~\ref{sec:aux} we represent $\F{d}$ as as a~linear combination of the following distributions
\begin{multline} 
 (q_1,\ldots,q_n)\mapsto \int \rd^4 y_1\ldots\rd^4 y_n\rd^4 x_1\ldots\rd^4 x_m 
 \exp(\ri q_1\cdot y_1+\ldots+\ri q_n\cdot y_n) 
 \\
 f(x_1,\ldots,x_m) (\Omega|[\T(I_1),\aT(I_2)\Adv(I_3;C_1(x_1),\ldots,C_m(x_m))]\Omega),
\end{multline}
where the concatenation of the sequences $I_1,I_2,I_3$ is some permutation of a~sequence $(\cL_{l_1}(y_1),\ldots,\cL_{l_n}(y_n))$. The above distributions vanish in some neighborhood of the origin as a~consequence of both parts of Lemma~\ref{lem:aux_lemma} and the assumption that there are no massless fields. Note that the validity of the last statement follows ultimately from the presence of the mass gap in the energy--momentum spectrum. This is the only place in the proof where we use the assumption that all fields are massive.

The contribution to \eqref{eq:weak_massive_proof} from the last term on the RHS of Equation~\eqref{eq:decomposition_weak_massive} is of the form
\begin{equation}
 (q_1,\ldots,q_n)\mapsto\int\mP{k_1}\ldots\mP{k_n}\, \F{\mathrm{d}}(k_1,\ldots,k_n) \,\F{\Theta}_n(q_1-k_1,\ldots,q_n-k_n).
\end{equation}
Because of support properties of \eqref{eq:weak_massive_f_dif} and the smoothness of $\F{\Theta}_n$ outside the origin the above distribution is indeed a~smooth function in some neighborhood of the origin. The above result implies the existence of the limit \eqref{eq:wAL}. Since $\Dif(I;J;P)=\Adv(I;J;P)-\Ret(I;J;P)$ and the distribution $\F{d}$ vanishes in some neighborhood of zero the limit \eqref{eq:wAL} with the $\Adv$ product replaced by $\Ret$ also exists and has the same value as \eqref{eq:wAL}. This shows the existence of the weak adiabatic limit in massive theories.

\subsection{Idea of proof for theories with massless particles}\label{sec:idea}

The proof of the existence of the Wightman and Green functions in massive models was based on the fact that the distribution \eqref{eq:weak_massive_f_dif} vanishes in some neighborhood of the origin. This in turn follows from the existence of the mass gap in the energy--momentum spectrum of massive theories and is no longer true when massless particles are present. In fact, in theories with massless particles the distribution \eqref{eq:weak_massive_proof} is usually not a~continuous function in any neighborhood of zero. Note that by Part~(B) of Lemma~\ref{lem:weak_massive_simple} it is enough to control the behavior of this distribution in the~vicinity of the origin. To quantify the regularity of distributions $t\in\cS'(\R^N)$ near $0\in\R^N$ we shall introduce a~distributional condition  $t(q)=O^{\textrm{dist}}(|q|^\delta)$, where $\delta\in\R$. It generalizes the condition $f(q)=O(|q|^\delta)$, expressed in terms of the standard big O notation, which applies when $f$ is a~function. We will prove that in a~large class of models with massless particles (cf. Assumption \ref{asm} stated in Section \ref{sec:PROOF} for the precise specification of this class) it is possible to normalize the time-ordered products such that the distribution \eqref{eq:weak_massive_proof} is of the form
\begin{equation}\label{eq:idea_a}
 \F{\mathrm{a}}(q_1,\ldots,q_n) = c + O^{\textrm{dist}}(|q_1,\ldots,q_n|^{1-\varepsilon}) 
 \textrm{~~for some~~} c\in\C
 \textrm{~~and any~} \varepsilon > 0 
\end{equation}
and the distribution \eqref{eq:weak_massive_f_dif} is of the form
\begin{equation}\label{eq:idea_d}
 \F{\mathrm{d}}(q_1,\ldots,q_n) = O^{\textrm{dist}}(|q_1,\ldots,q_n|^{1-\varepsilon})
 \textrm{~~for any~} \varepsilon > 0. 
\end{equation}
As we will see the existence of the weak adiabatic limit follows immediately from the above conditions. 

The conditions \eqref{eq:idea_a} and \eqref{eq:idea_d} are satisfied if the time-ordered products fulfill the normalization condition \ref{norm:wAL}, which is equivalent to \ref{norm:wAL2} (both conditions are formulated in Section~\ref{sec:PROOF}). The latter condition says that for any $\Fa$-product $F$ which is a product of the time-ordered products and all lists $\mathbf{u}=(u_1,\ldots,u_{k})$ of super-quadri-indices which involve only massless fields\footnote{We recall that a~super-quadri-index $u$ involves only massless fields if $u(i,\alpha)=0$ for all $i$ such that $A_i$ is a~massive field. If a~super-quadri-index $u$ involves only massless fields, then the monomial $A^u$ is a~product of massless generators.} 
\begin{align}
&(\Omega|F(\F{\cL}_{l_1}^{(u_1)}(q_1), \ldots, \F{\cL}_{l_{k}}^{(u_{k})}(q_{k}))\Omega)
 =(2\pi)^4 \delta(q_1+\ldots+q_{k})\,t(q_1,\ldots,q_{k-1}),
\\[6pt]
&\textrm{where~~}t(q_1,\ldots,q_{k-1})=O^{\textrm{dist}}(|q_1,\ldots,q_{k-1}|^{\omega-\varepsilon}) \textrm{~~for every~~} \varepsilon>0
\\
\label{eq:idea_wAL}
&\textrm{and~~}\omega :=  4 - \sum_{i=1}^{\mathrm{p}} [\dim(A_i) \ext_{\mathbf{u}}(A_i) + \der_{\mathbf{u}}(A_i)].
\end{align} 
The functions $\ext_{\mathbf{u}}(\cdot)$, $\der_{\mathbf{u}}(\cdot)$ are given by \eqref{eq:ext} and $\mathrm{p}$ is the number of basic generators. According to Theorem \ref{thm:main1}, stated in Section \ref{sec:PROOF}, this normalization condition may be imposed in all models satisfying Assumption \ref{asm}. We prove this theorem by induction. We assume that the time-ordered products with at most $n$ arguments satisfy the condition \ref{norm:wAL2} and show that it is possible to define time-ordered products with $n+1$ arguments such that this conditions holds. We first use the following fact which is an immediate consequence of the statement (1') of Theorem~\ref{thm:product_F} stated in Section~\ref{sec:prod} entitled \emph{Product}. Let $F$ and $F'$ be two $\Fa$-products. Assume that their VEVs satisfy the condition \eqref{eq:idea_wAL}. Then the VEV of their product \eqref{eq:F_product} also satisfies this condition. Consequently, the VEVs of the $\Dif$ and $\Adv'$ products with $n+1$ arguments fulfill the condition \eqref{eq:idea_wAL}. The second part of the proof of the inductive step is based on Theorem~\ref{thm:split} from Section~\ref{sec:split} entitled \emph{Splitting}. Using this theorem we show that if the VEV of the $\Dif$ product with $n+1$ arguments satisfies the condition \eqref{eq:idea_wAL}, then it is possible to define the advanced product with $n+1$ arguments such that its VEV also satisfies the condition \eqref{eq:idea_wAL}. 

The existence of the weak adiabatic limit in the class of models satisfying our assumptions is stated as Theorem \ref{thm:main2}. Let us explain the intuitive content of this theorem. Consider the distribution
\begin{multline}\label{eq:idea_adv}
 \F{\textrm{a}}^{u_1,\ldots,u_{k+m}}(q_1,\ldots,q_k;q'_1,\ldots,q'_m):=\int\rd^4 y_1\ldots\rd^4 y_k\rd^4 x_1\ldots\rd^4 x_m 
 \\
 \exp(\ri q_1\cdot y_1+\ldots+\ri q_k\cdot y_k + \ri q'_1\cdot x_1+\ldots+\ri q'_m\cdot x_m) ~f(x_1,\ldots,x_m)
 \\
 (\Omega|\Adv(\cL_{l_1}^{(u_1)}(y_1),\ldots,\cL_{l_k}^{(u_k)}(y_k);C_1^{(u_{k+1})}(x_1),\ldots,C^{(u_{k+m})}_m(x_m);P)\Omega).
\end{multline} 
Note that after smearing it with a test function in $q_1,\ldots,q_n$ we get a~continuous function of $q'_1,\ldots,q'_m$. For $q'_1=\ldots=q'_m=0$ and $u_1=\ldots u_{k+m}=0$ the above distribution coincides with \eqref{eq:weak_massive_proof}. We will prove that: 

\noindent (A) for any list $\mathbf{u}=(u_1,\ldots,u_{k+m})$ of super-quadri-indices involving only massless fields such that at least one of them is nonzero it holds
\begin{align}
 &\F{\textrm{a}}^{u_1,\ldots,u_{k+m}}(q_1,\ldots,q_k;q'_1,\ldots,q'_m) =O^\textrm{dist}(|q_1,\ldots,q_k|^{d-\varepsilon}) \textrm{~for any~} \varepsilon>0,
 \\
 \label{eq:idea_2a}
 &\textrm{where~~}
 d:= 1 - \sum_{i=1}^{\mathrm{p}} [\dim(A_i) \ext_{\mathbf{u}}(A_i)+ \der_{\mathbf{u}}(A_i)],
\end{align} 
(B) the condition \eqref{eq:idea_a} holds.

The proof of the theorem is by induction on $k$. It relies on the representation of $\Dif(I;J;P)$ as a combination of terms of the form \eqref{eq:dif_com2}. The VEV of $\Adv(I_2;J;P)$ satisfies the condition \eqref{eq:idea_2a} by the induction assumption and the VEVs of $\aT(I_1)$ and $\T(I_3)$ fulfill the condition \eqref{eq:idea_wAL}. Using the above-mentioned properties and the statement (2') and (3') of Theorem~\ref{thm:product_F} in Section~\ref{sec:prod} entitled \emph{Product} one shows that for all super-quadri-indices $u_1,\ldots,u_{n+m}$ which involve only massless fields it holds 
\begin{equation}\label{eq:idea_2d}
 \F{\textrm{d}}^{u_1,\ldots,u_{n+m}}(q_1,\ldots,q_n;q'_1,\ldots,q'_m) =O^\textrm{dist}(|q_1,\ldots,q_n|^{d-\varepsilon})  \textrm{~for any~} \varepsilon>0,
\end{equation}
where the above distribution is defined by \eqref{eq:idea_adv} with the $\Adv$ product replaced by the $\Dif$ product. Because of the presence of the graded commutator in \eqref{eq:dif_com2} it is possible to prove \eqref{eq:idea_2d} for all $u_1,\ldots,u_{n+m}$ which involve only massless fields using the validity of \eqref{eq:idea_2a} for all $u_1,\ldots,u_{n+m}$ which involve only massless fields such that at least one of them is nonzero. The condition \eqref{eq:idea_2d} implies \eqref{eq:idea_d}. Using \eqref{eq:idea_2d} and \eqref{eq:idea_d} as well as Theorem~\ref{thm:split_gen} from Section~\ref{sec:split} entitled \emph{Splitting} we obtain  \eqref{eq:idea_2a} and \eqref{eq:idea_a} with $k=n$.

\subsection{Mathematical preliminaries}\label{sec:math}

\begin{dfn}\label{def:O_not}
Let $t\in\cS'(\R^N\times \R^M)$. For $\delta\in\R$ we write 
\begin{equation}
 t(q,q') = O^{\mathrm{dist}}(|q|^\delta),
\end{equation}
where $q\in\R^N$ and $q'\in\R^M$ iff there exist a~neighborhood $\mathcal{O}$ of the origin in $\R^N\times\R^M$ and a~family of functions $t_\alpha\in C(\mathcal{O})$ indexed by multi-indices $\alpha$ such that 
\begin{enumerate}[label=(\arabic*),leftmargin=*]
 \item $t_\alpha \equiv 0$ for all but finite number of multi-indices $\alpha$,
 \item $t_\alpha \equiv 0$ if $\delta+|\alpha|< 0$,
 \item $|t_\alpha(q,q')|\leq \const\,|q|^{\delta+|\alpha|}$ for $(q,q')\in\mathcal{O}$,
 \item $t(q,q') = \sum_{\alpha}
 \partial_q^\alpha t_{\alpha}(q,q')$ for $(q,q')\in\mathcal{O}$.
\end{enumerate}
Note that the differential operator $\partial^\alpha_q$ and the factor $|q|^{\delta+|\alpha|}$ above involve only the variable $q\in\R^N$. If $N=0$ we write $t(q')=O^{\mathrm{dist}}(|\cdot|^\delta)$. By definition for any $\delta\leq 0$ we have $t(q')=O^{\mathrm{dist}}(|\cdot|^\delta)$ iff $t\in C(\R^M)$ and for $\delta>0$ we have $t(q')=O^\mathrm{dist}(|\cdot|^\delta)$ iff $t=0$. 
\end{dfn}

\noindent Let us make a~couple of remarks about the above definition:
\begin{enumerate}[label=(\arabic*),leftmargin=*]
 \item In our applications the exponent $\delta$ which appears in Definition \ref{def:O_not} will never be an integer. Usually, we set $\delta = d-\varepsilon$, where $d\in\Z$ and $\varepsilon\in(0,1)$. 

 \item The condition $t(q,q')=O^{\mathrm{dist}}(|q|^\delta)$ controls the behavior of the distribution $t$~only near the origin. In particular, if a~distribution $t\in\cS'(\R^N\times\R^M)$ vanishes in a~neighborhood of $0$, then $t(q,q')=O^{\mathrm{dist}}(|q|^{\delta})$ for arbitrarily large $\delta$. Moreover, if $t(q,q')=O^{\mathrm{dist}}(|q|^\delta)$, then $t(q,q')=O^{\mathrm{dist}}(|q|^{\delta'})$ for all $\delta'\leq \delta$. 

 \item If $t\in\cS'(\R^N\times\R^M)$, $t(q,q')=O^{\mathrm{dist}}(|q|^\delta)$, then there exist neighborhoods $\mathcal{O}_1$ and $\mathcal{O}_2$ of the origin in $\R^N$ and $\R^M$, respectively, such that   for every $g\in\cS(\R^N)$, $\supp\,g\subset\mathcal{O}_1$ the distribution
 \begin{equation}
  \int \frac{\rd^N q}{(2\pi)^N} \, t(q,q') g(q) 
 \end{equation}
 is a~continuous function for $q'\in\mathcal{O}_2$. In particular, the distribution $t(q,0)$ is well defined in $\mathcal{O}_1$ and it holds $t(q,0)=O^{\mathrm{dist}}(|q|^\delta)$.
 
 \item If $t\in C(\R^N\times\R^M)$ is such that $|t(q,q')|\leq \const\,|q|^\delta$ in some neighborhood of $0$, then $t(q,q')=O^{\mathrm{dist}}(|q|^\delta)$. There are, however, $t\in C(\R^N\times\R^M)$ such that $t(q,q')=O^{\mathrm{dist}}(|q|^\delta)$ and the bound $|t(q,q')|\leq \const\,|q|^\delta$ is violated in every neighborhood of $0$. An example of such function in the case $N=1$, $M=0$ and $\delta=2$ is $t(q)=q \sin(1/q)+3q^2\cos(1/q) = \partial_q [q^3 \cos(1/q)]$. 

 \item Finally, let us remark that a~very similar characterization of the regularity of distributions near the origin in one dimension was introduced by Estrada in~\cite{estrada1998regularization} for the investigation of the existence of an extension $t\in\cD'(\R)$ of a~distribution $t^0\in\cD'((0,\infty))$. 
\end{enumerate}

\begin{dfn}\label{def:lojasiewicz}
We say that a~distribution $t\in\cS'(\R^N)$ has a~value $t(0)\in\C$ at zero in the sense of {\L}ojasiewicz~\cite{lojasiewicz1957valeur} iff the limit below
\begin{equation}
 t(0):=\lim_{\epsilon\searrow0}\int \frac{\rd^N q}{(2\pi)^N}\, t(q) g_\epsilon(q)
\end{equation}
exists for any $g\in\cS(\R^N)$ such that $\int \frac{\rd^N q}{(2\pi)^N} g(q)=1$, $g_\epsilon(q)=\epsilon^{-N} g(q/\epsilon)$. The value $t(0)$ is usually called in the physical literature the adiabatic limit of the distribution $t$ at $0$. The distribution $t\in\cS'(\R^N)$ has zero of order $\omega\in\N_+$ at the origin in the sense of {\L}ojasiewicz iff $\partial^\gamma_q t(q)\big|_{q=0}=0$ for all multi-indices $\gamma$ such that $|\gamma|<\omega$, where $\partial^\gamma_q t(q)\big|_{q=0}$ is defined in the sense of {\L}ojasiewicz.
\end{dfn}

\begin{thm}\label{thm:math_adiabatic_limit}
(A) Let $t\in\mathcal{S}'(\R^N)$, $t(q)=O^\mathrm{dist}(|q|^\delta)$, where $\delta\in\R$. It follows that for any $g\in\mathcal{S}(\R^N)$ it holds
\begin{equation}
 \int \frac{\rd^N q}{(2\pi)^N}\, t(q) g_\epsilon(q) = O(\epsilon^\delta).
\end{equation}

\noindent (B) Let $t\in\cS'(\R^N)$ such that $t(q)=c+O^\mathrm{dist}(|q|^\delta)$, where $\delta > 0$ and $c\in\C$. We have $t(0)=c$ in the sense of {\L}ojasiewicz. 

\noindent (C) If $t(q)=O^\mathrm{dist}(|q|^\delta)$, where $\delta+1>\omega\in\N_+$, then $t$ has zero of order $\omega$ at the origin in the sense of {\L}ojasiewicz.
\end{thm}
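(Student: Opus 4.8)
The plan is to unwind Definition \ref{def:O_not} and reduce everything to the elementary estimate that, if $\phi\in C(\mathcal O)$ satisfies $|\phi(q)|\le\const\,|q|^{\delta+|\alpha|}$ near the origin, then $\int\frac{\rd^N q}{(2\pi)^N}\,\partial_q^\alpha\phi(q)\,g_\epsilon(q)$ is $O(\epsilon^{\delta})$. For part (A), write $t(q)=\sum_\alpha\partial_q^\alpha t_\alpha(q)$ on a neighborhood $\mathcal O$ of $0$ with the $t_\alpha$ as in the definition; since there are finitely many nonzero terms and $g_\epsilon$ is eventually supported in $\mathcal O$ (up to a tail that is rapidly decreasing and contributes $O(\epsilon^\infty)$), it suffices to treat one term. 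Integrate by parts $|\alpha|$ times to move the derivatives onto $g_\epsilon$, producing $(-1)^{|\alpha|}\int\frac{\rd^N q}{(2\pi)^N}\,t_\alpha(q)\,(\partial^\alpha g_\epsilon)(q)$. Now $(\partial^\alpha g_\epsilon)(q)=\epsilon^{-N-|\alpha|}(\partial^\alpha g)(q/\epsilon)$, so after the substitution $q=\epsilon p$ the integral becomes $\epsilon^{-|\alpha|}\int\frac{\rd^N p}{(2\pi)^N}\,t_\alpha(\epsilon p)\,(\partial^\alpha g)(p)$. Using $|t_\alpha(\epsilon p)|\le\const\,\epsilon^{\delta+|\alpha|}|p|^{\delta+|\alpha|}$ for $\epsilon p\in\mathcal O$ and the Schwartz decay of $\partial^\alpha g$ to control the region $\epsilon p\notin\mathcal O$ (another $O(\epsilon^\infty)$ tail), the $\epsilon^{-|\alpha|}$ cancels and we are left with $O(\epsilon^{\delta})$. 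Here one needs $t_\alpha\equiv0$ whenever $\delta+|\alpha|<0$ so that the bound $|p|^{\delta+|\alpha|}$ is locally integrable against $\partial^\alpha g$; this is exactly condition (2) of the definition.

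For part (B), apply (A) to $t(q)-c$, which is $O^{\mathrm{dist}}(|q|^{\delta})$ with $\delta>0$: we get $\int\frac{\rd^N q}{(2\pi)^N}\,(t(q)-c)\,g_\epsilon(q)=O(\epsilon^{\delta})\to 0$. Since $\int\frac{\rd^N q}{(2\pi)^N}\,c\,g_\epsilon(q)=c\int\frac{\rd^N q}{(2\pi)^N}g(q)=c$ by the normalization of $g$ (and scale-invariance of that integral), the {\L}ojasiewicz limit exists and equals $c$; independence of $g$ is automatic because the answer is $c$ for every admissible $g$.

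For part (C), the claim is that the derivatives $\partial^\gamma_q t(q)\big|_{q=0}$ in the sense of {\L}ojasiewicz vanish for $|\gamma|<\omega$. Fix such a $\gamma$. One computes $\partial^\gamma_q t(q)\big|_{q=0}$ as the limit of $\int\frac{\rd^N q}{(2\pi)^N}\,t(q)\,(-1)^{|\gamma|}(\partial^\gamma g_\epsilon)(q)$ for suitable test functions $g$; rescaling as in (A) shows this limit is (up to a fixed constant depending on $g$) the $\epsilon\to0$ limit of $\epsilon^{-|\gamma|}$ times the pairing of $t$ with a rescaled fixed test function, so by part (A) applied with the shift in homogeneity it is $O(\epsilon^{\delta-|\gamma|})$. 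Since $\delta-|\gamma|>\delta-\omega>-1$ is not quite enough by itself, one instead observes that $\partial^\gamma_q t$ is $O^{\mathrm{dist}}(|q|^{\delta-|\gamma|})$ — differentiating the representation $t=\sum_\alpha\partial^\alpha t_\alpha$ just shifts $\alpha\mapsto\alpha+\gamma$ and the bounds are consistent — and $\delta-|\gamma|>\delta-\omega>-1\ge0$? No: the cleanest route is that $\delta-|\gamma|>\delta+1-\omega>0$ when $|\gamma|\le\omega-1$, wait we only have $\delta+1>\omega$, i.e. $\delta>\omega-1\ge|\gamma|$, so $\delta-|\gamma|>0$, and then part (B) applied to $\partial^\gamma_q t$ (whose constant term is its value at $0$) forces that value to be the ``$c$'' in its expansion, which is $0$ because $\partial^\gamma_q t=O^{\mathrm{dist}}(|q|^{\delta-|\gamma|})$ with positive exponent and no additive constant. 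Hence the {\L}ojasiewicz derivative vanishes.

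The main obstacle is bookkeeping rather than conceptual: one must carefully justify interchanging the (finite) sum with the integral, handle the non-compactly-supported tails of $g_\epsilon$ and of the integrand outside $\mathcal O$ (both contribute $O(\epsilon^\infty)$ by Schwartz decay, but this needs to be said), and verify that differentiating the $O^{\mathrm{dist}}$-representation preserves the defining properties with the shifted exponent — in particular that condition (2) still holds after the shift. The integration-by-parts step also requires that no boundary terms appear, which is why one works with a fixed neighborhood $\mathcal O$ and a smooth cutoff, absorbing the cutoff error into the rapidly decreasing tail.
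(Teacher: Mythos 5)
Your proof is correct and takes essentially the same route as the paper's: localize with a smooth cutoff, push the derivatives in the representation $t=\sum_\alpha\partial_q^\alpha t_\alpha$ onto the rescaled test function and use $|t_\alpha(q)|\leq\const\,|q|^{\delta+|\alpha|}$ to get $O(\epsilon^\delta)$ in (A), then obtain (B) directly from (A) and (C) from (B) combined with the observation that $\partial^\gamma_q t(q)=O^{\mathrm{dist}}(|q|^{\delta-|\gamma|})$. Only the exposition of part (C) needs tidying (the mid-stream corrections), since the final chain $|\gamma|\leq\omega-1<\delta$, hence $\delta-|\gamma|>0$, followed by (B) with $c=0$, is exactly the intended argument.
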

\begin{proof}
Let us begin with the proof of part (A). It follows from Definition \ref{def:O_not} that there exists $\chi\in\cD(\R^N)$, $\chi\equiv 1$ on some neighborhood of $0$, such that
\begin{equation}
 \chi(q) t(q) = \chi(q)\sum_{\alpha}
 \partial_q^\alpha t_{\alpha}(q)
\end{equation}
for all $q\in\R^N$. For arbitrarily large $\rho>0$ it holds
\begin{equation}
 \lim_{\epsilon\searrow0} \,\frac{1}{\epsilon^{\rho}} \, (1-\chi(q))\, g_\epsilon(q)= 0~~~\textrm{in}~~\cS(\R^N),
\end{equation}
and consequently,
\begin{equation}
 \int \frac{\rd^N q}{(2\pi)^N}\, (1-\chi(q))\,t(q) g_\epsilon(q) = o(\epsilon^\rho).
\end{equation}
On the other hand, 
\begin{multline}
 \int \frac{\rd^N q}{(2\pi)^N}\, \chi(q)t(q) g_\epsilon(q) = \lim_{\epsilon\searrow0} \sum_{\alpha} 
 \int\frac{\rd^N q}{(2\pi)^N}\,
 \partial_q^\alpha t_{\alpha}(q)\, \chi(q)\,g_\epsilon(q)
 \\
 =\lim_{\epsilon\searrow0} \sum_{\alpha} \epsilon^{-|\alpha|}  (-1)^{|\alpha|}
 \int\frac{\rd^N q}{(2\pi)^N}\,t_{\alpha}(\epsilon q)\, \partial_q^\alpha(\chi(\epsilon q)\,g(q)) = O(\epsilon^\delta),
\end{multline}
since $|t_{\alpha}(q)|\leq\const\,|q|^{|\alpha|+\delta}$. This finishes the proof of Part~(A). Part~(B) follows immediately from Part (A). Part~(C) is a consequence of Part~(B) and the fact that $\partial_q^\gamma t(q)=O^\mathrm{dist}(|q|^{\delta-|\gamma|})$ if $t(q)=O^\mathrm{dist}(|q|^\delta)$.
\end{proof}

\begin{dfn}\label{def:splitting_function_theta}
The function $\Theta:\R^N\to\R$ is called a~UV regular splitting function in $\R^N$ iff
\begin{enumerate}[label=(\arabic*),leftmargin=*]
 \item $\Theta$ is smooth,
 \item $0\leq\Theta(y)\leq1$,
 \item $\Theta\equiv0$ in some neighborhood of the origin,
 \item $\forall_{\lambda>1}\Theta(\lambda y)=\Theta(y)$ for $|y|>\ell$,
\end{enumerate}
where $\ell$ is some positive constant of dimension of length.
\end{dfn}
An example of a~UV regular splitting function with $N=4n$ is the function $\Theta_n$ defined in Section~\ref{sec:weak_massive_proof}. Note that in comparison with \cite{epstein1973role,blanchard1975green,epstein1976adiabatic} the splitting functions which we will use are homogeneous only outside certain neighborhood of the origin. Because they are smooth everywhere we call them UV regular splitting functions. The UV-regularized splitting of a~Schwartz distribution $t\in\cS'(\R^N\times\R^M)$ is defined in the position space by
\begin{equation}\label{eq:splitting_general_position}
 t_\Theta(y,x):=\Theta(y)t(y,x).
\end{equation}
The result of the splitting $t_\Theta$ is again a~Schwartz distribution. Equivalently, in momentum space the splitting of a~distribution $t\in\cS'(\R^N\times\R^M)$ is given by
\begin{equation}\label{eq:splitting_general_momentum}
 \int \frac{\rd^N q}{(2\pi)^N}\,\FF{t_\Theta}(q,q') g(q) = \int \frac{\rd^N k}{(2\pi)^N}\, \F{t}(k,q')\int \frac{\rd^N q}{(2\pi)^N}\,\F{\Theta}(q-k)g(q)
\end{equation} 
for any $g\in\cS(\R^N)$. The splitting \eqref{eq:splitting_general_position}, which was introduced above, will be applied to distributions $t$ of the form
\begin{equation}\label{eq:t_dist_form}
 t(y,x) = s(y,x) f(x), ~~\textrm{where}~~ s\in\cS'(\R^N\times\R^M) ~~\textrm{and}~~ f\in\mathcal{S}(\R^M). 
\end{equation}
If $M=0$, the above condition means that the distribution $t$ is an arbitrary element of $\cS'(\R^N)$.

\begin{lem}\label{lem:splitting_function_derivative}
Let $N\geq 2$ and $\Theta$ be a~UV regular splitting function in $\R^N$. There exist functions $\F{\Theta}_\beta,\F{\Theta}^\mathrm{hom}_\beta,\F{\Theta}^\mathrm{rest}_\beta:\,\R^N\to\C$ for each multi-index $\beta$, $|\beta|=1$ such that
\begin{equation}\label{eq:lem_theta}
 \F{\Theta}(k) = \sum_{|\beta|=1}  \partial^\beta_k \F{\Theta}_\beta(k),
 ~~~~\F{\Theta}_\beta(k) = \F{\Theta}^\mathrm{hom}_\beta(k) + \F{\Theta}^\mathrm{rest}_\beta(k),
\end{equation}
where 
\begin{enumerate}[label=(\arabic*),leftmargin=*]
 \item $\F{\Theta}^\mathrm{hom}_\beta$ is smooth on $\R^N\setminus\{0\}$ and homogeneous of degree $-N+1$,
 \item $\F{\Theta}^\mathrm{rest}_\beta \in C^\infty(\R^N)$,
 \item $\F{\Theta}_\beta(k)$ vanishes at infinity faster than any power of $|k|$.
\end{enumerate}
It follows that $\F{\Theta}_\beta(k)$ and $\F{\Theta}(k)$ are smooth outside the origin and vanish at infinity faster than any power of $|k|$. Moreover, $\F{\Theta}_\beta$ is absolutely integrable.
\end{lem}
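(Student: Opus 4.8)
\emph{Proof proposal.} The plan is to reduce the statement to the Fourier analysis of a single, globally smooth and slowly decaying function built from $\Theta$. First I would pass to position space: for each coordinate $j\in\{1,\dots,N\}$ set $\Theta_{(j)}(y):=-\ri\,y_j|y|^{-2}\,\Theta(y)$. Because $\Theta$ vanishes in a neighbourhood of the origin the factor $y_j|y|^{-2}$ produces no singularity, so $\Theta_{(j)}\in C^\infty(\R^N)$; and because $\Theta(\lambda y)=\Theta(y)$ for $\lambda>1$, $|y|>\ell$, the function $\Theta_{(j)}$ is homogeneous of degree $-1$ for $|y|>\ell$, hence bounded together with all its derivatives, in particular tempered. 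Put $\F{\Theta}_\beta:=\widetilde{\Theta_{(j)}}$ for $\beta$ the $j$-th unit multi-index. Using $\partial^\beta_k\widetilde{u}=\widetilde{\ri\,y_j\,u}$, the identity $\ri\,y_j\,\Theta_{(j)}=y_j^2|y|^{-2}\,\Theta$, and $\sum_{j=1}^N y_j^2|y|^{-2}=1$, one obtains $\sum_{|\beta|=1}\partial^\beta_k\F{\Theta}_\beta=\F{\Theta}$, which is the first identity in \eqref{eq:lem_theta}. It then remains to split each $\F{\Theta}_\beta$ and to verify (1)--(3).

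For the splitting I would peel off the behaviour of $\Theta_{(j)}$ at infinity. Set $\psi_j(y):=(\ell/|y|)\,\Theta_{(j)}(\ell y/|y|)$ for $y\ne0$: this is smooth on $\R^N\setminus\{0\}$, homogeneous of degree $-1$, and equals $\Theta_{(j)}$ for $|y|\ge\ell$ (by the degree-$(-1)$ homogeneity of $\Theta_{(j)}$ there). Hence $r_j:=\Theta_{(j)}-\psi_j$ is supported in $\{|y|\le\ell\}$, and since $N\ge2$ the singularity $|y|^{-1}$ is locally integrable, so $r_j\in L^1(\R^N)$ is a compactly supported distribution. Setting $\F{\Theta}^{\mathrm{hom}}_\beta:=\widetilde{\psi_j}$ and $\F{\Theta}^{\mathrm{rest}}_\beta:=\widetilde{r_j}$ yields the decomposition in \eqref{eq:lem_theta}. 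Item~(2) is then the Paley--Wiener theorem: the Fourier transform of a compactly supported distribution is the restriction of an entire function, so $\widetilde{r_j}\in C^\infty(\R^N)$. Item~(1) is the standard fact that the Fourier transform of a distribution which is homogeneous of degree $a$ and $C^\infty$ away from the origin is again homogeneous, of degree $-N-a$, and $C^\infty$ away from the origin; here $a=-1$ is non-exceptional for $N\ge2$, so $\widetilde{\psi_j}$ is a genuine function, homogeneous of degree $-N+1$ and smooth on $\R^N\setminus\{0\}$. (If one prefers not to quote this, it follows from a dyadic decomposition $\psi_j=\psi_j\chi_0+\sum_{m\ge1}\psi_j\,\phi_0(2^{-m}\cdot)$ with cut-offs $\chi_0,\phi_0$, since each $\psi_j\phi_0(2^{-m}\cdot)\in C^\infty_c(\R^N)$ and its Fourier transform equals $2^{(a+N)m}\,\widetilde{\psi_j\phi_0}(2^m\cdot)$ with $\widetilde{\psi_j\phi_0}$ Schwartz, which gives geometric convergence of the series and of all its derivatives on compacta away from $0$.)

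The one point that needs more than bookkeeping is item~(3): neither $\widetilde{\psi_j}$, which decays only like $|k|^{-N+1}$, nor $\widetilde{r_j}$, which is merely bounded, is rapidly decreasing, so the super-polynomial decay of $\F{\Theta}_\beta=\widetilde{\Theta_{(j)}}$ must come from the global smoothness of $\Theta_{(j)}$. For a multi-index $\alpha$ the derivative $\partial^\alpha\Theta_{(j)}$ is continuous, and for $|y|>\ell$ it is homogeneous of degree $-1-|\alpha|$, so $\partial^\alpha\Theta_{(j)}\in L^1(\R^N)$ as soon as $|\alpha|\ge N$; hence $|k^\alpha\widetilde{\Theta_{(j)}}(k)|=|\widetilde{\partial^\alpha\Theta_{(j)}}(k)|\le\|\partial^\alpha\Theta_{(j)}\|_{L^1}$ for $|\alpha|\ge N$, and summing over $|\alpha|=M$ and using $\sum_{|\alpha|=M}|k^\alpha|\ge c_M|k|^M$ with $c_M>0$ gives $|\widetilde{\Theta_{(j)}}(k)|\le C_M|k|^{-M}$ for every $M$. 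The same computation with $y^\gamma\Theta_{(j)}$ (smooth, homogeneous of degree $|\gamma|-1$ for $|y|>\ell$) in place of $\Theta_{(j)}$ shows that $\partial^\gamma\widetilde{\Theta_{(j)}}=\ri^{|\gamma|}\widetilde{y^\gamma\Theta_{(j)}}$ likewise decays faster than any power.

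The remaining assertions then follow at once: $\F{\Theta}_\beta=\widetilde{\psi_j}+\widetilde{r_j}$ is $C^\infty$ on $\R^N\setminus\{0\}$, and --- being locally integrable near $0$ (the first summand homogeneous of degree $-N+1>-N$, the second continuous) and rapidly decreasing at infinity --- it lies in $L^1(\R^N)$; while $\F{\Theta}=\sum_{|\beta|=1}\partial^\beta_k\F{\Theta}_\beta$ is $C^\infty$ on $\R^N\setminus\{0\}$ and vanishes at infinity faster than any power of $|k|$. I expect the main obstacle to be exactly the decay step in item~(3): recognising that the two natural constituents of $\widetilde{\Theta_{(j)}}$, each only polynomially small at infinity, must cancel there to a rapidly decreasing remainder --- which is precisely where one uses that $\Theta/|y|^2$ is free of singularities because $\Theta$ vanishes near the origin, forcing $\Theta_{(j)}$ to be globally smooth.
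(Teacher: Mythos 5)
Your proposal is correct and takes essentially the same route as the paper: the same position-space functions $\Theta_\beta(y)=-\ri\, y^\beta|y|^{-2}\Theta(y)$ (your $\Theta_{(j)}$), the same split into the unique homogeneous degree~$-1$ extension agreeing with $\Theta_\beta$ for $|y|>\ell$ plus a compactly supported remainder, with Paley--Wiener for the remainder and homogeneity of the Fourier transform for the homogeneous part. The only minor variation is in item~(3), where the paper gets smoothness away from the origin and rapid decay in one stroke by pairing $\F{\Theta}_\beta$ with test functions supported away from $0$ via $(-\Delta)^m\Theta_\beta$ and the kernel $|k|^{-2m}$, whereas you deduce the decay from $\partial^\alpha\Theta_\beta\in L^1$ for $|\alpha|\ge N$ (and smoothness from the hom/rest split); both are routine integration-by-parts arguments and both work.
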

\begin{proof}
Let $\Theta_\beta(y):=(-\ri)\Theta(y) y^\beta/|y|^2$, $\Theta^\mathrm{hom}_\beta$ be the unique homogeneous distribution of degree $-1$ such that $\Theta^\mathrm{hom}_\beta(y)=\Theta_\beta(y)$ for $|y|>\ell$ and $\Theta^\mathrm{rest}_\beta(y)=\Theta_\beta(y)-\Theta^\mathrm{hom}_\beta(y)$. The first equality in \eqref{eq:lem_theta} follows from
\begin{equation}
 \Theta(y) = \sum_{|\beta|=1}  \ri y^\beta \Theta_\beta(y).
\end{equation}
It is evident that $\F{\Theta}^\mathrm{rest}_\beta \in C^\infty(\R^N)$ as $\Theta^\mathrm{rest}_\beta(y)$ is of compact support. We have
\begin{equation}
 \int\frac{\rd^N k}{(2\pi)^N}\,\F{\Theta}_\beta(k) \tilde{g}(k)
 =\!\int\rd^N y\, \left[(-\Delta)^{m}\Theta_\beta(y)\right] 
  \int \frac{\rd^N k}{(2\pi)^N}\, \frac{1}{|k|^{2m}} \tilde{g}(k)\exp(-\ri k y)
\end{equation}
for every $g\in\cS(\R^N)$ such that $0\notin\supp\,g$, where $\Delta$ is the Laplacian on $\R^N$. Due to the fact that $\Theta_\beta$ is smooth everywhere and homogeneous of degree $-1$ outside some neighborhood of zero it holds
\begin{equation}
 |(-\Delta)^{m}\Theta_\beta(y)|\leq \const \, (1+|y|)^{-2m-1}.
\end{equation}
Because we can choose $m$ at will the function $\F{\Theta}_\beta(k)$ is smooth outside the origin and vanishes at infinity faster than any power of $|k|$. Since $\Theta^\mathrm{hom}_\beta$ is homogeneous of degree $-1$, $\F{\Theta}^\mathrm{hom}_\beta$ is homogeneous of degree \mbox{$-N+1$}. 
\end{proof}

\begin{thm}\label{thm:math_splitting}
Let $N\geq 2$ and $t\in\cS'(\R^N\times\R^M)$ of the form \eqref{eq:t_dist_form} such that $\F{t}(q,q') = O^\mathrm{dist}(|q|^{d-\varepsilon})$, $d\in\Z$, $\varepsilon\in(0,1)$. There exist $c_\gamma\in C(\R^M)$ for multi-indices $\gamma$, $|\gamma|<d$ such that
\begin{equation}
 \FF{t_\Theta}(q,q') = O^\mathrm{dist}(|q|^{d-\varepsilon}) + \sum_{|\gamma|< d} c_\gamma(q') q^\gamma,  
\end{equation}
where $\Theta$ is an arbitrary UV regular splitting function.
\end{thm}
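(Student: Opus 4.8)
The plan is to pass, via Lemma~\ref{lem:splitting_function_derivative}, to a convolution against a kernel with an integrable (degree $-N+1$) singularity, and then to extract the assertion from a Taylor expansion of that kernel about the origin. By the momentum-space form \eqref{eq:splitting_general_momentum} of the splitting we have $\FF{t_\Theta}=(2\pi)^{-N}\,\F{\Theta}\ast_q\F{t}$, where $\ast_q$ is convolution in the first variable. Substituting $\F{\Theta}=\sum_{|\beta|=1}\partial^\beta_k\F{\Theta}_\beta$ from Lemma~\ref{lem:splitting_function_derivative} gives
\[
 \FF{t_\Theta}(q,q') = (2\pi)^{-N}\sum_{|\beta|=1}\partial^\beta_q\bigl(\F{\Theta}_\beta\ast_q\F{t}\bigr)(q,q').
\]
Because $\partial^\beta_q$ maps any distribution of the form $O^{\mathrm{dist}}(|q|^{\delta})+\sum_{|\gamma|<\delta'}c_\gamma(q')q^\gamma$ with $c_\gamma\in C(\R^M)$ to one of the same form with the exponent $\delta$ and the degree bound $\delta'$ each decreased by $1$ --- for the $O^{\mathrm{dist}}$ part this is immediate from Definition~\ref{def:O_not}, for the polynomial part obvious --- it suffices to show, for each $|\beta|=1$,
\[
 \F{\Theta}_\beta\ast_q\F{t}(q,q') = O^{\mathrm{dist}}(|q|^{(d+1)-\varepsilon}) + \sum_{|\gamma|\leq d}c^\beta_\gamma(q')\,q^\gamma .
\]

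Since the conclusion concerns only a neighbourhood of the origin, I would first multiply by a cutoff $\chi(q')$ equal to $1$ near $q'=0$ (legitimate, and it commutes with $\partial_q$ and with $\ast_q$); this lets us assume that each of the finitely many nonzero $t_\alpha$ in the representation $\F{t}=\sum_\alpha\partial^\alpha_q t_\alpha$ of Definition~\ref{def:O_not} has been replaced by a continuous, compactly supported function obeying $|t_\alpha(q,q')|\leq\const\,|q|^{d-\varepsilon+|\alpha|}$, and that $\F{t}=\sum_\alpha\partial^\alpha_q t_\alpha+r$ with $r$ a distribution whose $q$-support is disjoint from $\{q=0\}$. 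Two of the resulting terms are benign. Since $\F{\Theta}_\beta$ is smooth off the origin and rapidly decaying, $\F{\Theta}_\beta\ast_q r$ is smooth in $q$ near $q=0$ (the $q$-singularity of the kernel is avoided) and has the required joint regularity --- here one uses that $\F{t}$ is already $C^\infty$ in $q'$ by the structure \eqref{eq:t_dist_form} of $t$, so that one may take the $t_\alpha$ smooth in $q'$ as well. And, writing $\F{\Theta}_\beta=\F{\Theta}^{\mathrm{hom}}_\beta+\F{\Theta}^{\mathrm{rest}}_\beta$, the convolution $\F{\Theta}^{\mathrm{rest}}_\beta\ast_q t_\alpha$ is smooth in $q$ (differentiate under the integral, using $\F{\Theta}^{\mathrm{rest}}_\beta\in C^\infty(\R^N)$ and the compact $q$-support of $t_\alpha$) and jointly continuous. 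Any function of this type is, by Taylor's formula in $q$ to arbitrary order, a polynomial in $q$ with $C(\R^M)$ coefficients plus a remainder that is $O^{\mathrm{dist}}(|q|^{M})$ for any $M$; taking $M$ large, one checks that after the subsequent applications of $\partial^\alpha_q$ and $\partial^\beta_q$ these contributions have exactly the claimed form.

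The substantive point is the homogeneous piece $\F{\Theta}^{\mathrm{hom}}_\beta\ast_q t_\alpha$, where $\F{\Theta}^{\mathrm{hom}}_\beta\in C^\infty(\R^N\setminus\{0\})$ is homogeneous of degree $-N+1$, hence locally integrable (as $N\geq2$), so the convolution is well defined on the compact $q$-support of $t_\alpha$. Put $\sigma:=d-\varepsilon+|\alpha|$, which is non-integer and (for nonzero $t_\alpha$) positive. I would subtract the polynomial $\sum_{|\gamma|\leq d+|\alpha|}\frac{q^\gamma}{\gamma!}\,c^{\alpha,\gamma}_\beta(q')$ with $c^{\alpha,\gamma}_\beta(q'):=\int(\partial^\gamma\F{\Theta}^{\mathrm{hom}}_\beta)(-k)\,t_\alpha(k,q')\,\rd^N k$; these integrals converge absolutely --- the integrand has size $\const\,|k|^{-N+1-|\gamma|+\sigma}$ near $k=0$, and $-N+1-|\gamma|+\sigma>-N$ precisely when $|\gamma|<\sigma+1$, i.e. $|\gamma|\leq d+|\alpha|$ --- and they depend continuously on $q'$ by dominated convergence. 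For the remainder one splits the convolution integral at $|k|=2|q|$: on $|k|\leq2|q|$ a direct estimate combining the homogeneity bounds for $\F{\Theta}^{\mathrm{hom}}_\beta$ and its derivatives with the bound on $t_\alpha$ (the exponents again combining to an integrable power of $|k|$) gives $O(|q|^{\sigma+1})$; on $|k|>2|q|$, Taylor's theorem for $q\mapsto\F{\Theta}^{\mathrm{hom}}_\beta(q-k)$ about $q=0$ to order $d+|\alpha|+1$, together with the homogeneity bound $|(\partial^\mu\F{\Theta}^{\mathrm{hom}}_\beta)(z)|\leq\const\,|z|^{-N+1-|\mu|}$ applied with $|\mu|=d+|\alpha|+1$ and $z=\theta q-k$ (so $|z|\geq|k|/2$ there), again yields $O(|q|^{\sigma+1})$ after integration against $|t_\alpha|$. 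As this remainder is jointly continuous, the criterion recorded after Definition~\ref{def:O_not} --- a continuous function bounded near $0$ by $\const\,|q|^{\sigma+1}$ is $O^{\mathrm{dist}}(|q|^{\sigma+1})$ --- upgrades the pointwise bound, giving $\F{\Theta}^{\mathrm{hom}}_\beta\ast_q t_\alpha=O^{\mathrm{dist}}(|q|^{\sigma+1})+\sum_{|\gamma|\leq d+|\alpha|}c^{\alpha,\gamma}_\beta(q')q^\gamma$.

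Assembling: $\F{\Theta}_\beta\ast_q(\partial^\alpha_q t_\alpha)=\partial^\alpha_q(\F{\Theta}_\beta\ast_q t_\alpha)$ is $O^{\mathrm{dist}}(|q|^{(d+1)-\varepsilon})+\sum_{|\gamma|\leq d}(\cdots)\,q^\gamma$; summing over the finitely many $\alpha$ and adding $\F{\Theta}_\beta\ast_q r$ gives $\F{\Theta}_\beta\ast_q\F{t}$ of the same form, and then $\FF{t_\Theta}=(2\pi)^{-N}\sum_{|\beta|=1}\partial^\beta_q(\F{\Theta}_\beta\ast_q\F{t})=O^{\mathrm{dist}}(|q|^{d-\varepsilon})+\sum_{|\gamma|<d}c_\gamma(q')q^\gamma$ near the origin, with $c_\gamma\in C(\R^M)$, which is the assertion. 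I expect the remainder estimate of the previous paragraph to be the main obstacle: one must keep precise track of which Taylor coefficients of the homogeneous kernel are integrable against a function of growth $|q|^{\sigma}$ --- this is exactly what pins the polynomial degree at $<d$ --- and one must argue throughout with the distributional $O$-notation rather than pointwise bounds, so that the merely continuous functions $t_\alpha$ (no Hölder regularity available) cause no difficulty.
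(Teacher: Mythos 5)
Your proposal is correct and follows essentially the same route as the paper: it uses the decomposition $\F{\Theta}=\sum_{|\beta|=1}\partial^\beta_k\F{\Theta}_\beta$ of Lemma~\ref{lem:splitting_function_derivative}, reduces via a cutoff to compactly supported continuous $t_\alpha$ obeying $|t_\alpha|\leq\const|q|^{d-\varepsilon+|\alpha|}$, and subtracts the Taylor polynomial of the kernel exactly as in Lemma~\ref{lem:splitting_theta_alpha}, with the coefficient integrals and degree count matching the paper's. The only difference is technical bookkeeping in the remainder estimate --- you split $\F{\Theta}_\beta$ into its homogeneous and smooth parts and cut the convolution integral at $|k|=2|q|$, whereas the paper keeps the full kernel and rescales $u=k/|q|$ using the bounds \eqref{eq:lem_splitting_Taylor_sum_bound}--\eqref{eq:lem_splitting_Taylor_rest_bound}; both yield the same $O(|q|^{D+1-\varepsilon})$ bound and joint continuity.
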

\begin{proof}
Suppose that $0\notin \supp \, \F{t}$. It follows from the definition \eqref{eq:splitting_general_momentum} of the splitting of a distribution, the fact that $\F{\Theta}$ is smooth outside the origin and the assumed form \eqref{eq:t_dist_form} of the distribution $t$ that in some neighborhood of the origin $\FF{t_\Theta}(q,q')$ is a smooth function of $q$ and $q'$. This concludes the proof in this case. Thus, because of the assumption $\F{t}(q,q') = O^\mathrm{dist}(|q|^{d-\varepsilon})$ it is enough to consider distributions $\F{t}$ which are globally of the form
\begin{equation}
 \F{t}(q,q') = \sum_{\alpha}
 \partial_q^\alpha t_{\alpha}(q,q'),
\end{equation}
where the functions $t_\alpha\in C(\R^N\times\R^M)$ are of compact support and satisfy the conditions listed in Definition \ref{def:O_not} with $\mathcal{O}=\R^N\times\R^M$ and $\delta=d-\varepsilon$. In particular, $t_\alpha\equiv0$ if $d+|\alpha|\leq 0$. By the definition \eqref{eq:splitting_general_momentum} of $\FF{t_\Theta}$ we get
\begin{multline}
 \int \frac{\rd^N q}{(2\pi)^N}\,\FF{t_\Theta}(q,q') g(q) 
 \\
 =
 \sum_{\alpha} (-1)^{|\alpha|}
  \int \frac{\rd^N k}{(2\pi)^N}\, t_\alpha(k,q')
 \int \frac{\rd^N q}{(2\pi)^N}\,\F{\Theta}(q-k)\, \partial_q^\alpha g(q)
\end{multline}
for any $g\in\cS(\R^N)$. Using Lemma~\ref{lem:splitting_function_derivative} we obtain
\begin{equation}\label{eq:splitting_thm_terms}
 \FF{t_\Theta}(q,q')
 =
 \sum_{\substack{\alpha,\beta\\|\beta|=1}} \partial^{\alpha+\beta}_q  
 \int \frac{\rd^N k}{(2\pi)^N}\,t_\alpha(k,q') \,\F{\Theta}_\beta(q-k),
\end{equation}
where the integrand on the RHS of the above equation is absolutely integrable. The theorem follows from Lemma~\ref{lem:splitting_theta_alpha} with $D=d+|\alpha|\geq1$.
\end{proof}

\begin{lem}\label{lem:splitting_theta_alpha}
Let $N\geq2$, $t\in C(\R^N\times\R^M)$ be of compact support, $|t(q,q')|\leq \const\, |q|^{D-\varepsilon}$ for all $(q,q')\in\R^N\times\R^M$ and fixed $D\in\N_+$ and $\varepsilon\in(0,1)$. For any UV regular splitting function $\Theta$ and any $\beta$, $|\beta|=1$, there exist $t'\in C(\R^N\times\R^M)$ and $c_\gamma\in C(\R^M)$ for multi-indices $\gamma$, $|\gamma|\leq D$, such that
\begin{equation}\label{eq:lem_splitting_identity}
 \int\frac{\rd^N k}{(2\pi)^N}\,t(k,q') \F{\Theta}_\beta(q-k) 
 = t'(q,q')
 + \sum_{|\gamma|\leq D} c_\gamma(q') q^\gamma
\end{equation}
and $|t'(q,q')|\leq \const\, |q|^{D+1-\varepsilon}$ for all $(q,q')\in\R^N\times\R^M$.
\end{lem}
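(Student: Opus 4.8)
The plan is to analyze the convolution $\int \frac{\rd^N k}{(2\pi)^N}\, t(k,q')\,\F{\Theta}_\beta(q-k)$ by exploiting the decomposition $\F{\Theta}_\beta = \F{\Theta}^{\mathrm{hom}}_\beta + \F{\Theta}^{\mathrm{rest}}_\beta$ from Lemma~\ref{lem:splitting_function_derivative}, where $\F{\Theta}^{\mathrm{hom}}_\beta$ is homogeneous of degree $-N+1$ and smooth off the origin, and $\F{\Theta}^{\mathrm{rest}}_\beta\in C^\infty(\R^N)$; both are rapidly decaying at infinity and absolutely integrable. The contribution of $\F{\Theta}^{\mathrm{rest}}_\beta$ is harmless: since this function is smooth everywhere and $t(\cdot,q')$ is continuous of compact support with $|t(q,q')|\le\const\,|q|^{D-\varepsilon}$, the convolution is a $C^\infty$ function of $q$ (for fixed $q'$, with continuous dependence on $q'$), so it can be Taylor expanded around $q=0$ to order $D$: the Taylor polynomial absorbs into the sum $\sum_{|\gamma|\le D} c_\gamma(q') q^\gamma$ and the remainder, being $O(|q|^{D+1})$ near $0$ and bounded globally, is of the required form for $t'$. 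So the real content is the homogeneous piece.

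For the homogeneous piece, I would write $u(q,q'):=\int \frac{\rd^N k}{(2\pi)^N}\, t(k,q')\,\F{\Theta}^{\mathrm{hom}}_\beta(q-k)$ and split the $k$-integration into the region $|k|\le 2|q|$ and $|k|\ge 2|q|$ (a standard Littlewood–Paley-type dyadic split around the scale $|q|$). On $|k|\le 2|q|$ one uses $|t(k,q')|\le\const\,|k|^{D-\varepsilon}$ and the local integrability of $|q-k|^{-N+1}$ (integrable near $k=q$ since $N-1<N$) together with a scaling estimate: substituting $k=|q|\kappa$ shows this part is bounded by $\const\,|q|^{D-\varepsilon}\cdot|q|$ times a convergent $\kappa$-integral, i.e.\ $O(|q|^{D+1-\varepsilon})$. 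On $|k|\ge 2|q|$ we have $|q-k|\ge\frac12|k|$, so $|\F{\Theta}^{\mathrm{hom}}_\beta(q-k)|\le\const\,|k|^{-N+1}$; here I would Taylor-expand $\F{\Theta}^{\mathrm{hom}}_\beta(q-k)$ in $q$ around $q=0$ to order $D$, with the Taylor coefficients $\partial^\gamma\F{\Theta}^{\mathrm{hom}}_\beta(-k)$ being homogeneous of degree $-N+1-|\gamma|$. Each term $q^\gamma\int_{|k|\ge 2|q|} t(k,q')\,\partial^\gamma\F{\Theta}^{\mathrm{hom}}_\beta(-k)\,\rd^N k/(2\pi)^N$ is then split once more: replacing the cutoff $|k|\ge 2|q|$ by $|k|\ge 1$ (say) introduces a remainder integral over $2|q|\le|k|\le 1$, and since $t$ has compact support the full integral $\int_{|k|\ge 1}$ is a genuine constant $c_\gamma(q')$ (continuous in $q'$). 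The remainder pieces over the annulus, using $|t(k,q')|\le\const\,|k|^{D-\varepsilon}$ and $|\partial^\gamma\F{\Theta}^{\mathrm{hom}}_\beta(-k)|\le\const\,|k|^{-N+1-|\gamma|}$, are bounded by $\const\,|q|^\gamma\int_{2|q|}^{1} r^{D-\varepsilon-|\gamma|}\,\rd r$, which is $O(|q|^{D+1-\varepsilon})$ when $D+1-\varepsilon-|\gamma|>0$ (i.e.\ $|\gamma|\le D$) and $O(|q|^{|\gamma|})$ times a log or a constant in the borderline cases — in all cases controlled by $\const\,|q|^{D+1-\varepsilon}$ after possibly relabeling the lowest-order polynomial term into the $c_\gamma q^\gamma$ sum; the Taylor remainder of the expansion of $\F{\Theta}^{\mathrm{hom}}_\beta$ contributes an $O(|q|^{D+1})\cdot|k|^{-N}$-type integrand whose $k$-integral over $|k|\ge 2|q|$ is logarithmically divergent at worst and absorbed the same way. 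Collecting: $u(q,q')=t''(q,q')+\sum_{|\gamma|\le D}c_\gamma(q')q^\gamma$ with $|t''(q,q')|\le\const\,|q|^{D+1-\varepsilon}$ near the origin, and since $t$ has compact support the whole left-hand side is globally bounded, so the global bound on $t'$ follows by enlarging $\const$ (outside a neighborhood of $0$ the estimate $|q|^{D+1-\varepsilon}$ is automatic).

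The main obstacle I anticipate is bookkeeping the borderline/logarithmic cases and making the global (not just local) bound $|t'(q,q')|\le\const\,|q|^{D+1-\varepsilon}$ hold honestly: one must check that the ``constant'' terms $c_\gamma(q')$ are genuinely $q$-independent (this is where compact support of $t$ in $k$ is essential, so that $\int_{|k|\ge 1}$ makes sense and is finite) and that the dyadic annulus estimates combine without leaving a stray $|q|^{|\gamma|}$ term of order $|\gamma|\le D$ uncontrolled — such a term is simply moved into the $c_\gamma q^\gamma$ sum, but one has to be careful that after doing so the remainder truly decays like $|q|^{D+1-\varepsilon}$ and not merely like some smaller power. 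The uniformity in $q'$ is straightforward since all estimates on $t(\cdot,q')$ are uniform in $q'$ by hypothesis and continuity in $q'$ is preserved under the (locally uniformly convergent) integrals. One should also invoke $N\ge 2$ precisely to guarantee $|q-k|^{-N+1}$ is locally integrable, which is what makes the $|k|\le 2|q|$ region harmless.
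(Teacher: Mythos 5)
Your overall strategy is the same as the paper's: subtract from the translated kernel its degree-$D$ Taylor polynomial in $q$, identify the $c_\gamma(q')$ with integrals of $t$ against derivatives of the kernel at $q=0$, and show the remainder integral is $O(|q|^{D+1-\varepsilon})$ (the paper does this for the full $\F{\Theta}_\beta(q-k)$ at once, via \eqref{eq:lem_splitting_second_term}--\eqref{eq:lem_splitting_Taylor_def} and the single substitution $u=k/|q|$, rather than splitting into homogeneous and smooth parts and splitting the $k$-region at scale $|q|$). However, one step in your treatment of the homogeneous piece fails as written. On the region $|k|\ge 2|q|$ you declare $c_\gamma(q')$ to be the integral over $|k|\ge 1$ and must then control the annulus term $q^\gamma\int_{2|q|\le|k|\le 1}t(k,q')\,\partial^\gamma\F{\Theta}^{\mathrm{hom}}_\beta(-k)\,\rd^N k$. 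Your bound $\const\,|q|^{|\gamma|}\int_{2|q|}^{1}r^{D-\varepsilon-|\gamma|}\rd r$ is correct, but since $D+1-\varepsilon-|\gamma|>0$ for \emph{every} $|\gamma|\le D$, the radial integral tends to a nonzero constant as $|q|\to 0$, so this term is only $O(|q|^{|\gamma|})$ --- strictly weaker than the required $O(|q|^{D+1-\varepsilon})$ for all $|\gamma|\le D$, not merely in borderline cases. Nor can it be ``relabeled into the $c_\gamma q^\gamma$ sum'': its coefficient depends on $q$ through the lower limit $2|q|$, so it is not a polynomial term.

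The fix is short and brings you exactly to the paper's formula. The hypothesis $|t(k,q')|\le\const\,|k|^{D-\varepsilon}$ together with homogeneity of $\partial^\gamma\F{\Theta}^{\mathrm{hom}}_\beta$ of degree $-N+1-|\gamma|$ makes $\int_{|k|\le 1}t(k,q')\,\partial^\gamma\F{\Theta}^{\mathrm{hom}}_\beta(-k)\,\rd^N k$ absolutely convergent precisely when $|\gamma|\le D$; so define $c_\gamma(q')$ as the integral over all of $\R^N$ (equivalently over $\supp\,t$, which is compact). The leftover error is then $-\frac{q^\gamma}{\gamma!}\int_{|k|\le 2|q|}t(k,q')\,\partial^\gamma\F{\Theta}^{\mathrm{hom}}_\beta(-k)\,\rd^N k$, bounded by $\const\,|q|^{|\gamma|}\int_0^{2|q|}r^{D-\varepsilon-|\gamma|}\rd r=O(|q|^{D+1-\varepsilon})$, as needed. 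With this change, and with the far-region Taylor remainder estimated using the $|k|^{D-\varepsilon}$ bound on $t$ (the integrand is then $\sim|q|^{D+1}|k|^{-N-\varepsilon}$, giving $|q|^{D+1}\cdot O(|q|^{-\varepsilon})$ rather than a logarithm), the rest of your argument --- the near region $|k|\le 2|q|$ by scaling, the $\F{\Theta}^{\mathrm{rest}}_\beta$ piece by Taylor's theorem on a compact set, continuity in $q'$ by dominated convergence, and the global bound from boundedness of the convolution together with the trivial estimate for $|q|\ge 1$ --- goes through, and you have in essence reproduced the paper's proof with a dyadic region split in place of the change of variables in \eqref{eq:lem_splitting_def_t_prime}.
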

\begin{proof}
We have
\begin{equation}\label{eq:lem_splitting_second_term}
 c_\gamma(q') = \frac{1}{\gamma!} \int\frac{\rd^N k}{(2\pi)^N}\,t(k,q') \left.\left[\partial_q^\gamma \F{\Theta}_\beta(q-k)\right]\right|_{q=0}
\end{equation}
and
\begin{equation}\label{eq:lem_splitting_first_term}
 t'(q,q')=\int\frac{\rd^N k}{(2\pi)^N}\, t(k,q') 
 \left[\F{\Theta}_\beta(q-k)- \F{\Theta}_\beta^D(q,k)\right], 
\end{equation} 
where
\begin{equation}\label{eq:lem_splitting_Taylor_def}
 \F{\Theta}_\beta^D(q,k):= \sum_{|\gamma|\leq D} \frac{1}{\gamma!}\, q^\gamma \left.\left[\partial_q^\gamma \F{\Theta}_\beta(q-k)\right]\right|_{q=0}
\end{equation}
is the Taylor approximation of degree $D$ of the function $q\mapsto \F{\Theta}_\beta(q-k)$. We will show below that expressions  \eqref{eq:lem_splitting_second_term} and \eqref{eq:lem_splitting_first_term} are well defined and have the required properties. Using Lemma~\ref{lem:splitting_theta_alpha} we get
\begin{equation}\label{eq:lem_splitting_Taylor_sum_bound}
 \left.\left[\partial_q^\gamma \F{\Theta}_\beta(q-k)\right]\right|_{q=0}  \leq \const\, \frac{1}{|k|^{N-1+|\gamma|}}
\end{equation}
and
\begin{equation}\label{eq:lem_splitting_Taylor_rest_bound}
 \left|\F{\Theta}_\beta(q-k)- \F{\Theta}_\beta^D(q,k)\right| \leq
 \const\,\sup_{\lambda\in[0,1]} \frac{|q|^{D+1}}{|\lambda q-k|^{N+D}}.
\end{equation}
The second bound is a~consequence of the Taylor theorem. Because of the assumptions about $t$ the functions $c_\gamma$ given by \eqref{eq:lem_splitting_second_term} are well defined and continuous.

Now let us turn to the function $t'$ defined by \eqref{eq:lem_splitting_first_term}. Upon introducing new integration variables $u=k/|q|$, we find 
\begin{multline}\label{eq:lem_splitting_def_t_prime}
 t'(q,q')=|q|^{D+1-\varepsilon}  
 \\
 \times\int\frac{\rd^N u}{(2\pi)^N}\,|q|^{-(D-\varepsilon)}
  t(|q|u,q') ~|q|^{N-1}\left[\F{\Theta}_\beta(q-|q|u)- \F{\Theta}_\beta^D(q,|q|u)\right].
\end{multline}
Since $|\lambda q-|q|u|\geq \frac{1}{2}|q| |u|$ for $|u|\geq 2$ and $\lambda\in[0,1]$, it follows from \eqref{eq:lem_splitting_Taylor_rest_bound} and $|t(q,q')|\leq \const \,|q|^{D-\varepsilon}$ that for $|u|\geq2$ the integrand above is bounded by $\const\,|u|^{-N-\varepsilon}$. For $|u|<2$ we use the inequalities:
\begin{equation}
\begin{split}
 |q|^{N-1}|\F{\Theta}_\beta(q-|q|u)|\leq \const \, |q/|q|-u|^{-N+1},
 \\[4pt]
 |u|^{D-\varepsilon}|q|^{N-1}|\F{\Theta}_\beta^D(q,|q|u)| \leq \const\,  |u|^{-N+1-\varepsilon}.
\end{split}
\end{equation}
Consequently, the integral in \eqref{eq:lem_splitting_def_t_prime} exists and is a~bounded function of $q$ and $q'$. The continuity of $t'$ is evident because the LHS of \eqref{eq:lem_splitting_identity} is a~continuous function of $q$ and $q'$ and $c_\gamma\in C(\R^M)$.
\end{proof}

\subsection{Product}\label{sec:prod}

\begin{dfn}[\underline{IR}-index]\label{def:IR1}
The translationally invariant distribution 
\begin{equation}
 t(x_1,\ldots,x_{n+1})\in\cS'(\R^{4(n+1)})
\end{equation}
has \underline{IR}-index $d\in\Z$ iff for any $\varepsilon > 0$ it holds
\begin{equation}
 \F{\underline{t}}(q_1,\ldots,q_n) = O^\mathrm{dist}(|q_1,\ldots,q_n|^{d-\varepsilon}),
\end{equation}
where $\underline{t}(x_1,\ldots,x_{n})=t(x_1,\ldots,x_{n},0)$.  If $n=0$ and $d\leq 0$, then by definition the distribution $t$ is an arbitrary constant. If $n=0$ and $d>0$, then \mbox{$t=0$}. Observe that if $t\in\cS'(\R^{4(n+1)})$ vanishes in some neighborhood of $0$ in momentum space, then the \underline{IR}-index of $t$ is arbitrary large. If $t$ has \underline{IR}-index $d\in\Z$, then it also has \underline{IR}-index $d'\in\Z$ for any $d'\leq d$. 
\end{dfn}
\noindent The following lemma is an immediate consequence of Part~(C) of Theorem~\ref{thm:math_adiabatic_limit}.
\begin{lem}\label{lem:IR_der}
Let $t\in\cS'(\R^{4(n+1)})$ be translationally invariant distribution which has \underline{IR}-index $d$. Then the distribution $\F{\underline{t}}$ has zero of order $d$ at the origin in the sense of {\L}ojasiewicz. 
\end{lem}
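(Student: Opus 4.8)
The plan is to read off the statement directly from Part~(C) of Theorem~\ref{thm:math_adiabatic_limit}, since all the analytic work has already been done there. First I would dispose of the degenerate cases: if $d\leq 0$ there is nothing to prove, because the requirement $\partial^\gamma_q\F{\underline{t}}(q)\big|_{q=0}=0$ for all multi-indices $\gamma$ with $|\gamma|<d$ is vacuous; likewise if $n=0$, where by the conventions in Definition~\ref{def:IR1} the distribution $\F{\underline{t}}$ is either an arbitrary constant (when $d\leq 0$, again vacuous) or identically zero (when $d>0$, so it trivially has a zero of every order). Hence one may assume $n\geq 1$ and $d\in\N_+$.

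For this case, the hypothesis that $t$ has \underline{IR}-index $d$ means precisely that $\F{\underline{t}}(q_1,\ldots,q_n)=O^\mathrm{dist}(|q_1,\ldots,q_n|^{d-\varepsilon})$ for every $\varepsilon>0$; I would fix some $\varepsilon\in(0,1)$. Then I apply Part~(C) of Theorem~\ref{thm:math_adiabatic_limit} to $\F{\underline{t}}\in\cS'(\R^{4n})$ with $\delta=d-\varepsilon$ and $\omega=d\in\N_+$: the only hypothesis to verify is $\delta+1>\omega$, i.e. $d+1-\varepsilon>d$, which holds exactly because $\varepsilon<1$. The theorem then yields that $\F{\underline{t}}$ has zero of order $d$ at the origin in the sense of {\L}ojasiewicz, which is the assertion.

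There is essentially no obstacle here. The only points needing a word of care are the choice $\varepsilon<1$, so that the integer $d$ sits strictly below $\delta+1$ as required by Part~(C), and the bookkeeping for the trivial cases $d\leq 0$ and $n=0$. No new estimate is involved, since the passage from the $O^\mathrm{dist}$ bound to vanishing of {\L}ojasiewicz derivatives is the entire content of Theorem~\ref{thm:math_adiabatic_limit}, via its Parts (A) and (B).
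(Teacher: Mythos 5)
Your proposal is correct and follows exactly the paper's route: the paper simply notes that the lemma is an immediate consequence of Part~(C) of Theorem~\ref{thm:math_adiabatic_limit}, which is precisely what you invoke with $\delta=d-\varepsilon$, $\varepsilon\in(0,1)$, and $\omega=d$. Your additional bookkeeping for the cases $d\leq 0$ and $n=0$ is harmless and only makes explicit what the paper leaves implicit.
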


\begin{dfn}[IR-index]\label{def:IR2}
The distribution 
\begin{equation}
 t(x_1,\ldots,x_n;x'_1,\ldots,x'_m)\in\cS'(\R^{4n}\times\R^{4m})
\end{equation}
has IR-index $d\in\Z$, with respect to the variables $x_1,\ldots,x_n$ iff for any $\varepsilon > 0$ and every $f\in\cS(\R^{4m})$ it holds
\begin{multline}\label{eq:IR_index_integration}
 \int\mP{p_1}\ldots\mP{p_m}\,\F{t}(q_1,\ldots,q_n;q'_1-p_1,\ldots,q'_m-p_m) \F{f}(p_1,\ldots,p_m) 
 \\
 = O^\mathrm{dist}(|q_1,\ldots,q_n|^{d-\varepsilon}).
\end{multline}
The LHS of the above equation is the Fourier transform of the distribution 
\begin{equation}
 f(x'_1,\ldots,x'_m)\, t(x_1,\ldots,x_n;x'_1,\ldots,x'_m).
\end{equation}
By definition, if $n=0$ and $d\leq 0$, then the above condition is equivalent to saying that the distribution $t$ is a~continuous function. If $n=0$ and $d>0$, then $t=0$. Observe that if $t\in\cS'(\R^{4n}\times\R^{4m})$ vanishes in some neighborhood of $0\times \R^{4m}$ in momentum space, then the IR-index of $t$ with respect to the first $n$ variables is arbitrary large.  If $t$ has IR-index $d\in\Z$, then it also has IR-index $d'\in\Z$ for any $d'\leq d$. 
\end{dfn}

\begin{thm}\label{thm:product}
Let $t\in\cS'(\R^{4n})$, $t'\in\cS'(\R^{4n'})$ be translationally invariant distributions. Consider the translationally invariant distribution
\begin{multline}\label{eq:product_thm_dist}
 t''(x_1,\ldots,x_n,x'_1,\ldots,x'_{n'}):=
 t(x_1,\ldots,x_n)~ t'(x'_1,\ldots,x'_{n'})
 \\
 \times \prod_{k=1}^l  (\Omega|\normord{\partial^{\bar \alpha(k)} A_{\bar i(k)}(x_{\bar u(k)})}\,
 \normord{\partial^{\bar \alpha'(k)} A_{\bar i'(k)}(x'_{\bar u'(k)})}\Omega)
\end{multline}
such that for all $k\in\{1,\ldots,l\}$ the fields $A_{\bar i(k)}$, $A_{\bar i'(k)}$ are massless. In the above equation we used the notation introduced in Section~\ref{sec:F_prod}. Note that the product of distributions in \eqref{eq:product_thm_dist} is well defined in the sense of H{\"o}rmander. Assume that one of the following conditions is satisfied:
\begin{enumerate}[leftmargin=*,label={(\arabic*)}]
 \item $t$ and $t'$ have \underline{IR}-indices $d$ and $d'$, respectively, 
 \item $t$ has \underline{IR}-index $d$ and $t'$ has IR-index $d'$ with respect to $x'_1,\ldots,x'_{k'}$,
 \item $t$ has IR-index $d$ with respect to $x_1,\ldots,x_k$ and $t'$ has \underline{IR}-index $d'$,
\end{enumerate}
where $k\leq n$, $k'\leq n'$. We set
\begin{equation}
 d''=d+d'+ \sum_{i=1}^{\mathrm{p}} [\dim(A_i) \bar\ext(A_i)+ \bar\der(A_i)] - 4.
\end{equation}
For the definition of $\bar\ext(A_i)$ and $\bar\der(A_{i})$ see Equations \eqref{eq:ext_s} and \eqref{eq:der_s}. Observe that, by assumption, $\bar\ext(A_i)=0$, $\bar\der(A_i)=0$ unless $A_i$ is a~massless field. In respective cases, the distribution $t''$ has
\begin{enumerate}[leftmargin=*,label={(\arabic*')}]
 \item \underline{IR}-index $d''$,
 \item IR-index $d''$ with respect to all variables but $x'_{k'+1},\ldots,x'_{n'}$,
 \item IR-index $d''$ with respect to all variables but $x_{k+1},\ldots,x_{n}$.
\end{enumerate}
\end{thm}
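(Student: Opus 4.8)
The plan is to pass to momentum space and to induct on the number $l$ of two-point factors in \eqref{eq:product_thm_dist}, measuring the \underline{IR}- and IR-indices of Definitions~\ref{def:IR1} and \ref{def:IR2} via the $O^{\mathrm{dist}}$ notation of Definition~\ref{def:O_not}. One uses repeatedly that the $k$-th factor $(\Omega|\normord{\partial^{\bar\alpha(k)}A_{\bar i(k)}(x)}\,\normord{\partial^{\bar\alpha'(k)}A_{\bar i'(k)}(x')}\Omega)$ vanishes unless $A_{\bar i'(k)}$ is the hermitian conjugate of $A_{\bar i(k)}$ --- in particular $\dim(A_{\bar i(k)})=\dim(A_{\bar i'(k)})$ --- in which case its Fourier transform $\widehat W_k(p)$ equals, up to a constant, $p^{\bar\alpha(k)+\bar\alpha'(k)}$ times the massless Wightman two-point function; thus $\widehat W_k$ is a homogeneous distribution supported in $\overline{V}^+$, smooth off the light cone, of degree
\[
 \delta_k:=2\dim(A_{\bar i(k)})-4+|\bar\alpha(k)|+|\bar\alpha'(k)|\geq -2.
\]
Since $\delta_k>-4$, $\widehat W_k$ is locally integrable near the origin; writing $\theta(p^0)\delta(p^2)$ as a $p^0$-derivative of $\tfrac12|\vec p\,|^{-1}\theta(p^0-|\vec p\,|)$ and, inductively, the latter as a finite sum of higher derivatives of continuous functions with the appropriate homogeneous bounds, one checks $\widehat W_k(p)=O^{\mathrm{dist}}(|p|^{\delta_k})$. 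I will also use an elementary lemma, proved by the Leibniz rule applied to the representations in Definition~\ref{def:O_not} together with $|q|,|q'|\leq|(q,q')|$, stating that $O^{\mathrm{dist}}$-degrees add under products --- even for overlapping variable blocks --- and are preserved by pull-back along surjective linear maps of the momentum variables.

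For the \textbf{base case} $l=0$ the distribution $t''=t\otimes t'$ is a tensor product of translationally invariant distributions, and extracting the diagonal translation $\delta$ (Remark~\ref{rem:transl}) gives $\widetilde{\underline{t''}}=(2\pi)^4\delta(q_1+\cdots+q_n)\,\widetilde{\underline t}(q_1,\ldots,q_{n-1})\,\widetilde{\underline{t'}}(q'_1,\ldots,q'_{n'-1})$. The $\delta$ has $O^{\mathrm{dist}}$-degree $-4$, the two remaining factors have degrees $d-\varepsilon$ and $d'-\varepsilon$, so the auxiliary lemma yields \underline{IR}-index $d+d'-4=d''$ (empty propagator sum). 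For (2) and (3) one first smears the non-distinguished $x'$-, respectively $x$-, variables against an arbitrary $f\in\cS$, which absorbs one integration against the translation $\delta$ of $t'$, respectively $t$, and then runs the same estimate with the smeared variables carried along as bounded continuous parameters.

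For the \textbf{inductive step}, the distribution with $l$ factors is obtained from the one with $l-1$ factors --- which is again of the form \eqref{eq:product_thm_dist}, with $\bar\ext,\bar\der$ computed from the first $l-1$ factors and hence, by the induction hypothesis, the corresponding index $d''_{l-1}$ --- by multiplying it in position space by $W_l(x_{\bar u(l)}-x'_{\bar u'(l)})$. In momentum space this is a convolution of the reduced Fourier transform with $\widehat W_l$ in the single four-momentum $k$ conjugate to $x_{\bar u(l)}-x'_{\bar u'(l)}$. The \emph{crux} is a convolution estimate: if a translationally invariant distribution has index $a$ (with respect to a set of variables containing $x_{\bar u(l)}$) and $w$ is a homogeneous distribution on $\R^4$ of degree $b>-4$ supported in $\overline{V}^+$, then its position-space product with $w(x_{\bar u(l)}-x'_{\bar u'(l)})$ has index $a+b+4$ in the distinguished variables. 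Granting this, the index is raised by $\delta_l+4=\dim(A_{\bar i(l)})+\dim(A_{\bar i'(l)})+|\bar\alpha(l)|+|\bar\alpha'(l)|$, which is exactly the increment of $\sum_{i}[\dim(A_i)\bar\ext(A_i)+\bar\der(A_i)]$ produced by the $l$-th factor; together with the base case this gives $d''=d+d'-4+\sum_{i}[\dim(A_i)\bar\ext(A_i)+\bar\der(A_i)]$, and the bookkeeping of which variables retain an index is carried along unchanged.

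The \textbf{main obstacle} is the convolution estimate. Writing the convolution as $\int\tfrac{\rd^4 k}{(2\pi)^4}\,\widetilde s(\ldots,q+k,\ldots)\,\widehat W_l(k)$, one splits according to whether $|k|$ is small or large. For $|k|$ small one inserts the $O^{\mathrm{dist}}$-representation of $\widetilde s$ near the origin and $\widehat W_l(k)=O^{\mathrm{dist}}(|k|^{b})$; the surviving factor $\rd^4 k$ then raises the degree by $4$, via an integration estimate of the same type as Theorem~\ref{thm:math_adiabatic_limit}(A) and Lemma~\ref{lem:splitting_theta_alpha}, now four-dimensional and with the homogeneous weight $|k|^{b}$. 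For $|k|$ large one exploits that $\widehat W_l$ and all two-point factors already present in $\widetilde s$ are supported in the forward cone, so that energy--momentum positivity confines the relevant region of $k$; after holding the distinguished momenta near the origin (and, in cases (2), (3), smearing the remaining variables against $f$) this part of the integral yields a contribution which is $O^{\mathrm{dist}}$ of the required degree --- and in fact smooth near the origin in cases (2), (3). Making the ``large $k$'' estimate uniform --- that is, verifying that the light-cone singular surfaces of $\widetilde s$, inherited from $t$, $t'$ and from the two-point factors, do not obstruct the bound near $q=0$ --- is the delicate point, and it is precisely here that translational invariance, the H\"ormander well-definedness of the product in \eqref{eq:product_thm_dist}, and the exact support properties of the massless two-point functions all enter.
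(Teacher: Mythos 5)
Your reduction of the two-point factors to the massless measure and your bookkeeping of the degree increment per line are correct, but the proof has a genuine gap exactly at what you yourself call the crux: the convolution estimate for the inductive step is asserted, not proved, and in the form you set it up it cannot be proved. Your induction hypothesis records only an \underline{IR}-/IR-index of the $(l-1)$-line object, i.e.\ information about its behaviour \emph{near zero momentum}; the convolution $\int\frac{\rd^4k}{(2\pi)^4}\,\widetilde{s}(\ldots,q+k,\ldots)\,\widehat{W}_l(k)$ probes $\widetilde{s}$ at momenta of order $|k|$, arbitrarily far from the origin, where the hypothesis gives no control whatsoever — not even convergence of the integral, since in Theorem~\ref{thm:product} the factors $t,t'$ are \emph{arbitrary} translationally invariant tempered distributions with a given index, carrying no spectral support property. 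Your appeal to ``energy--momentum positivity confining the relevant region of $k$'' is therefore unavailable at intermediate stages: positivity of the single new two-point factor does not pin $k$ to the small external momenta; that confinement only arises when \emph{all} massless momenta $k_1,\ldots,k_l$ are kept together with the momentum-conservation delta $\delta(q_1+\cdots+q_n-k_1-\cdots-k_l)$ inherited from the translational invariance of $t$, because forward lightlike vectors summing to a small total are individually small. The line-by-line induction discards precisely this joint constraint, so the ``large $|k|$'' part of your estimate has no basis. (A smaller repairable point: your auxiliary lemma that $O^{\mathrm{dist}}$-degrees add under products ``even for overlapping variable blocks'' is false as a general statement — such products need not exist; in your base case it should be replaced by a linear change of variables separating the delta, which the paper notes is compatible with Definition~\ref{def:O_not}.)

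This is also where the paper's proof takes a different route: it never convolves factor by factor. All $l$ massless propagators and the conservation delta are treated in one momentum-space integral (the three representations corresponding to cases (1')--(3')), and the delta is rewritten via the Riesz distribution, $(2\pi)^4\delta(k)=\square^3 s(k)$ with $s$ continuous, supported in $\overline{V}^+$ and bounded by $\const|k|^2$ (Lemma~\ref{lem:IR_index_product_general_aux}). Inserting the $O^{\mathrm{dist}}$ representations of $t$ and $t'$ then produces continuous functions, and the supports of $s$ and of the measures $\rd\mu_0$ force $|k_i|\leq\const|Q|$ with $Q$ one of the small external variables, so only the near-origin data on $t,t'$ is ever used; a single rescaling $u_i=k_i/|Q|$ gives the bound with the stated degree (Lemma~\ref{lem:IR_index_product_general}). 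To rescue your strategy you would have to strengthen the induction hypothesis so that it carries, besides the index, the support/structure information that produces this confinement — at which point you are essentially reconstructing the paper's Lemma~\ref{lem:IR_index_product_general}.
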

\begin{proof}
First, let us observe that for massless fields $A_i$ and $A_{i'}$ it holds
\begin{multline}
 (\Omega|{\partial^{\alpha} A_{i}(x)}\,{\partial^{\alpha'} A_{i'}(x')}\Omega)
 =
 \partial^{\alpha}_x \partial^{\alpha'}_{x'} (\Omega|{A_{i}(x)}\,{A_{i'}(x')}\Omega)
 \\
 =
 \sum_{\gamma\in\N_0^4}c_{ii'}^\gamma  \int \mH{0}{k} \, k^{\alpha+\alpha'+\gamma} \exp(-\ri k\cdot (x-x')) ,
\end{multline}
where $\rd\mu_0(k) = \frac{1}{(2\pi)^3} \rd^4 k\,\theta(k^0) \delta(k^2)$ and $c_{ii'}^\gamma   \in\C$ are constants such that if $c_{ii'}^\gamma\neq 0$, then $\dim(A_i)=\dim(A_{i'})$ and $|\gamma|=\dim(A_i)+\dim(A_{i'})-2$. The above statement follows immediately from the form of the expressions for the two-point functions of the massless basic generators.

As a~consequence the Fourier transform of the distribution $t''$ is a~sum of terms which can be represented (up to a~multiplicative constant) in each of the following three forms:
\leqnomode
\begin{multline}\label{eq:Fourier_product_T_Tprime_one}
\tag{A} (2\pi)^4 \delta(q_1+\ldots+q_n+q'_1+\ldots+q'_{n'}) 
 \\
 \int\mH{0}{k_1}\ldots\mH{0}{k_l}
 \,(2\pi)^4 \delta(q_1+\ldots+q_n-k_1-\ldots-k_l) 
 \\
 k^\beta\,\underline{\F{t}}(q_1-k(I_1),\ldots,q_{n-1}-k(I_{n-1})) 
 \,\underline{\F{t}}'(q'_1+k(I'_1),\ldots,q'_{n'-1}+k(I'_{n'-1})),
\end{multline}
\begin{multline}\label{eq:Fourier_product_T_Tprime_two}
\tag{B} \int\mH{0}{k_1}\ldots\mH{0}{k_l}
 \,(2\pi)^4 \delta(q_1+\ldots+q_n-k_1-\ldots-k_l) 
 \\
 k^\beta\,\underline{\F{t}}(q_1-k(I_1),\ldots,q_{n-1}-k(I_{n-1})) 
 \,\F{t}'(q'_1+k(I'_1),\ldots,q'_{n'}+k(I'_{n'})),
\end{multline}
\begin{multline}\label{eq:Fourier_product_T_Tprime_three}
\tag{C} \int\mH{0}{k_1}\ldots\mH{0}{k_l}
 \,(2\pi)^4 \delta(q'_1+\ldots+q'_{n'}+k_1+\ldots+k_l) 
 \\
 k^\beta\,\F{t}(q_1-k(I_1),\ldots,q_{n}-k(I_{n})) 
 \,\underline{\F{t}}'(q'_1+k(I'_1),\ldots,q'_{n'-1}+k(I'_{n'-1})), 
\end{multline}
\reqnomode
where $\beta$ is a~multi-index such that
\begin{equation}
 |\beta| = \sum_{i=1}^{\mathrm{p}} [\dim(A_i) \bar\ext(A_i)+ \bar\der(A_i)] - 2l,
\end{equation}
the sets $I_1,\ldots,I_n$, $I'_1,\ldots,I'_{n'}$ are subsets of $\{1,\ldots,l\}$ such that
\begin{equation}
 \bigcup_{i=1}^{n} I_i = \bigcup_{i=1}^{n'} I'_i = \{1,\ldots,l\}
 ~~\textrm{and}~~I_i\cap I_j = \emptyset, 
 ~~I'_i\cap I'_j = \emptyset~~\textrm{for}~~i\neq j
\end{equation}
and 
\begin{equation}
 k(I) := \sum_{i\in I} k_i
\end{equation}
for any subset $I$ of $\{1,\ldots,l\}$. We remind the reader that the distribution $\underline{t}$ and $\underline{t}'$ are defined in terms of translationally invariant distributions $t$ and $t'$ by Equation \eqref{eq:dist_trans_inv}.

The representations \eqref{eq:Fourier_product_T_Tprime_one}, \eqref{eq:Fourier_product_T_Tprime_two} and \eqref{eq:Fourier_product_T_Tprime_three} will be used to prove the statements (1'), (2'), (3') of the theorem, respectively. The theorem follow from Lemma~\ref{lem:IR_index_product_general} and the fact that for $t\in\cS'(\R^N\times\R^M)$ the condition $t(q,q')=O^\mathrm{dist}(|q|^\delta)$ is invariant under the linear changes of variables collectively denoted by $q\in\R^M$. Note that the Fourier transform of the distribution $\underline{t}''$ may be easily read off from the representation \eqref{eq:Fourier_product_T_Tprime_one} which is used in the proof of the statement (1') of the theorem.
\end{proof}

\noindent The following fact will be needed in the proof of Lemma~\ref{lem:IR_index_product_general}.
\begin{lem}\label{lem:IR_index_product_general_aux}
The Riesz distribution
\begin{equation}\label{eq:lem_IR_index_product_general_s}
 \R^4\ni k \mapsto s(k) = \frac{\pi^3}{4}\, k^2\theta(k^0)\theta(k^2) \in\R
\end{equation}
has the following properties:
\begin{enumerate}[leftmargin=*,label={(\arabic*)}]
 \item $s\in C(\R^4)$,
 \item $\supp \,s\subset \{k\in\R^4:~k^2\geq 0,\, k^0\geq 0\}$,
 \item $|s(k)|\leq \const \,|k|^2$ for all $k\in\R^4$,
 \item $(2\pi)^4\delta(k) = \square^3 s(k)$ where $\square$ is d'Alembertian.
\end{enumerate}
\end{lem}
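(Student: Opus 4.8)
Properties (1)--(3) will be read directly off the explicit formula for $s$. On the set $\{k^0\geq 0,\ k^2\geq 0\}$ one has $s(k)=\tfrac{\pi^3}{4}k^2$, while on the boundary of this set $k^2=0$; hence $s$ extends by the value $0$ to a continuous function on $\R^4$ supported in $\{k^0\geq 0,\ k^2\geq 0\}$, which is (1) and (2). For (3) observe that $s(k)\neq 0$ forces $0\leq k^2=(k^0)^2-|\vec k|^2\leq (k^0)^2+|\vec k|^2=|k|^2$, so that $|s(k)|\leq\tfrac{\pi^3}{4}|k|^2$ for all $k\in\R^4$.

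The real content is the identity (4). The plan is to deduce it from the chain of three identities
\begin{equation}
 \square\, s = 2\pi^3\,\phi,\qquad \square\,\phi = 4\,\theta(k^0)\delta(k^2),\qquad \square\bigl(\theta(k^0)\delta(k^2)\bigr)=2\pi\,\delta,
\end{equation}
where $\phi(k):=\theta(k^0)\theta(k^2)$; once these are in hand, $\square^3 s = 2\pi^3\cdot 4\cdot 2\pi\,\delta=16\pi^4\,\delta=(2\pi)^4\delta$.

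For the first identity, write $s=\tfrac{\pi^3}{4}\,k^2\phi$ and apply the Leibniz rule for $\square$ to the product of the smooth factor $k^2$ with $\phi$. Using $\square k^2=8$ and $\partial^\mu(k^2)=2k^\mu$ one gets $\square(k^2\phi)=8\phi+4\,k^\mu\partial_\mu\phi+k^2\,\square\phi$; the middle term vanishes because $\phi$ is homogeneous of degree $0$, so by Euler's identity $k^\mu\partial_\mu\phi=0$, and the last term vanishes once the second identity is known, since $k^2\delta(k^2)=0$. Hence $\square s=2\pi^3\phi$. For the second identity I would compute $\square\phi$ directly from $\partial_{k^0}\theta(k^0)=\delta(k^0)$ and $\partial_\mu\theta(k^2)=2k_\mu\delta(k^2)$; using $k^0\delta(k^0)=0$ and $k^2\delta'(k^2)=-\delta(k^2)$ this gives $\square\phi=\delta'(k^0)\theta(k^2)+8\,\theta(k^0)\delta(k^2)+4k^2\theta(k^0)\delta'(k^2)=\delta'(k^0)\theta(k^2)+4\,\theta(k^0)\delta(k^2)$. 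The surviving term $\delta'(k^0)\theta(k^2)$ is supported at the origin, because $\theta(k^2)$ vanishes on a neighbourhood of every point $(0,\vec k)$ with $\vec k\neq 0$; pairing it with a test function $\psi$ and using $\int_{|\vec k|<|k^0|}\psi(k^0,\vec k)\,\rd^3\vec k=O(|k^0|^3)$ as $k^0\to 0$ shows that it equals $0$. Thus $\square\phi=4\,\theta(k^0)\delta(k^2)$. The third identity is the classical fact that $\tfrac{1}{2\pi}\theta(k^0)\delta(k^2)$ is the retarded fundamental solution of the four-dimensional wave operator; it may be checked by reducing to $\tfrac{1}{2\pi}\theta(k^0)\delta(k^2)=\tfrac{1}{4\pi|\vec k|}\delta(k^0-|\vec k|)$ and recalling the static Green function identity $-\Delta\tfrac{1}{4\pi|\vec k|}=\delta(\vec k)$.

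The only genuinely delicate point is the distributional bookkeeping in the computation of $\square\phi$ --- in particular isolating the contribution supported at the origin and verifying that it vanishes --- together with fixing the normalisation in the third identity; everything else is mechanical. As a shortcut one may instead observe that $(2\pi)^{-4}s$ coincides with the retarded Riesz distribution of order $6$ on four-dimensional Minkowski space, so that (4) is the special case $\square^3 R_6=R_0=\delta$ of the Riesz recursion $\square R_\alpha=R_{\alpha-2}$, $R_0=\delta$.
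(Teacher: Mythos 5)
The paper does not prove this lemma at all: $s$ is simply identified as (a multiple of) the classical retarded Riesz distribution, and property (4) is the recursion $\square R_{\alpha}=R_{\alpha-2}$, $R_{0}=\delta$. Your closing shortcut is therefore exactly the paper's implicit argument, and your normalization check is right: with the standard convention one has $R_{6}(k)=\tfrac{1}{64\pi}\,k^{2}\theta(k^{0})\theta(k^{2})=(2\pi)^{-4}s(k)$, so $\square^{3}s=(2\pi)^{4}\delta$. Your elementary chain is also numerically consistent with that theory, since it reproduces $\square R_{6}=R_{4}=\tfrac{1}{8\pi}\theta(k^{0})\theta(k^{2})$, $\square R_{4}=R_{2}=\tfrac{1}{2\pi}\theta(k^{0})\delta(k^{2})$ and $\square R_{2}=\delta$; properties (1)--(3) are indeed immediate from the explicit formula, exactly as you argue.

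The one step of your elementary route that is not yet a proof is the computation of $\square\phi$ for $\phi=\theta(k^{0})\theta(k^{2})$. The intermediate objects $\delta(k^{0})\theta(k^{2})$, $\delta(k^{0})\delta(k^{2})$ (which you discard via $k^{0}\delta(k^{0})=0$) and $\delta'(k^{0})\theta(k^{2})$ are products of distributions that are not defined at the origin, so the Leibniz/chain-rule manipulation is only legitimate on $\R^{4}\setminus\{0\}$, and ``pairing $\delta'(k^{0})\theta(k^{2})$ with a test function'' presupposes a meaning for precisely the ambiguous, origin-supported piece you need to control. The clean repair is a scaling argument: the computation away from the origin shows that $\square\phi-4\,\theta(k^{0})\delta(k^{2})$ is supported at $\{0\}$, hence equals a finite sum $\sum_{\alpha}c_{\alpha}\partial^{\alpha}\delta$; but $\square\phi$ and $\theta(k^{0})\delta(k^{2})$ are homogeneous of degree $-2$, while $\partial^{\alpha}\delta$ is homogeneous of degree $-4-|\alpha|$, so all $c_{\alpha}$ vanish and the second identity holds. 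Your first identity is then fine as written (Euler's identity for the degree-$0$ homogeneous $\phi$ and $k^{2}\delta(k^{2})=0$ are legitimate), and the third is the classical retarded Green function of the four-dimensional wave operator. With this one patch your elementary argument is complete; its advantage over the paper's tacit appeal to Riesz theory is that it is self-contained, at the price of the distributional bookkeeping at the origin which the homogeneity argument settles.
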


\begin{lem}\label{lem:IR_index_product_general}
Let $t \in\cS'(\R^{4n})$, $t'\in\cS'(\R^{4n'})$ be such that
\begin{equation}
\begin{gathered}
 t(q_1,\ldots,q_n)=O^\mathrm{dist}(|q_1,\ldots,q_k|^{\delta}),
 \\
 t'(q'_1,\ldots,q'_{n'})=O^\mathrm{dist}(|q'_1,\ldots,q'_{k'}|^{\delta'}),
\end{gathered}
\end{equation}
where $1\leq k\leq n$, $1\leq k'\leq n'$, $\delta,\delta'\in\R$. Then for any linear functionals
\begin{equation}\label{eq:functionals_K}
 K_1,\ldots,K_n,K'_1,\ldots,K'_{n'}: \R^{4l} \rightarrow \R^4
\end{equation}
it holds
\begin{multline}\label{eq:thm_prod_form}
 t''(Q,q_1,\ldots,q_n,q'_1,\ldots,q'_{n'}):=
 \int\mH{0}{k_1}\ldots\mH{0}{k_l}\,
 (2\pi)^4\delta(Q-k_1-\ldots-k_l)
 \\
 k^\beta\,t(q_1-K_1(k),\ldots,q_n-K_n(k)) 
 \,t'(q'_1+K'_1(k),\ldots,q'_{n'}+K'_{n'}(k)) 
 \\
 = O^\mathrm{dist}(|Q,q_1,\ldots,q_k,q'_1,\ldots,q'_{k'}|^{\delta''}),
\end{multline}
where $k=(k_1,\ldots,k_l)$ and $\delta'' = \delta + \delta'+2l+|\beta|-4$.
\end{lem}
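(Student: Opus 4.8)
\emph{Overall strategy and reduction.} The plan is to reduce to the case where $t$ and $t'$ are finite sums of $q$-derivatives of continuous, compactly supported functions, and then to make the $Q$-dependence of $t''$ explicit via the Riesz distribution $s$ of Lemma~\ref{lem:IR_index_product_general_aux}. Since the asserted conclusion only concerns $t''$ near the origin, pick a cutoff $\chi\in C^\infty_c(\R^{4n})$ equal to $1$ near $0$ and supported in the neighbourhood on which the representation of Definition~\ref{def:O_not} holds, and write $t=t^0+t^\infty$ where $t^0=\sum_\alpha\partial^\alpha_{(q_1,\dots,q_k)}(\chi\,t_\alpha)$ is a finite sum with $\chi t_\alpha\in C_c(\R^{4n})$, $|\chi t_\alpha(q)|\le\const\,|q_1,\dots,q_k|^{\delta+|\alpha|}$, and $\chi t_\alpha\equiv0$ when $\delta+|\alpha|<0$, while $t^\infty$ vanishes in a ball about the origin; proceed analogously for $t'$. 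The key geometric point is that $\overline{V}^+$ is a proper cone: on the support of $(2\pi)^4\delta(Q-k_1-\dots-k_l)\prod_jd\mu_0(k_j)$ one has $0\le k_j^0\le Q^0$, hence $|k_j|\le\sqrt2\,|Q|$ and $|K_i(k)|\le\const\,|Q|$. Consequently, when all external momenta lie near the origin the arguments $q-K(k)$ and $q'+K'(k)$ lie near $0$ as well, so the three contributions to $t''$ containing $t^\infty$ or $t'^\infty$ vanish in a neighbourhood of the origin and are therefore $O^{\mathrm{dist}}$ of any power. It thus suffices to treat the piece built from $t^0$ and $t'^0$.

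\emph{Riesz representation.} Using $(2\pi)^4\delta(P)=\square^3 s(P)$ from Lemma~\ref{lem:IR_index_product_general_aux} together with the identities $t^0(q-K(k))=\sum_\alpha\partial^\alpha_{(q_1,\dots,q_k)}[t_\alpha(q-K(k))]$ and the analogue for $t'^0$ (valid because the $K_i,K'_i$ are linear, so differentiating in $q_i$, $q'_i$ coincides with differentiating in the argument), and then pulling the $Q$-, $q$- and $q'$-derivatives out of the $k$-integral, one obtains
\begin{equation}
 t''(Q,q,q')=\sum_{\alpha,\alpha'}\partial^\alpha_{(q_1,\dots,q_k)}\partial^{\alpha'}_{(q'_1,\dots,q'_{k'})}\,\square^3_Q\,G_{\alpha\alpha'}(Q,q,q'),
\end{equation}
where
\begin{equation}
 G_{\alpha\alpha'}(Q,q,q'):=\int\mH{0}{k_1}\ldots\mH{0}{k_l}\;s\Big(Q-\textstyle\sum_jk_j\Big)\,k^\beta\,t_\alpha(q-K(k))\,t'_{\alpha'}(q'+K'(k)).
\end{equation}
All the differential operators appearing here act only on $Q$ and on $(q_1,\dots,q_k),(q'_1,\dots,q'_{k'})$, i.e.\ on the variables with respect to which regularity is measured. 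Since $s$ and the (cut-off) functions $t_\alpha,t'_{\alpha'}$ are continuous and, for $Q$ in a bounded set, the integrand is supported in a fixed bounded set of $k$'s, dominated convergence shows $G_{\alpha\alpha'}$ is continuous.

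\emph{Estimate and power counting.} On the support of the integrand $|k_j|\le\sqrt2\,|Q|$, hence $|s(Q-\sum_jk_j)|\le\const\,|Q|^2$, $|k^\beta|\le\const\,|Q|^{|\beta|}$, $|t_\alpha(q-K(k))|\le\const\,(|q_1,\dots,q_k|+|Q|)^{\delta+|\alpha|}$ (the bound only involves the first $k$ blocks of the argument, so it is uniform in the spectator momenta), and similarly $|t'_{\alpha'}(q'+K'(k))|\le\const\,(|q'_1,\dots,q'_{k'}|+|Q|)^{\delta'+|\alpha'|}$; moreover the total $\prod_jd\mu_0(k_j)$-mass of $\{k_j\in\overline{V}^+:\sum_jk_j\preceq Q\}$ is finite and homogeneous of degree $2l$ in $Q$, hence bounded by $\const\,|Q|^{2l}$ near the origin. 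Writing $w:=(Q,q_1,\dots,q_k,q'_1,\dots,q'_{k'})$ and noting that the exponent of each factor is non-negative whenever that term is present, one gets $|G_{\alpha\alpha'}|\le\const\,|w|^{\,2+|\beta|+2l+\delta+|\alpha|+\delta'+|\alpha'|}$ near the origin, so by Definition~\ref{def:O_not} (fourth remark) $G_{\alpha\alpha'}=O^{\mathrm{dist}}(|w|^{\,2+|\beta|+2l+\delta+|\alpha|+\delta'+|\alpha'|})$. Applying the differential operator $\partial^\alpha_{(q_1,\dots,q_k)}\partial^{\alpha'}_{(q'_1,\dots,q'_{k'})}\square^3_Q$, of order $|\alpha|+|\alpha'|+6$, lowers the exponent by exactly that amount (the elementary differentiation rule for $O^{\mathrm{dist}}$ used in the proof of Theorem~\ref{thm:math_adiabatic_limit}), yielding $O^{\mathrm{dist}}(|w|^{\delta''})$ with $\delta''=\delta+\delta'+2l+|\beta|-4$ for each term; summing the finitely many terms and adding the vanishing $t^\infty$- and $t'^\infty$-contributions gives $t''=O^{\mathrm{dist}}(|w|^{\delta''})$, which is the claim.

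\emph{Main obstacle.} The only genuinely delicate point is the reduction step: one must know that the behaviour of $t''$ near the origin is controlled solely by the behaviour of $t$ and $t'$ near the origin, and this rests entirely on the properness of the forward light cone, which forces $k_1,\dots,k_l$ (and hence all $K_i(k)$, $K'_i(k)$) to be small once $Q$ is small. Once this localization is secured, the Riesz substitution, the interchange of derivatives with the $k$-integral, and the power counting are all routine.
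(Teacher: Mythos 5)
Your proof is correct and follows essentially the same route as the paper's: you write $t,t'$ near the origin via the representation of Definition~\ref{def:O_not}, trade the delta for the Riesz distribution $s$ using $\square^3 s=(2\pi)^4\delta$, restrict the $k$-integration to $|k_j|\leq\const|Q|$ by the cone support properties, and conclude from continuity of the resulting kernels together with the bound $|G_{\alpha\alpha'}|\leq\const\,|w|^{\,2+2l+|\beta|+\delta+|\alpha|+\delta'+|\alpha'|}$ and the differentiation rule for $O^{\mathrm{dist}}$. The only cosmetic deviations are that you make the localization explicit with a cutoff decomposition $t=t^0+t^\infty$ and obtain the estimate by a sup-times-volume bound on the integration region, where the paper instead rescales $u_i=k_i/|Q|$; the power counting is identical.
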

\begin{proof}
The variables $q_1,\ldots,q_n$ will be denoted collectively by $q$, the variables $q_1,\ldots,q_k$ -- by $\mathbf{q}$ and similarly for the primed variables. By Definition \ref{def:O_not} the distributions $t$ and $t'$ admit (in some neighborhood of the origin) the following representations: 
\begin{equation}
 t(q)=\sum_\alpha\partial_\mathbf{q}^\alpha t_\alpha(q)~~~~\textrm{and}
 ~~~~ t'(q')=\sum_{\alpha'} \partial_{\mathbf{q}'}^{\alpha'} t'_{\alpha'}(q'),
\end{equation}
where the functions $t_\alpha$ and $t'_{\alpha'}$ are continuous and such that
\begin{equation}\label{eq:lem_product_bounds}
 |t_\alpha(q)|\leq \const\, |\mathbf{q}|^{|\alpha|+\delta}~~~~\textrm{and}
 ~~~~|t'_{\alpha'}(q')|\leq\const\, |\mathbf{q}'|^{|\alpha'|+\delta'}.
\end{equation}

The distribution $t''$ defined by \eqref{eq:thm_prod_form} is expressed in some neighborhood of $0$ by the following sum
\begin{equation}\label{eq:IR_index_product_general_derivatives}
 t''(Q,q,q')=\sum_{\alpha,\alpha'} \partial_{\mathbf{q}}^{\alpha}\partial_{\mathbf{q}'}^{\alpha'} \square_{Q}^3 t''_{\alpha\alpha'}(Q,q,q'),
\end{equation}
where
\begin{multline}\label{eq:thm_IR_index_product_general_T}
 t''_{\alpha\alpha'}(Q,q,q') := \int\mH{0}{k_1}\ldots\mH{0}{k_l}\,k^\beta
 s(Q-k_1-\ldots-k_l)
 \\
 t_\alpha(q_1-K_1(k),\ldots,q_n-K_n(k)) 
 ~t'_{\alpha'}(q'_1+K'_1(k),\ldots,q'_{n'}+K'_{n'}(k))
\end{multline}
and $s$ is a~continuous function given in Lemma~\ref{lem:IR_index_product_general_aux}. To prove the above statement we observe that as a~result of the support properties of the function $s$ and of the measure $\rd\mu_0$ the integration region in \eqref{eq:thm_IR_index_product_general_T} may be restricted to $|k_1|,\ldots,|k_l|\leq |Q|$. Thus, it follows from the linearity of the maps \eqref{eq:functionals_K} that \eqref{eq:IR_index_product_general_derivatives} holds for $Q,q,q'$ in sufficiently small neighborhood of $0$ in $\R^{4n+4n'+4}$.

The continuity of the functions $t''_{\alpha\alpha'}$ is evident. Thus, to prove the statement of the lemma it is enough to show that 
\begin{equation}\label{eq:lem_product_bound}
 |t''_{\alpha\alpha'}(Q,q,q')|\leq \const \,|Q,\mathbf{q},\mathbf{q'}|^{|\alpha|+\delta+|\alpha'|+\delta'+2l+|\beta|+2}.
\end{equation}
Upon changing the variables of integration $u_i=k_i/|Q|$ in \eqref{eq:thm_IR_index_product_general_T} we obtain
\begin{multline}
 \hspace{-2mm}t''_{\alpha\alpha'}(Q,q,q') = |Q|^{2l+|\beta|}
 \int_{|u_j|\leq 1}\hspace{-1.5mm}\mH{0}{u_1}\ldots\mH{0}{u_l}\,u^\beta
 s(Q-|Q| u_1-\ldots-|Q| u_l)
 \\ 
 t_\alpha(q_1-|Q| K_1(u),\ldots,q_n-|Q| K_n(u))\, 
 t'_{\alpha'}(q'_1+|Q| K'_1(u),\ldots,q'_{n'}+|Q| K'_{n'}(u)).
\end{multline}
The bound \eqref{eq:lem_product_bound} follows from \eqref{eq:lem_product_bounds} and the inequalities
\begin{equation}
 |s(Q-|Q| u_1-\ldots-|Q| u_l)|\leq \const\, |Q|^2
\end{equation}
and
\begin{equation}
 |\mathbf{q}-|Q|\mathbf{K}(u)|^\eta,|\mathbf{q}'-|Q|\mathbf{K}'(u)|^\eta \leq \const\, |Q,\mathbf{q},\mathbf{q'}|^{\eta}
\end{equation}
valid for $|u_1|,\ldots,|u_l|\leq 1$ and $\eta\geq 0$, where $\mathbf{K}(u)=(K_1(u),\ldots,K_k(u))$ and similarly for $\mathbf{K}'(u)$. This ends the proof.
\end{proof}

\noindent Finally, we arrive at the main result of this section.
\begin{thm}\label{thm:product_F}
Let $F$ and $F'$ be $\Fa$-products with $n$ and $n'$ arguments, respectively. Fix $B_1,\ldots,B_n$, $B'_1,\ldots,B'_{n'}\in\Fh$ and set 
\begin{equation}\label{eq:thm_product_d}
 d = d_0 - \sum_{i=1}^{\mathrm{p}} [\dim(A_i)\ext_\mathbf{s}(A_i)+ \der_\mathbf{s}(A_i)],
 ~~~
 d'= d'_0 - \sum_{i=1}^{\mathrm{p}} [\dim(A_i)\ext_{\mathbf{s}'}(A_i)+ \der_{\mathbf{s}'}(A_i)]
\end{equation}
for any lists of super-quadri-indices $\mathbf{s}=(s_1,\ldots,s_n)$ and $\mathbf{s}'=(s'_1,\ldots,s'_{n'})$, where $d_0,d'_0\in\Z$ are some fixed constants. For definitions of $\ext_{\mathbf{s}}(A_i)$ and $\der_\mathbf{s}(A_i)$ see \eqref{eq:ext}. Assume that the VEVs of the operator-valued distributions
\begin{equation}\label{eq:product_families}
 F(B_1^{(s_1)}(x_1),\ldots,B_n^{(s_n)}(x_n)),
 ~~~~
 F'(B_1^{\prime(s'_1)}(x'_1),\ldots,B_{n'}^{\prime(s'_{n'})}(x'_{n'}))
\end{equation}
satisfy one of the following conditions. They have
\begin{enumerate}[leftmargin=*,label={(\arabic*)}]
 \item \underline{IR}-indices $d$ and $d'$, respectively,
 \item \underline{IR}-index $d$ and IR-index $d'$ with respect to $x'_1,\ldots,x'_{k'}$, respectively,
 \item IR-index $d$ with respect to $x_1,\ldots,x_k$ and \underline{IR}-index $d'$, respectively,
\end{enumerate}
for all super-quadri-indices $s_1, \ldots, s_n$, $s'_1,\ldots, s'_{n'}$ which involve only massless fields. By the axiom \ref{axiom2} it holds $d,d'\in\Z$ unless the VEVs of the products \eqref{eq:product_families} vanish. Then the VEV of the product of $F$ and $F'$
\begin{equation}\label{eq:product_T_bis}
 F(B_1(x_1),\ldots,B_n(x_n))F'(B_1^{\prime}(x'_1),\ldots,B_{n'}^{\prime}(x'_{n'}))
\end{equation}
has
\begin{enumerate}[leftmargin=*,label={(\arabic*')}]
 \item \underline{IR}-index $d_0+d'_0-4$,
 \item IR-index $d_0+d'_0-4$ with respect to all variables but $x'_{k'+1},\ldots,x'_{n'}$,
 \item IR-index $d_0+d'_0-4$ with respect to all variables but $x_{k+1},\ldots,x_{n}$,
\end{enumerate} 
in the respective cases. In the case of the graded commutator 
\begin{equation}\label{eq:product_com}
 [F(B_1(x_1),\ldots,B_n(x_n)),F'(B_1^{\prime}(x'_1),\ldots,B_{n'}^{\prime}(x'_{n'}))],
\end{equation}
the above statements (1')--(3') are valid if the assumptions (1)-(3) hold for all super-quadri-indices $s_1, \ldots, s_n$, $s'_1, \ldots, s'_{n'}$ which involve only massless fields such that at least one of them is nonzero.
\end{thm}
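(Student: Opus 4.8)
The plan is to expand the vacuum expectation value of the product $FF'$ by means of the Wick-expansion formula \eqref{eq:vev_product_representation} and to analyse the resulting finite sum term by term, using Theorem~\ref{thm:product} together with Lemma~\ref{lem:aux_lemma}. Recall that \eqref{eq:vev_product_representation} expresses the distribution \eqref{eq:product_T_bis} as a finite sum, indexed by the lists $\bar u,\bar u',\bar i,\bar i',\bar\alpha,\bar\alpha'$ of the type introduced in Section~\ref{sec:F_prod}, of distributions of exactly the form \eqref{eq:product_thm_dist}: the factor $t=(\Omega|F(B_1^{(s_1)}(x_1),\ldots,B_n^{(s_n)}(x_n))\Omega)$, the factor $t'=(\Omega|F'(B_1^{\prime(s'_1)}(x'_1),\ldots,B_{n'}^{\prime(s'_{n'})}(x'_{n'}))\Omega)$, and a product of two-point functions of the generators $\partial^{\bar\alpha(k)}A_{\bar i(k)}$, $\partial^{\bar\alpha'(k)}A_{\bar i'(k)}$, where the lists $\mathbf{s}=(s_1,\ldots,s_n)$ and $\mathbf{s}'=(s'_1,\ldots,s'_{n'})$ are those determined by \eqref{eq:s_sPrime}. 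First I would split this sum into the vacuum contribution ($l=0$, i.e.\ $\mathbf{s}=\mathbf{s}'=0$), the massless contributions ($l\geq 1$ with every $s_m$, $s'_m$ involving only massless fields), and the remaining terms.

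For the remaining terms I would invoke Part~(A) of Lemma~\ref{lem:aux_lemma}: they vanish on every test function whose Fourier transform is supported in $\R^{4n}\times\mathcal{O}'\cup\mathcal{O}\times\R^{4n'}$. One then checks that in each of the three cases the portion of momentum space relevant for the corresponding (\underline{IR}- or IR-)index of Definitions~\ref{def:IR1}--\ref{def:IR2} lies inside this set: in case~(1'), after the translational reduction, smallness of all the independent momenta forces all of $q_1,\ldots,q_n,q'_1,\ldots,q'_{n'}$ into $\mathcal{O}\times\mathcal{O}'$; in case~(2') the variables $q_1,\ldots,q_n$ are driven to the origin, placing one in $\mathcal{O}\times\R^{4n'}$; in case~(3') the variables $q'_1,\ldots,q'_{n'}$ are driven to the origin, placing one in $\R^{4n}\times\mathcal{O}'$. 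Hence the remaining terms have arbitrarily large IR-index and play no role.

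For the vacuum contribution I would apply Theorem~\ref{thm:product} with $l=0$: then $\bar\ext(A_i)=\bar\der(A_i)=0$, while $t$ and $t'$ carry the indices $d_0$ and $d'_0$ by the hypotheses (1)--(3) read off for $\mathbf{s}=\mathbf{s}'=0$, so the product has index $d_0+d'_0-4$. For a massless contribution with $l\geq 1$ the two-point functions in \eqref{eq:product_thm_dist} are between massless fields, so Theorem~\ref{thm:product} applies with $d$, $d'$ given by \eqref{eq:thm_product_d} (the hypotheses are available precisely because the $s_m$, $s'_m$ involve only massless fields) and yields the index $d''=d+d'+\sum_{i=1}^{\mathrm{p}}[\dim(A_i)\bar\ext(A_i)+\bar\der(A_i)]-4$. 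Substituting the constraints \eqref{eq:constraints}, $\bar\ext(A_i)=\ext_{\mathbf{s}}(A_i)+\ext_{\mathbf{s}'}(A_i)$ and $\bar\der(A_i)=\der_{\mathbf{s}}(A_i)+\der_{\mathbf{s}'}(A_i)$, and then \eqref{eq:thm_product_d}, collapses this to $d''=d_0+d'_0-4$ as well. Since the sum is finite and the $O^{\mathrm{dist}}$-condition is preserved under finite sums, the VEV \eqref{eq:product_T_bis} has the asserted index, and statements (1')--(3') come out by applying the correspondingly numbered parts of Theorem~\ref{thm:product} to every surviving term.

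The graded-commutator claim follows by the same argument combined with Part~(B) of Lemma~\ref{lem:aux_lemma}, which states that the representation of the VEV of $[F,F']$ has no vacuum contribution; equivalently, the vacuum contributions of $FF'$ and of $(-1)^{\mathbf{f}(B_1\ldots B_n)\mathbf{f}(B'_1\ldots B'_{n'})}F'F$ cancel, being equal scalar distributions whose sign is $+1$ whenever they are nonzero (by \ref{axiom2}). Thus only the massless contributions with $l\geq 1$ and the remaining terms survive, and by \eqref{eq:s_sPrime} each massless contribution with $l\geq 1$ has at least one nonzero $s_m$ or $s'_m$, so only the weaker hypotheses are needed for the commutator. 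I expect the main obstacle to be largely organisational: keeping the bookkeeping of \eqref{eq:constraints} straight so that every surviving term lands on exactly $d_0+d'_0-4$, and confirming that the neighborhoods supplied by Lemma~\ref{lem:aux_lemma}(A) are genuinely adequate for the index notions in all three cases.
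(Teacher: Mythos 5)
Your proposal is correct and follows essentially the same route as the paper: expand the VEV via \eqref{eq:vev_product_representation}, discard the terms whose super-quadri-indices involve a massive field using Part~(A) of Lemma~\ref{lem:aux_lemma} together with Definitions~\ref{def:IR1}--\ref{def:IR2}, apply Theorem~\ref{thm:product} with the constraints \eqref{eq:constraints} to the surviving (vacuum and massless) terms to land on $d_0+d'_0-4$, and dispose of the vacuum contribution for the graded commutator via Part~(B) of Lemma~\ref{lem:aux_lemma}. Your additional bookkeeping of the three cases and the momentum-space neighborhoods is exactly the elaboration the paper leaves implicit.
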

\begin{proof}
We will prove first the statements about the product \eqref{eq:product_T_bis}. The proof is based on the formula \eqref{eq:vev_product_representation}. We will show that the statements (1')-(3') hold independently for each term of the sum on the RHS of this formula. First note that as a~result of Definitions \ref{def:IR1} and \ref{def:IR2} of the \underline{IR}- and IR-index and Part~(A) of Lemma~\ref{lem:aux_lemma} this is true for all the terms for which at least one of the super-quadri-indices $s_1,\ldots,s_n$, $s'_1,\ldots,s'_{n'}$ involves a~massive field. The statements in the cases (1')-(3') follow now directly from Theorem~\ref{thm:product} and the constraints \eqref{eq:constraints}. 

In order to prove the claim about the graded commutator \eqref{eq:product_com} we use Part~(B) of Lemma  \ref{lem:aux_lemma} and the claim about the product \eqref{eq:product_T_bis} which has just been proved.
\end{proof}

\subsection{Splitting}\label{sec:split}

\begin{thm}\label{thm:split}
Assume that the distribution
\begin{equation}
  (\Omega|\Dif(B_1(y_1),\ldots,B_n(y_n);B_{n+1}(y_{n+1}))\Omega)
\end{equation}
has \underline{IR}-index $d\in\Z$. Then there exist constants $c_\alpha\in\C$ for multi-indices $\alpha$, $|\alpha|<d$ such that the distribution
\begin{multline}
  (\Omega|\Adv(B_1(y_1),\ldots,B_n(y_n);B_{n+1}(y_{n+1}))\Omega) 
  \\
  - \sum_{\substack{\alpha\\|\alpha|<d}}\! c_\alpha \partial^\alpha \delta(y_1-y_{n+1})\ldots \delta(y_n-y_{n+1})
\end{multline}
has \underline{IR}-index $d$.
\end{thm}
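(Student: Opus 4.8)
The plan is to reduce the statement to a purely distributional causal-splitting problem on $\R^{4n}$ and then to invoke Theorem~\ref{thm:math_splitting}. We may assume $n\ge1$, the case $n=0$ being immediate since then the reduced $\Dif$ vanishes. By Remark~\ref{rem:transl} it suffices to control the translation-reduced distributions $\mathsf a(y_1,\dots,y_n):=(\Omega|\Adv(B_1(y_1),\dots,B_n(y_n);B_{n+1}(0))\Omega)$ and $\mathsf r$, defined in the same way with $\Ret$ in place of $\Adv$; set $\mathsf w:=\mathsf a-\mathsf r$, the reduced VEV of $\Dif$. The hypothesis says $\F{\mathsf w}(q)=O^\mathrm{dist}(|q|^{d-\varepsilon})$ for every $\varepsilon>0$, while by \eqref{eq:supp_VEVs} one has $\supp\mathsf a\subset(\overline{V}^+)^n$ and $\supp\mathsf r\subset(\overline{V}^-)^n$. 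Since the counterterm $\sum_{|\alpha|<d}c_\alpha\partial^\alpha\delta(y_1-y_{n+1})\cdots\delta(y_n-y_{n+1})$ reduces to $\sum_{|\alpha|<d}c_\alpha\partial^\alpha\delta(y_1,\dots,y_n)$, whose Fourier transform is a polynomial of degree $<d$, the assertion is equivalent to the existence of a polynomial $P(q)=\sum_{|\alpha|<d}a_\alpha q^\alpha$ with $\F{\mathsf a}(q)-P(q)=O^\mathrm{dist}(|q|^{d-\varepsilon})$ for all $\varepsilon>0$.

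First I would split $\mathsf a=\Theta_n\mathsf w+v$, where $\Theta_n$ is the UV regular splitting function of Section~\ref{sec:weak_massive_proof} and $v:=(1-\Theta_n)\mathsf a+\Theta_n\mathsf r$; this identity uses only $\mathsf a-\mathsf r=\mathsf w$. The key point, and essentially the only non-formal one, is that $v$ has \emph{compact} support. Indeed, by \eqref{eq:supp_theta} together with the elementary estimate $y_1^0+\dots+y_n^0\ge|(y_1,\dots,y_n)|/\sqrt2>|(y_1,\dots,y_n)|/(3n)$ valid on $(\overline{V}^+)^n$ (and its mirror image on $(\overline{V}^-)^n$), the functions $1-\Theta_n$ and $\Theta_n$ vanish identically on open neighbourhoods of $(\overline{V}^+)^n\cap\{|y|>\ell\}$ and of $(\overline{V}^-)^n\cap\{|y|>\ell\}$, respectively. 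Hence $(1-\Theta_n)\mathsf a$ is supported in $(\overline{V}^+)^n\cap\{|y|\le\ell\}$ and $\Theta_n\mathsf r$ in $(\overline{V}^-)^n\cap\{|y|\le\ell\}$, so $v\in\mathcal{E}'(\R^{4n})$ and, by the Paley--Wiener--Schwartz theorem, $\F v$ is a smooth (indeed entire) function.

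The first summand is then dealt with by Theorem~\ref{thm:math_splitting} applied with $N=4n\ge2$, $M=0$ and $\Theta=\Theta_n$ (a UV regular splitting function in the sense of Definition~\ref{def:splitting_function_theta}): for each $\varepsilon\in(0,1)$ it produces a polynomial $P_1$ of degree $<d$ with $\widetilde{\Theta_n\mathsf w}(q)=O^\mathrm{dist}(|q|^{d-\varepsilon})+P_1(q)$, and $P_1$ may be taken the same for all $\varepsilon$, since (directly from Definition~\ref{def:O_not}) a nonzero polynomial of degree $<d$ is never $O^\mathrm{dist}(|q|^{d-\varepsilon})$ for small $\varepsilon$, so two admissible choices must coincide. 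For the second summand, let $P_2$ be the Taylor polynomial of $\F v$ at the origin of degree $\le d-1$ (empty if $d\le0$); then $\F v-P_2=O(|q|^d)$ near $0$, and being continuous it satisfies $\F v-P_2=O^\mathrm{dist}(|q|^{d-\varepsilon})$ for every $\varepsilon>0$ by the remarks following Definition~\ref{def:O_not}. Summing, $\F{\mathsf a}(q)=O^\mathrm{dist}(|q|^{d-\varepsilon})+(P_1+P_2)(q)$ for all $\varepsilon>0$ with $\deg(P_1+P_2)<d$; taking $P=P_1+P_2$ and undoing the reduction yields the constants $c_\alpha$ claimed in the theorem.

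The genuinely delicate step is the compactness of $\supp v$: this is what makes $\F v$ smooth, so that subtracting a single Taylor polynomial suffices, and it rests precisely on matching the forward and backward light-cone supports of the advanced and retarded vacuum expectation values with the regions where the splitting function $\Theta_n$ is identically $1$ or $0$. Everything else is routine manipulation within the $O^\mathrm{dist}$-calculus, together with one application of Theorem~\ref{thm:math_splitting}.
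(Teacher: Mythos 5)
Your proof is correct and follows essentially the same route as the paper: the same decomposition of the advanced VEV into $(1-\Theta_n)\cdot\mathrm{Adv}+\Theta_n\cdot\mathrm{Ret}+\Theta_n\cdot\mathrm{D}$, an application of Theorem~\ref{thm:math_splitting} (with $M=0$) to the $\Dif$ piece, and smoothness plus Taylor's theorem for the other two pieces. The only cosmetic difference is that you derive the smoothness of those two pieces from their compact support (combining \eqref{eq:supp_VEVs} with \eqref{eq:supp_theta}) and the Paley--Wiener--Schwartz theorem, whereas the paper invokes the argument of Lemma~\ref{lem:splitting_smooth}; both rest on the same matching of cone supports with the regions where $\Theta_n$ is $0$ or $1$.
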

\begin{proof}
Let $I=(B_1(y_1),\ldots,B_n(y_n))$. Using the definition \eqref{eq:def_dif} of the $\Dif$ product we get
\begin{equation}\label{eq:split1_decomposition}
\begin{split}
 (\Omega|\Adv(I;B_{n+1}(0))\Omega)
 =(1-\Theta_n(y_1,\ldots,y_n))\,&(\Omega|\Adv(I;B_{n+1}(0))\Omega)
 \\
 +\Theta_n(y_1,\ldots,y_n)\,&(\Omega|\Ret(I;B_{n+1}(0))\Omega)
 \\
 +\Theta_n(y_1,\ldots,y_n)\,&(\Omega|\Dif(I;B_{n+1}(0))\Omega),
\end{split} 
\end{equation}
where the functions $\Theta_n$ were introduced in Section~\ref{sec:weak_massive_proof}. It is enough to prove that the Fourier transform of each term of the RHS of the above equation is of the form 
\begin{equation}\label{eq:split1_form}
 \tilde{t}(q_1,\ldots,q_n) = \sum_{|\gamma|<d} c_\gamma (-\ri)^{|\gamma|} q^\gamma + O^\mathrm{dist}(|q_1,\ldots,q_n|^{d-\varepsilon}) 
\end{equation}
for any $\varepsilon\in(0,1)$ and some constants $c_\gamma\in\C$, where $\gamma$ is a~multi-index. For the last term it is an immediate consequence of Theorem~\ref{thm:math_splitting} with ${M=0}$ and Definition \ref{def:IR1} of the \underline{IR}-index. In the case of the Fourier transforms of the first and the second terms we prove, along the lines of Lemma \ref{lem:splitting_smooth}, that they are smooth functions and apply the Taylor theorem. 
\end{proof}

\begin{thm}\label{thm:split_gen}
Assume that the distribution
\begin{equation}\label{eq:thm_split_gen_D}
 (\Omega|\Dif(B_1(y_1),\ldots,B_n(y_n);C_1(x_1),\ldots,C_m(x_m);P)\Omega)
\end{equation}
has IR-index $d\in\Z$ with respect to $y_1,\ldots,y_n$. Then:
\begin{enumerate}[leftmargin=*,label={(\Alph*)}]
\item The distribution
\begin{equation}\label{eq:splitting2_proof_index}
 (\Omega|\Adv(B_1(y_1),\ldots,B_n(y_n);C_1(x_1),\ldots,C_m(x_m);P)\Omega)
\end{equation}
has IR-index $\min\{d,0\}$ with respect to $y_1,\ldots,y_n$. 
\item If $d=1$, then for every $f\in\cS(\R^{4m})$ and $\varepsilon>0$ there exists $c\in\C$ such 
\begin{equation}
\begin{aligned}
 &\int\rd^4 x_1\ldots\rd^4 x_m\,f(x_1,\ldots,x_m) \times 
 \\
 &\hspace{2cm}(\Omega|\Adv(\F{B}_1(q_1),\ldots,\F{B}_n(q_n);C_1(x_1),\ldots,C_m(x_m);P)\Omega) 
 \\
 =&
 \int\rd^4 x_1\ldots\rd^4 x_m\,f(x_1,\ldots,x_m) \times 
 \\
 &\hspace{2cm}(\Omega|\Ret(\F{B}_1(q_1),\ldots,\F{B}_n(q_n);C_1(x_1),\ldots,C_m(x_m);P)\Omega) 
 \\
 =&\, O^\mathrm{dist}(|q_1,\ldots,q_n|^{1-\varepsilon})+ c.
\end{aligned}
\end{equation}

\end{enumerate}
\end{thm}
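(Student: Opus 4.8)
The plan is to mimic the proof of Theorem~\ref{thm:split}, now carrying the additional arguments $C_1(x_1),\ldots,C_m(x_m)$ and the partition $P$ along and replacing the \underline{IR}-index by the IR-index of Definition~\ref{def:IR2}. The case $n=0$ is trivial, since then $\Dif(\emptyset;J;P)=0$ while $\Adv(\emptyset;J;P)=\Ret(\emptyset;J;P)=\T(J;P)$, so that (A) holds ($\min\{d,0\}\le 0$) and (B) holds with $c=\int\rd^4 x_1\ldots\rd^4 x_m\,f(x)\,(\Omega|\T(J;P)\Omega)$. So I may assume $n\ge 1$, hence $4n\ge 2$, and write $I=(B_1(y_1),\ldots,B_n(y_n))$, $J=(C_1(x_1),\ldots,C_m(x_m))$, $y=(y_1,\ldots,y_n)$.

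First I would write down, exactly as in \eqref{eq:split1_decomposition}, the identity of vacuum expectation values $(\Omega|\Adv(I;J;P)\Omega)=(1-\Theta_n(y))(\Omega|\Adv(I;J;P)\Omega)+\Theta_n(y)(\Omega|\Ret(I;J;P)\Omega)+\Theta_n(y)(\Omega|\Dif(I;J;P)\Omega)$, which follows from $\Dif=\Adv-\Ret$, with $\Theta_n$ the UV regular splitting function from Section~\ref{sec:weak_massive_proof}. After multiplying by $f$ and passing to momentum space it suffices to control the Fourier transform of each of the three terms. For the first two, Lemma~\ref{lem:splitting_smooth} (which is already stated for the generalized $\Adv$, $\Ret$ products with a partition) shows that the Fourier transforms of $f(x)(1-\Theta_n(y))(\Omega|\Adv(I;J;P)\Omega)$ and $f(x)\Theta_n(y)(\Omega|\Ret(I;J;P)\Omega)$ are smooth functions of $(q_1,\ldots,q_n,q'_1,\ldots,q'_m)$; by the remarks after Definition~\ref{def:O_not} a smooth function is $O^{\mathrm{dist}}(|q|^0)$, hence $O^{\mathrm{dist}}(|q|^\delta)$ for every $\delta\le 0$, and by Taylor's theorem it equals its Taylor polynomial in $q$ of any degree $<D$ plus a term $O^{\mathrm{dist}}(|q|^{D})$.

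For the third term I would note that $\Theta_n$ depends only on $y$, so $f(x)\Theta_n(y)(\Omega|\Dif(I;J;P)\Omega)$ is precisely the UV-regularized splitting $t_\Theta$ of the distribution $t(y,x):=f(x)(\Omega|\Dif(I;J;P)\Omega)$, which is of the form \eqref{eq:t_dist_form}; the hypothesis that $(\Omega|\Dif(I;J;P)\Omega)$ has IR-index $d$ with respect to $y$ means, via Definition~\ref{def:IR2}, that $\F{t}(q,q')=O^{\mathrm{dist}}(|q|^{d-\varepsilon})$ for every $\varepsilon\in(0,1)$. Theorem~\ref{thm:math_splitting} (with $N=4n\ge 2$, $M=4m$) then gives $\FF{t_\Theta}(q,q')=O^{\mathrm{dist}}(|q|^{d-\varepsilon})+\sum_{|\gamma|<d}c_\gamma(q')q^\gamma$ with $c_\gamma\in C(\R^{4m})$, the polynomial remainder being absent when $d\le 0$ and, in general, a smooth function of $(q,q')$. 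Assembling the three contributions, the Fourier transform of $f(x)(\Omega|\Adv(I;J;P)\Omega)$ is a finite sum of summands each $O^{\mathrm{dist}}(|q|^0)$, together with one summand $O^{\mathrm{dist}}(|q|^{d-\varepsilon})$; when $d\le 0$ all summands are $O^{\mathrm{dist}}(|q|^{d-\varepsilon})$ (as $d-\varepsilon<0$), and when $d\ge 1$ all are $O^{\mathrm{dist}}(|q|^{-\varepsilon})$, so the total is $O^{\mathrm{dist}}(|q|^{\min\{d,0\}-\varepsilon})$ for every $\varepsilon>0$; this is (A) for the advanced product, and since $\Ret=\Adv-\Dif$ and $\Dif$ has IR-index $d\ge\min\{d,0\}$, the same follows for the retarded product. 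For (B) I would set $d=1$ and integrate $x$ against $f$ from the outset (so $M=0$ and only distributions of $q$ occur): the first two, smooth terms contribute $g_i(q)=g_i(0)+O^{\mathrm{dist}}(|q|^{1-\varepsilon})$, Theorem~\ref{thm:math_splitting} with $M=0$ gives $\FF{t_\Theta}(q)=c_0+O^{\mathrm{dist}}(|q|^{1-\varepsilon})$ (the only $\gamma$ with $|\gamma|<1$ being $\gamma=0$), so the $x$-smeared advanced distribution equals $c+O^{\mathrm{dist}}(|q|^{1-\varepsilon})$ with $c:=g_1(0)+g_2(0)+c_0$; finally $\Dif$ has IR-index $1$, so its $x$-smeared Fourier transform is $O^{\mathrm{dist}}(|q|^{1-\varepsilon})$, and $\Ret=\Adv-\Dif$ forces the $x$-smeared retarded distribution to equal $c+O^{\mathrm{dist}}(|q|^{1-\varepsilon})$ with the \emph{same} $c$.

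All the real work is already contained in Lemma~\ref{lem:splitting_smooth} (smoothness away from the region where $y$ is large relative to $x$, resting on the support properties of $\Theta_n$, $1-\Theta_n$ and of the generalized $\Adv$, $\Ret$ products) and in Theorem~\ref{thm:math_splitting} (control of the UV-regularized splitting of an $O^{\mathrm{dist}}$-distribution), so the remaining task is the bookkeeping with the $O^{\mathrm{dist}}$-notation — which exponents may be lowered freely, the fact that smooth functions are $O^{\mathrm{dist}}(|q|^0)$, and that the polynomial remainder from the splitting is either absent (for $d\le 0$) or collapses to a single constant (for $d=1$). The only place I expect to have to be careful is the edge $d\le 0$, where $O^{\mathrm{dist}}(|q|^{d-\varepsilon})$ is a genuine (growth-type) constraint that must still be recognized as implied by $O^{\mathrm{dist}}(|q|^0)$.
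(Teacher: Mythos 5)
Your proposal is correct and follows essentially the same route as the paper's proof: the decomposition of $(\Omega|\Adv(I;J;P)\Omega)$ via the splitting function $\Theta_n$, Lemma~\ref{lem:splitting_smooth} plus Taylor's theorem for the $(1-\Theta_n)\Adv$ and $\Theta_n\Ret$ terms, Theorem~\ref{thm:math_splitting} for the $\Theta_n\Dif$ term, and the identity $\Dif=\Adv-\Ret$ together with the IR-index of $\Dif$ to get the same constant $c$ in Part~(B). The only cosmetic difference is that you smear in $x$ with $f$ from the outset for Part~(B), while the paper keeps the variables $q'$ and restricts to $q'=0$ using the third remark after Definition~\ref{def:O_not}; the bookkeeping with lowering of $O^{\mathrm{dist}}$ exponents that you flag is exactly what the paper relies on as well.
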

\begin{proof}
Let $I= (B_1(y_1),\ldots,B_n(y_n))$, $J= (C_1(x_1),\ldots,C_m(x_m))$. Using the definition \eqref{eq:def_dif_gen} of the generalized $\Dif$ product we obtain 
\begin{equation}\label{eq:split_gen_decomposition}
\begin{split}
 (\Omega|\Adv(I;J;P)\Omega)
 =(1-\Theta_n(y_1,\ldots,y_n))\,&(\Omega|\Adv(I;J;P)\Omega)
 \\
 +\Theta_n(y_1,\ldots,y_n)\,&(\Omega|\Ret(I;J;P)\Omega)
 \\
 +\Theta_n(y_1,\ldots,y_n)\,&(\Omega|\Dif(I;J;P)\Omega),
\end{split} 
\end{equation}
where the functions $\Theta_n$ were introduced in Section~\ref{sec:weak_massive_proof}. By Definition \ref{def:IR2} of the IR-index it holds
\begin{multline}\label{eq:thm_split_gen_D_int}
 \int\!\mP{p_1}\ldots\mP{p_m}~\F{f}(p_1,\ldots,p_m) \times
 \\
 (\Omega|\Dif(\F{B}_1(q_1),\ldots,\F{B}_n(q_n);\F{C}_1(q'_1-p_1),\ldots,\F{C}_m(q'_m-p_m);P)\Omega)
 \\
 = O^\mathrm{dist}(|q_1,\ldots,q_n|^{d-\varepsilon})
\end{multline}
for every $\varepsilon>0$ and $f\in\cS(\R^{4m})$. We will prove that if $t(y_1,\ldots,y_n;x_1,\ldots,x_m)$ is any of the three terms on the RHS of Equation \eqref{eq:split_gen_decomposition}, then there exist $c_\gamma\in C(\R^{4m})$ for multi-indices $\gamma$, $|\gamma|<d$ such that
\begin{multline}\label{eq:splitting2_proof}
 \int\mP{p_1}\ldots\mP{p_m}\,\tilde{t}(q_1,\ldots,q_n;q'_1-p_1,\ldots,q'_m-p_m) \F{f}(p_1,\ldots,p_m) 
 \\
 = O^\mathrm{dist}(|q_1,\ldots,q_n|^{d-\varepsilon}) + \sum_{|\gamma|< d} c_\gamma(q') q^\gamma.
\end{multline}
For the first and the second terms this follows form Lemma~\ref{lem:splitting_smooth} from Section~\ref{sec:weak_massive_proof} and the Taylor theorem. For the last term it is an immediate consequence of Theorem~\ref{thm:math_splitting}. Thus, Equation \eqref{eq:splitting2_proof} is valid also for the distribution $t(y_1,\ldots,y_m;x_1,\ldots,x_m)=(\Omega|\Adv(I;J;P)\Omega)$. Part~(A) follows now from Definition of the IR-index since for $d\leq 0$ the last term on the RHS of \eqref{eq:splitting2_proof} is absent. 

Next, we observe that by the above result and the third remark below Definition \ref{def:O_not} at the beginning of Section~\ref{sec:math} we have
\begin{multline}
 \int\mP{p_1}\ldots\mP{p_m}\,(\Omega|\Adv(\F{B}_1(q_1),\ldots,\F{B}_n(q_n);\F{C}_1(p_1),\ldots,\F{C}_m(p_m);P)\Omega)
 \\ \times\F{f}(-p_1,\ldots,-p_m) 
 = O^\mathrm{dist}(|q_1,\ldots,q_n|^{d-\varepsilon})+ \sum_{|\gamma|< d} c_\gamma(0) q^\gamma.
\end{multline}
To show Part~(B) we note that the last term on the RHS of the above equation is a~constant for $d=1$ and use \eqref{eq:def_dif_gen} and \eqref{eq:thm_split_gen_D_int}.
\end{proof}

\subsection{Proof for theories with massless particles}\label{sec:PROOF}

In this section we prove the existence of the weak adiabatic limit in theories containing massless particles. The idea of the proof was described in Section~\ref{sec:idea}. Our proof is valid only if the time-ordered products of the sub-polynomials of the interaction vertices satisfy certain condition which is formulated below as the normalization condition \ref{norm:wAL}. In Theorem~\ref{thm:main1}, we prove that this condition may be imposed in all models satisfying Assumption~\ref{asm} stated below and is equivalent to the normalization condition \ref{norm:wAL2}. The latter condition is used in Theorem~\ref{thm:main2}, which states the existence of the weak adiabatic limit. The proof of the compatibility of \ref{norm:wAL} with other standard normalization conditions usually imposed on the time-ordered products, like for example, the unitarity or Poincar{\'e} covariance, is postponed to Section~\ref{sec:std_norm_con}.

Our proof of the existence of the Wightman and Green functions applies to the following class of models.
\begin{asm}\label{asm}
We assume that all interaction vertices $\cL_1,\ldots,\cL_\mathrm{q}$ of the models under consideration have even fermion number and satisfy one of the following conditions:
\begin{enumerate}[leftmargin=*,label={(\arabic*)}]
\item $\forall_{l\in\{1,\ldots,\mathrm{q}\}} \dim(\cL_l)=4$,
\item $\forall_{l\in\{1,\ldots,\mathrm{q}\}} \dim(\cL_l)=3$ and $\cL_l$ contains at least one massive field.\footnote{A polynomial $B\in\Fa$ contains at least one massive field if it is a~combination of products of generators, each with a~massive field factor.}
\end{enumerate}
Moreover, we assume throughout that the axiom \ref{axiom7} holds with $\CC=0$ in the case (1) and $\CC=1$ in the case (2), i.e. $\CC = 4 -\dim(\cL_l)$.
\end{asm}
Other conditions which have to be fulfilled by the interaction vertices are listed in Section~\ref{sec:W_G_IR}. They ensure the correct physical properties of the model, but do not play any role in the proof of the existence of the weak adiabatic limit presented in this section. We will, however, use them afterward to prove that the Wightman and Green functions have all the standard properties (cf. Section \ref{sec:properties}). Note that in the proof of the existence of the Wightman and Green functions in purely massive theories we only use the fact that the interaction vertices have even fermion number.

Let us introduce some notations which will be used throughout this section. For any list of super-quadri-indices $\mathbf{u}=(u_1,\ldots,u_k)$ we set
\begin{equation}\label{eq:proof_omega}
 \omega' := 4 - \sum_{i=1}^{\mathrm{p}} [\dim(A_i) \ext_{\mathbf{u}}(A_i)+ \der_{\mathbf{u}}(A_i)] = 4 - \sum_{j=1}^k \dim(A^{u_j}).
\end{equation}
The functions $\ext_{\mathbf{u}}(\cdot)$, $\der_{\mathbf{u}}(\cdot)$ are given by \eqref{eq:ext}, and $\mathrm{p}$ is the number of basic generators. Observe that since $\CC = 4 -\dim(\cL_l)$, by the axiom \ref{axiom7} we get
\begin{equation}\label{eq:sd_bound2}
 \sd\big((\Omega|\T(\cL^{(u_1)}_{l_1}(x_1),\ldots,\cL^{(u_{k-1})}_{l_{k-1}}(x_{k-1}),\cL^{(u_k)}_{l_k}(0)) \Omega)\big) 
 \leq 4(k-1) +\omega'.
\end{equation}
Moreover, if $B_1=\cL^{(u_1)}_{l_1},\ldots,B_k=\cL^{(u_{k})}_{l_k}$, then $\omega'=\omega$, where by definition
\begin{equation}\label{eq:proof_omega_prime}
 \omega:=4-\sum_{j=1}^{k} (4-\CC- \dim(B_j))
\end{equation}
for any $B_1,\ldots,B_k\in\Fh$. Note that $\omega$ defined above coincides for $k=n+1$ with $\omega$ given by \eqref{eq:omega}. By $\Fa^\cL$ we denote the subspace of $\mathcal{F}$ spanned by $\cL^{(u)}_l$ with $l\in\{1,\ldots,q\}$ and $u$ being a~super-quadri-index involving only massless fields. 

Let us recall that according to our definition the time-ordered products are any $\Fa$-products which satisfy the axioms formulated in Section \ref{sec:axioms}. The freedom in the definition of the time-ordered products is characterized in Section~\ref{sec:freedom}. Our proof of the existence of the weak adiabatic limit is applicable only if the time-ordered products are defined in such a way that they fulfill the following normalization condition. 
\begin{enumerate}[label=\bf{N.wAL},leftmargin=*]
\item\label{norm:wAL} 
For all $k\in\N_+$, $l_1,\ldots,l_k\in\{1,\ldots,\mathrm{q}\}$ and lists \mbox{$\mathbf{u}=(u_1,\ldots,u_k)$} of super-quadri-indices such that every $u_j$ involves only massless fields the Fourier transform of the distribution
\begin{equation}\label{eq:wAL_vev}
 (\Omega|\T(\mathcal{L}^{(u_1)}_{l_1}(x_1),\ldots,\mathcal{L}^{(u_{k})}_{l_k}(x_k))\Omega)
\end{equation}
is of the form
\begin{equation}
 (2\pi)^4 \delta(q_1+\ldots+q_k)\,t(q_1,\ldots,q_{k-1}), 
\end{equation}
where $t(q_1,\ldots,q_{k-1})$ has zero of order $\omega'$ at $q_1=\ldots=q_{k-1}=0$ in the sense of {\L}ojasiewicz. Note that $\omega'\in\Z$ unless the distribution \eqref{eq:wAL_vev} vanishes by the axiom \ref{axiom2}.
\end{enumerate}
Let us consider also the following normalization condition.
\begin{enumerate}[label=\bf{N.wAL'},leftmargin=*]
\item\label{norm:wAL2} 
The distribution 
\begin{equation}\label{eq:wAL_dist2}
 (\Omega|F(\mathcal{L}^{(u_1)}_{l_1}(x_1),\ldots,\mathcal{L}^{(u_{k})}_{l_k}(x_k))\Omega)
\end{equation}
has \underline{IR}-index equal to $\omega'$ given by \eqref{eq:proof_omega} for all $\Fa$-products $F$ which are finite linear combinations of products of the time-ordered products, all $k\in\N_+$, $l_1,\ldots,l_k\in\{1,\ldots,\mathrm{q}\}$ and all list of super-quadri-indices $\mathbf{u}=(u_1,\ldots,u_k)$ such that each $u_j$ involves only massless fields (i.e. $\ext_{\mathbf{u}}(A_i)=0$ unless $A_i$ is a~massless field).
\end{enumerate}

\begin{thm}\label{thm:main1} Suppose that Assumption \ref{asm} holds.
\begin{enumerate}[leftmargin=*,label={(\Alph*)}]
\item It is possible to define the time-ordered products in such a~way that the normalization condition \ref{norm:wAL2} is satisfied. 

\item The time-ordered products fulfill the normalization condition \ref{norm:wAL} iff they fulfill the normalization condition \ref{norm:wAL2}.
\end{enumerate}
\end{thm}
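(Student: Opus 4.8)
The plan is to prove (A) by induction on the number $n+1$ of arguments of the time-ordered products, using the freedom characterization of Section~\ref{sec:freedom} at each inductive step, and to prove (B) by a direct argument that translates between the two formulations of the normalization condition using the results of Section~\ref{sec:math}. I would do (B) first, since it is purely a matter of unpacking definitions and will also clarify what (A) really has to deliver.

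\emph{Part (B).} First I would observe that \ref{norm:wAL2} trivially implies \ref{norm:wAL}: take $F=\T$ (a single time-ordered product is itself an $\Fa$-product which is a product of time-ordered products, namely a product of one factor), so \ref{norm:wAL2} says that \eqref{eq:wAL_vev} has \underline{IR}-index $\omega'$; by Definition~\ref{def:IR1} this means $\F{\underline{t}}(q_1,\ldots,q_{k-1})=O^{\mathrm{dist}}(|q_1,\ldots,q_{k-1}|^{\omega'-\varepsilon})$ for every $\varepsilon>0$, and by Lemma~\ref{lem:IR_der} (equivalently Part~(C) of Theorem~\ref{thm:math_adiabatic_limit}) the associated distribution $\F{\underline t}$ has zero of order $\omega'$ at the origin in the sense of {\L}ojasiewicz, which is exactly the statement of \ref{norm:wAL}. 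For the converse, suppose \ref{norm:wAL} holds. I want to show \ref{norm:wAL2}, i.e.\ that the VEV of $F(\cL^{(u_1)}_{l_1}(x_1),\ldots,\cL^{(u_k)}_{l_k}(x_k))$ has \underline{IR}-index $\omega'$ for every $\Fa$-product $F$ that is a finite linear combination of products of time-ordered products. By linearity it suffices to treat $F$ a single product $F=\T(I_1)\cdots\T(I_r)$, where the concatenation of $I_1,\ldots,I_r$ is the full list of arguments. Here is the key point: each factor $\T(I_j)$ has arguments which are sub-polynomials $\cL^{(v)}_l$ of interaction vertices with $v$ involving only massless fields, so its VEV has \underline{IR}-index $\omega'_j$ (the $\omega'$ associated to the sub-list $I_j$) by \ref{norm:wAL} together with Definition~\ref{def:IR1} read in reverse — a distribution whose $\underline{t}$ has a {\L}ojasiewicz zero of order $\omega'_j$ at the origin is $O^{\mathrm{dist}}$ of the corresponding power. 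Then Theorem~\ref{thm:product_F}, statement (1'), applied iteratively to the $r$ factors, gives that the VEV of the product has \underline{IR}-index $\sum_j(\omega'_j) - 4(r-1)$; a short bookkeeping check using $\omega'_j = 4 - \sum_{v\in I_j}\dim(A^v)$ and \eqref{eq:proof_omega} shows this telescopes exactly to $\omega' = 4 - \sum_{j=1}^k\dim(A^{u_j})$. (This is where one uses that the two-point contractions in \eqref{eq:vev_product_representation} only involve massless fields, so that $\bar\ext,\bar\der$ are supported on massless generators and Theorem~\ref{thm:product} applies; Part~(A) of Lemma~\ref{lem:aux_lemma} guarantees the non-massless contributions vanish in a neighborhood of the origin and hence have arbitrarily large \underline{IR}-index.)

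\emph{Part (A).} I would construct the time-ordered products inductively. Assume all time-ordered products with at most $n$ arguments have been defined so that \ref{norm:wAL2} holds for all such products (equivalently, by Part~(B), so that \ref{norm:wAL} holds for all $k\le n$); for $n=0,1$ there is nothing to impose beyond the axioms. For the step to $n+1$: the causality axiom \ref{axiom6} determines $(\Omega|\T(B_1(x_1),\ldots,B_{n+1}(x_{n+1}))\Omega)$ outside the total diagonal, and the ambiguity in extending it is exactly the family $v$ of \eqref{eq:freedom_form}, i.e.\ a derivative of order $\le\omega$ of a delta function on the diagonal. Concretely, for arguments $B_j=\cL^{(u_j)}_{l_j}$ with all $u_j$ involving only massless fields and $\omega' = \omega \ge 0$, I must choose the extension so that $\F{\underline t}$ has a {\L}ojasiewicz zero of order $\omega'$ at the origin. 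By Theorem~\ref{thm:split} (the splitting theorem, with $B_{n+1}(y_{n+1})$ as the distinguished argument) applied to $\Dif$, together with the fact — proved exactly as in the ``idea'' of Section~\ref{sec:idea}, using Theorem~\ref{thm:product_F}(1') and the representation of $\Dif$ with $n+1$ arguments through products of time-ordered products with at most $n$ arguments — that $(\Omega|\Dif(B_1(y_1),\ldots,B_n(y_n);B_{n+1}(0))\Omega)$ has \underline{IR}-index $\omega'$, one obtains a particular advanced extension $\Adv_0$ whose VEV already has \underline{IR}-index $\omega'$ after subtracting a suitable diagonal polynomial of order $<\omega'$. Translating from $\Adv$ back to $\T$ via $\Adv' = \Adv - \T$, and noting that $\Adv'$ with $n+1$ arguments is built from time-ordered products with $\le n$ arguments (Section~\ref{sec:generating}), this yields a choice of $(\Omega|\T(B_1(x_1),\ldots,B_{n+1}(x_{n+1}))\Omega)$, for this particular list of massless sub-polynomial arguments, satisfying the required bound. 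One then extends this choice to all arguments $B_1,\ldots,B_{n+1}\in\Fh$ by fixing the remaining constants $c_\gamma$ in \eqref{eq:freedom_form} arbitrarily (subject only to graded symmetry and the reality/hermiticity conventions), which does not affect the massless-vertex VEVs. Finally, one verifies that with \ref{norm:wAL} secured for $k=n+1$, Part~(B) upgrades it to \ref{norm:wAL2} for all products with $n+1$ arguments, closing the induction.

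\emph{Main obstacle.} The delicate point is the compatibility of the prescription with \emph{graded symmetry} \ref{axiom5} and with \emph{causality}: the extension must be symmetric in all $n+1$ arguments, whereas Theorem~\ref{thm:split} singles out one argument. One must check that the diagonal counterterm dictated by the splitting construction is independent of which argument is distinguished, or equivalently that averaging over the choice of distinguished point does not spoil the \underline{IR}-index — this works because the correction is supported on the total diagonal where all arguments coincide, so symmetrization acts on the constants $c_\gamma$ and not on the support. A second, more technical, obstacle is the bookkeeping that the telescoping of \underline{IR}-indices in Part~(B) produces exactly $\omega'$ and not merely a bound $\le\omega'$; this requires the sharp form of Theorem~\ref{thm:product} (the index of the product is $d''=d+d'+\sum_i[\dim(A_i)\bar\ext(A_i)+\bar\der(A_i)]-4$ with equality in the sense that this is the index one gets), together with the constraints \eqref{eq:constraints} relating $\bar\ext,\bar\der$ to $\ext_{\mathbf s},\der_{\mathbf s}$. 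I expect these to be routine but must be carried out with care.
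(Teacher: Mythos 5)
Your Part (A) follows the paper's strategy (inductive step via the product theorem for $\Dif$ and $\Adv'$, splitting via Theorem~\ref{thm:split}, transfer to $\T$ through $\Adv'=\Adv-\T$, graded symmetrization of the counterterm), but your Part (B) contains a genuine gap, and it propagates back into (A). For the direction \ref{norm:wAL} $\Rightarrow$ \ref{norm:wAL2} you invoke ``Definition~\ref{def:IR1} read in reverse'': that a distribution whose $\F{\underline t}$ has a {\L}ojasiewicz zero of order $\omega'$ at the origin is $O^{\mathrm{dist}}(|q|^{\omega'-\varepsilon})$. This converse of Lemma~\ref{lem:IR_der} is nowhere established and is not available: the {\L}ojasiewicz condition only asserts the existence of certain scaled limits, while the \underline{IR}-index demands a representation $\F{\underline t}=\sum_\alpha\partial^\alpha t_\alpha$ with continuous $t_\alpha$ obeying the pointwise bounds of Definition~\ref{def:O_not} — a strictly stronger, quantitative statement (this is the whole reason the paper introduces the $O^{\mathrm{dist}}$ machinery instead of working with {\L}ojasiewicz zeros throughout). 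The paper proves the converse differently: it uses Part (A) first, and then compares an arbitrary \ref{norm:wAL}-compliant family with the \ref{norm:wAL2}-compliant one constructed there. By the freedom analysis of Section~\ref{sec:freedom} the two differ by counterterms of the form \eqref{eq:freedom_form}; the {\L}ojasiewicz zeros force $c_\gamma=0$ for $|\gamma|<\omega'$, and adding the surviving $|\gamma|=\omega'$ terms (polynomials of degree $\omega'$ in momentum space) manifestly preserves the \underline{IR}-index. Your final sentence of Part (A), which appeals to Part (B) to ``upgrade'' \ref{norm:wAL} at order $n+1$ to \ref{norm:wAL2}, therefore rests on the same unproved converse; the correct closing move (as in the paper) is to note that the construction already gives the \underline{IR}-index for $F=\T$ and to extend to general products $F$ directly via statement (1') of Theorem~\ref{thm:product_F}.

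The second problem is the start of the induction in Part (A). You dismiss $n=0,1$ with ``there is nothing to impose beyond the axioms'', but for $k=1$ the condition is not automatic and there is no normalization freedom to repair it: $(\Omega|\cL_l^{(u)}(x)\Omega)$ is a constant, and if it were nonzero with $\omega'=4-\dim(A^u)>0$ the condition \ref{norm:wAL2} would simply fail. This is exactly where Assumption~\ref{asm} enters: a nonzero constant requires $u$ to exhaust a monomial of $\cL_l$, which is impossible when $\cL_l$ contains a massive field (since $u$ involves only massless fields), and when $\dim(\cL_l)=4$ it forces $\dim(A^u)=4$, hence $\omega'=0$. Your proposal never uses Assumption~\ref{asm} at all (beyond tacitly using $\CC=4-\dim(\cL_l)$ to get $\omega'=\omega$), yet the theorem is false without it — the massless $\varphi^3$ vertex is the standard counterexample, as the paper emphasizes via the mass-normalization discussion. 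Without the base-case verification the induction does not get off the ground.
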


\begin{proof}
We will prove Part (A) of the theorem by induction with respect to the number of arguments of the time-ordered products. First, we have to consider the distributions of the form
\begin{equation}\label{eq:thm_main_one}
 (\Omega|\cL^{(u)}_{l}(x)\Omega),
\end{equation}
where the super-quadri-index $u$ involves only massless fields and $l\in\{1,\ldots,\mathrm{q}\}$. By assumption $\cL_{l}$ contains at least one massive field or $\dim(\cL_l)=4$. First note that \eqref{eq:thm_main_one} is always a~constant, which is nonzero iff $\cL^{(u)}_{l}$ contains a~nonzero constant term in the decomposition into monomials. The latter condition is never satisfied if $\cL_l$ contains a~massive field. Thus, \eqref{eq:thm_main_one} is zero in this case. If $\dim(\cL_l)=4$, then $\cL_l^{(u)}$ may contain a~nonzero constant term in the decomposition into monomials only if $\dim(A^u)=4$ which implies that $\omega'=4-\dim(A^u)=0$ (set $k=1$ in Equation \eqref{eq:proof_omega}). As a~result, the normalization condition \ref{norm:wAL2} is satisfied in both cases.

We proceed to the proof of the inductive step. Fix $n\in\N_+$ and assume that \ref{norm:wAL2} holds for all $k\leq n$ (it is enough to assume that the normalization condition \ref{norm:wAL2} holds only for $F$ being the time-ordered product). We will demonstrate that it is possible to define the time-ordered products such that \ref{norm:wAL2} is satisfied for $k=n+1$. Using the inductive hypothesis and the statement (1') of Theorem~\ref{thm:product_F} we obtain that for all ${m\in\N_+}$, super-quadri-indices $u_1,\ldots,u_{m}$ which involve only massless fields and $\Fa$-products $F$ which may be expressed as a~linear combinations of products of the time-ordered products with at most $n$ arguments, the distribution
\begin{equation}\label{eq:thm_main1_F}
 (\Omega|F(\cL_{l_1}^{(u_1)}(x_1),\ldots,\cL_{l_m}^{(u_{m})}(x_m))\Omega) 
\end{equation}
has \underline{IR}-index $\omega'$ given by \eqref{eq:proof_omega} with $k=m$. In particular, the distributions
\begin{gather}
 (\Omega|\Dif(B_1(x_1),\ldots,B_n(x_n);B_{n+1}(x_{n+1}))\Omega),
 \\
 (\Omega|\Adv'(B_1(x_1),\ldots,B_n(x_n);B_{n+1}(x_{n+1}))\Omega) 
\end{gather}
have \underline{IR}-indices $\omega$ given by \eqref{eq:proof_omega_prime} with $k={n+1}$ for all $B_1,\ldots,B_{n+1}\in \Fa^\cL$.

With the use of the above results, Theorem~\ref{thm:split} and the relation \eqref{eq:def_adv_prime} between the products $\Adv$ and $\Adv'$ we get that for all $B_1,\ldots,B_{n+1}\in \Fa^\cL$ there exist constants $c_\gamma\in\C$ indexed by multi-indices $\gamma$, $|\gamma|<\omega$ such that 
\begin{equation}\label{eq:main1_redefine}
 (\Omega|\T(B_1(x_1),\ldots,B_{n+1}(x_{n+1}))\Omega) + v(B_1(x_1),\ldots,B_{n+1}(x_{n+1}))
\end{equation}
has \underline{IR}-index $\omega$ given by \eqref{eq:proof_omega_prime} with $k=n+1$, where
\begin{equation}
 v(B_1(x_1),\ldots,B_{n+1}(x_{n+1})) := \sum_{\substack{\gamma\\|\gamma|<\omega}} c_\gamma \partial^\gamma \delta(x_1-x_{n+1})\ldots \delta(x_n-x_{n+1}).
\end{equation}
Using the normalization freedom described in Section~\ref{sec:freedom} we will show that it is possible to redefine the time-ordered products such that the condition \ref{norm:wAL2} is fulfilled. To this end, we modify the definition of the VEV of time-ordered product with ${n+1}$ arguments \eqref{eq:T_freedom} by adding to it the graded-symmetric map 
\begin{multline}
 v_{\textrm{sym}}(B_1(x_1),\ldots,B_{n+1}(x_{n+1})) :=
 \\
 \frac{1}{(n+1)!}\sum_{\pi\in \mathcal{P}_{n+1}}\,(-1)^{\mathbf{f}(\pi)} v(B_{\sigma(1)}(x_{\sigma(1)}),\ldots,B_{\sigma(n+1)}(x_{\sigma(n+1)})),
\end{multline}
where $\mathbf{f}(\pi)\in\Z/2\Z$ is the number of transpositions in $\pi$ that involve a~pair of fields with odd fermion number. Since the graded-symmetric part of \eqref{eq:main1_redefine} has \underline{IR}-index $\omega$ given by \eqref{eq:proof_omega_prime} with $k=n+1$ the same holds for the VEV of the redefined time-ordered product. This proves the condition \ref{norm:wAL2} for $F=\T$. The generalization for arbitrary $F$ considered in \ref{norm:wAL2} follows from the statement about the distribution \eqref{eq:thm_main1_F} made above. This ends the proof of Part (A).

To prove Part (B), we first observe that by Lemma~\ref{lem:IR_der} the condition \ref{norm:wAL2} with $F=\T$ implies the condition \ref{norm:wAL}. The reverse implication is also true since after imposing the condition \ref{norm:wAL} the freedom in defining the time-ordered products with $n+1$ arguments given the time-ordered products with $n$ arguments is the same as after imposing \ref{norm:wAL2}: two possible definitions of \eqref{eq:T_freedom} differ by the map \eqref{eq:v_freedom} which satisfies the conditions stated in Section \ref{sec:freedom} and is such that for all $B_1,\ldots,B_{n+1}\in\Fa^\cL$ the distribution $v(B_1(x_1),\ldots,B_{n+1}(x_{n+1}))$ is of the form \eqref{eq:freedom_form}, where the sum over $\gamma$ is restricted to $|\gamma|=\omega'$.
\end{proof}

Now we formulate our main theorem according to which the weak adiabatic limit exists in models satisfying Assumption \ref{asm}.

\begin{thm}\label{thm:main2}
Suppose that Assumption \ref{asm} holds and the time-ordered products satisfy the normalization condition \ref{norm:wAL}. Fix $m\in\N_0$, $C_1,\ldots,C_m\in\Fa$ and a~sequence $P$ of the form considered in Section~\ref{sec:aux}.
\begin{enumerate}[leftmargin=*,label={(\Alph*)}]
\item
Suppose that $\sum_{j=1}^{m}\mathbf{f}(C_j)$ is even. For any $k\in\N_0$ and any list $\mathbf{u}=(u_1,\ldots,u_{k+m})$ of super-quadri-indices such that every $u_j$ involves only massless fields (i.e. $\ext_{\mathbf{u}}(A_i)=0$, $\der_{\mathbf{u}}(A_i)=0$ unless $A_i$ is a~massless field) and at least one super-quadri-index from $\mathbf{u}$ is nonzero (i.e. $\sum_{i=1}^\mathrm{p}\ext_{\mathbf{u}}(A_i)\geq 1$) the distribution
\begin{equation}\label{eq:main_distribution_adv}
 (\Omega|\Adv(\cL_{l_1}^{(u_1)}(y_1),\ldots,\cL_{l_k}^{(u_k)}(y_k);C_1^{(u_{k+1})}(x_1),\ldots,C_m^{(u_{k+m})}(x_m);P)\Omega)
\end{equation}
has IR-index 
\begin{equation}\label{eq:main2_IR}
 d= 1 - \sum_{i=1}^{\mathrm{p}} [\dim(A_i) \ext_{\mathbf{u}}(A_i)+ \der_{\mathbf{u}}(A_i)]
\end{equation}
with respect to the variables $y_1,\ldots,y_k$ for all $l_1,\ldots,l_k\in\{1,\ldots,\mathrm{q}\}$. Note that $d\in\Z$ if the distribution \eqref{eq:main_distribution_adv} is nonzero. 
\item 
For all $k\in\N_0$, $g\in\cS(\R^{4n})$, $f\in\cS(\R^{4m})$, $l_1,\ldots,l_k\in\{1,\ldots,\mathrm{q}\}$ and $\varepsilon>0$ there exists $c\in\C$ such that
\begin{equation}\label{eq:thm_main2_adv}
\begin{aligned}
 &\int\rd^4 x_1\ldots\rd^4 x_m\,f(x_1,\ldots,x_m) \times
 \\
 &~~~(\Omega|\Adv(\F{\cL}_{l_1}(q_1),\ldots,\F{\cL}_{l_k}(q_k);C_1(x_1),\ldots,C_m(x_m);P)\Omega) 
 \\
 =&\int\rd^4 x_1\ldots\rd^4 x_m\,f(x_1,\ldots,x_m) \times
 \\
 &~~~(\Omega|\Ret(\F{\cL}_{l_1}(q_1),\ldots,\F{\cL}_{l_k}(q_k);C_1(x_1),\ldots,C_m(x_m);P)\Omega) 
 \\
  =&\, O^\mathrm{dist}(|(q_1,\ldots,q_k)|^{1-\varepsilon})+ c.
\end{aligned}  
\end{equation}
This, by Part (B) of Theorem~\ref{thm:math_adiabatic_limit}, implies the existence of the weak adiabatic limit \eqref{eq:wAL}.
\end{enumerate}
\end{thm}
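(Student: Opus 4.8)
The plan is to prove both parts simultaneously by induction on $k$, the number of interaction vertices, after the reduction of Section~\ref{sec:aux}: by \eqref{eq:w_g_adv} and Part~(B) of Lemma~\ref{lem:weak_massive_simple} everything comes down to controlling the distributions \eqref{eq:main_distribution_adv} near $q_1=\ldots=q_k=0$, and Part~(B) of Theorem~\ref{thm:math_adiabatic_limit} then turns \eqref{eq:thm_main2_adv} into the existence and $g$-independence of the weak adiabatic limit \eqref{eq:wAL} (which is the {\L}ojasiewicz value at the origin). The base case $k=0$ is immediate: $\Adv(\emptyset;J;P)=\Ret(\emptyset;J;P)=\T(J;P)$, so $\Dif$ vanishes, and after smearing with any $f\in\cS(\R^{4m})$ the Fourier transform of $f(x_1,\ldots,x_m)(\Omega|\T(J;P)\Omega)$ is a~smooth function; since there are no variables $y_1,\ldots,y_k$ this gives (A) (continuity) and makes (B) trivial (no momenta $q_1,\ldots,q_k$ occur).

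For the inductive step fix $k\geq1$ and assume (A) and (B) for every list with at most $k-1$ interaction vertices (and arbitrary $m$, $C_1,\ldots,C_m$, $P$). Put $I=(\cL^{(u_1)}_{l_1}(y_1),\ldots,\cL^{(u_k)}_{l_k}(y_k))$ and $J=(C^{(u_{k+1})}_1(x_1),\ldots,C^{(u_{k+m})}_m(x_m))$. By the representation of the generalized $\Dif$ product recalled in Section~\ref{sec:aux} I would write $(\Omega|\Dif(I;J;P)\Omega)$ as a~finite combination of VEVs $(\Omega|\aT(I_1)[\T(I_2),\Adv(I_3;J;P)]\Omega)$ with $I_1,I_2,I_3$ partitioning a~permutation of $I$ and $|I_3|\leq k-1$. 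Here $\aT(I_1)$ and $\T(I_2)$ are finite linear combinations of products of time-ordered products of sub-polynomials of interaction vertices, so by the hypothesis \ref{norm:wAL} together with Theorem~\ref{thm:main1} (which uses Assumption~\ref{asm}) their VEVs obey the normalization condition \ref{norm:wAL2}, i.e. have \underline{IR}-index $\omega'$; and $\Adv(I_3;J;P)$ has fewer than $k$ interaction vertices, so by the induction hypothesis its VEV has the IR-index \eqref{eq:main2_IR} of Part~(A) whenever some super-quadri-index attached to $I_3$ or $J$ is nonzero, and the weaker ``$O^\mathrm{dist}+c$'' form of Part~(B) otherwise.

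The heart of the argument is to feed these into Theorem~\ref{thm:product_F}. First I would apply its graded-commutator version to $[\T(I_2),\Adv(I_3;J;P)]$: that version only requires the hypotheses for sub-polynomial-index lists with at least one nonzero entry, and for those the induction hypothesis supplies the clean Part~(A) bound on $\Adv$ (a~nonzero sub-polynomial index already makes the total list nonzero, even if $\mathbf{u}=0$), so the additive constant never enters the commutator; hence $(\Omega|[\T(I_2),\Adv(I_3;J;P)]\Omega)$ has IR-index $d$ with respect to the variables of $I_2$ and $I_3$ for \emph{every} admissible $\mathbf{u}$, with no additive constant. Multiplying by $\aT(I_1)$ and applying Theorem~\ref{thm:product_F} once more in its product form, summing over the decomposition, and checking that the index arithmetic of Theorems~\ref{thm:product} and~\ref{thm:product_F} --- combining the \underline{IR}-indices of $\aT(I_1)$ and $\T(I_2)$ with the IR-index of $\Adv(I_3;J;P)$ --- reproduces exactly the exponent \eqref{eq:main2_IR}, I conclude that $(\Omega|\Dif(I;J;P)\Omega)$ has IR-index $d$ with respect to $y_1,\ldots,y_k$ for all lists $\mathbf{u}$, including $\mathbf{u}=0$.

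Finally I would apply Theorem~\ref{thm:split_gen} to $(\Omega|\Dif(I;J;P)\Omega)$. If at least one $u_j$ is nonzero then $\sum_i\dim(A_i)\ext_{\mathbf{u}}(A_i)\geq1$, so $d\leq0$ and $\min\{d,0\}=d$, and Part~(A) of Theorem~\ref{thm:split_gen} gives Part~(A) of the present theorem (the $\Ret$ version following from $\Ret=\Adv-\Dif$). If $\mathbf{u}=0$ then $d=1$, and Part~(B) of Theorem~\ref{thm:split_gen} delivers directly that both the $\Adv$ and the $\Ret$ sides of \eqref{eq:thm_main2_adv} equal $O^\mathrm{dist}(|(q_1,\ldots,q_k)|^{1-\varepsilon})+c$ for a~common $c$; Part~(B) of Theorem~\ref{thm:math_adiabatic_limit} then yields \eqref{eq:wAL}. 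I expect the main obstacle to be the bookkeeping of the inductive step: keeping straight, through products and graded commutators of $\Fa$-products, which variables carry the IR-index and which are spectators (the three cases of Theorem~\ref{thm:product_F}), and above all using the graded commutator as the device that removes the additive constant from the bound on $\Dif$ at $\mathbf{u}=0$ --- it is only because of this that Theorem~\ref{thm:split_gen}(B) can be invoked at the very last step to produce the finite value $c$ of the adiabatic limit.
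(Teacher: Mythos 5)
Your proposal follows the paper's own proof essentially step for step: induction on $k$ with the same base case, the representation \eqref{eq:dif_com2} of the generalized $\Dif$ product combined with Theorem~\ref{thm:product_F} (product and graded-commutator forms, fed by \ref{norm:wAL2} for $\T$, $\aT$ and by the induction hypothesis for $\Adv$ with fewer vertices), and finally Theorem~\ref{thm:split_gen}, with Part~(A) invoked for nonzero $\mathbf{u}$ (where $d\le 0$) and Part~(B) for $\mathbf{u}=0$ (where $d=1$), followed by Part~(B) of Theorem~\ref{thm:math_adiabatic_limit}. In particular, your key observation that the graded commutator is what removes the additive constant at $\mathbf{u}=0$ is exactly the mechanism the paper relies on (cf.\ Section~\ref{sec:idea}), so the proposal is correct and takes the same route as the paper.
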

\begin{proof}
We will prove the theorem by induction with respect to $k$. For $k=0$ by Equation \eqref{eq:def_gen_T} we have $\Adv(\emptyset;J;P)=\Ret(\emptyset;J;P) = \T(J;P)$. It follows that for any $f\in\cS(\R^{4m})$
\begin{equation}
 \int\!\mP{p_1}\ldots\mP{p_m}\tilde{f}(p_1,\ldots,p_m) 
 (\Omega|\T(\F{C}_1^{(u_{1})}(q'_1-p_1),\ldots,\F{C}_m^{(u_{m})}(q'_m-p_m);P)\Omega)
\end{equation}
is a~smooth function of $q'_1,\ldots,q'_m$. This implies Part~(A) for $k=0$ as a~result of Definition \ref{def:IR2} of the IR-index (note that $d\leq0$ since at least one super-quadri-index $u_1,\ldots,u_{m}$ is nonzero). Part~(B) is trivially true in this case. 

Now, let us assume that Part~(A) holds for $k\leq n-1$, $n\in\N_+$. We shall prove that both Part~(A) and (B) hold for $k=n$. We will first show that for $k=n$ and any list $\mathbf{u}=(u_1,\ldots,u_{k+m})$ of super-quadri-indices such that every $u_j$ involves only massless fields, including the case when all $u_j$ vanish, the distribution
\begin{equation}\label{eq:thm_main2_Dif}
 (\Omega|\Dif(\cL^{(u_1)}_{l_1}(y_1),\ldots,\cL^{(u_k)}_{l_k}(y_k);C_1^{(u_{k+1})}(x_1),\ldots,C_m^{(u_{k+m})}(x_m);P)\Omega)
\end{equation}
has IR-index $d$ given by \eqref{eq:main2_IR} with respect to the variables $y_1,\ldots,y_k$. The proof of this fact is based on the representation of $\Dif(I;J;P)$ as a combination of terms of the form \eqref{eq:dif_com2} and the statements (2'), (3') of Theorem \ref{thm:product_F}. Note that the assumptions of this theorem are satisfied because of the inductive assumption and the fact that the time-ordered and anti-time-ordered products satisfy the normalization condition \ref{norm:wAL2}.

Part~(A) for $k=n$ follows now from Part~(A) of Theorem~\ref{thm:split_gen} since by assumption in this case the IR-index $d$ is non-positive. To show Part~(B) for $k=n$, we first observe that the distribution
\begin{equation}\label{eq:thm_main2_Dif2}
 (\Omega|\Dif(\cL_{l_1}(y_1),\ldots,\cL_{l_k}(y_k);C_1(x_1),\ldots,C_m(x_m);P)\Omega)
\end{equation}
has IR-index $d=1$. Indeed, this is the distribution \eqref{eq:thm_main2_Dif} with $u_1=\ldots=u_{k+m}=0$. Part~(B) follows now from Part~(B) of Theorem~\ref{thm:split_gen}.
\end{proof}

\section{Compatibility of normalization conditions}\label{Sec:comp}

In this section we show the compatibility of the condition \ref{norm:wAL} with the standard normalization conditions usually imposed on the time-ordered products. In the first subsection we show that in purely massless models the condition \ref{norm:wAL} is implied by almost homogeneous scaling of VEVs of time-ordered products. In Section \ref{sec:central} we formulate the central normalization condition \ref{norm:gc} which is stronger than \ref{norm:wAL} and, in particular, fixes uniquely all time-ordered products of the sub-polynomials of the interaction vertex in the case of QED. Section \ref{sec:std_norm_con} contains the proof of the compatibility of the standard normalization conditions with \ref{norm:gc}, and thus with \ref{norm:wAL}. In Section~\ref{sec:properties} we list the properties of the Wightman and Green functions.

\subsection{Almost homogeneous scaling}\label{sec:hom_sc}

In this section we assume that all fields under consideration are massless and set $\CC=0$ in the axiom \ref{axiom7}. Following~\cite{hollands2001local,hollands2002existence} we introduce the definition of the almost homogeneous scaling of a~distribution.
\begin{dfn}\label{def:aHS}
A distribution $t\in \cS'(\R^N)$ scales almost homogeneously with degree $D\in\R$ and power $P\in\N_0$ iff
\begin{equation}\label{eq:a_h_d}
(\sum_{j=1}^N x_j\partial_{x_j}+D)^{P+1}t(x)=0
\end{equation}
and $P$ is the minimal natural number with this property. If $P=0$ the above condition states that the distribution $t$ is homogeneous of degree $-D$.
\end{dfn}

The condition \eqref{eq:a_h_d} is equivalent to
\begin{equation}
 (\rho\partial_\rho)^{P+1}\left(\rho^D t(\rho x)\right)=0.
\end{equation}
Thus, $t$ scales almost homogeneously with degree $D$ and power $P$ iff $\rho^D t(\rho x)$ is a~polynomial of $\log\rho$ with degree $P$. This implies that in particular $\sd(t)=D$, where $\sd(\cdot)$ is the Steinmann scaling degree (cf. Definition \ref{def:sd}). Moreover,
\begin{equation}
 (\rho\partial_\rho)^{P+1}\left(\rho^{D-N} \F{t}(q/\rho)\right)=0.
\end{equation}
Hence, it holds $\sd(\F{t})=N-D$.

If the axiom \ref{axiom7} holds with $\CC=0$, then the time-ordered products of polynomials of massless fields can be normalized such that the following condition holds.
\begin{enumerate}[label=\bf{N.aHS},leftmargin=*]
\item\label{norm:sc} Almost homogeneous scaling:
For all polynomials $B_1,\ldots,B_k\in\Fh$ of massless fields and their derivatives the distribution
\begin{equation}
 (\Omega|\T(B_1(x_1),\ldots,B_k(x_k))\Omega)
\end{equation}
scales almost homogeneously with degree 
\begin{equation}
 D = \omega + 4(k-1) = \sum_{j=1}^{k}\dim(B_j).
\end{equation}
\end{enumerate}
The above normalization condition is a~special case of the condition called {\it Scaling} in~\cite{dutsch2004causal}, which was imposed also for the time-ordered products of massive fields as a~substitute of \ref{axiom7}. It is clear that in the case of purely massless models \ref{norm:sc} is stronger than \ref{axiom7} since for any $t\in\cS'(\R^N)$ which scales almost homogeneously with degree $D$ it holds $\sd(t)=D$. The condition similar to \ref{norm:sc} was introduced for the first time in~\cite{hollands2001local,hollands2002existence} in the context of QFT in curved spacetime. For massless fields this method of normalization was used, for example, in~\cite{grigore2001scale,gracia2003improved,lazzarini2003improved}.

\begin{thm}
If all fields under consideration are massless and $\CC=0$ in \ref{axiom7}, then the normalization condition \ref{norm:sc} implies \ref{norm:wAL}.
\end{thm}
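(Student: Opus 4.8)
The plan is to compare the two normalization conditions directly at the level of the Fourier transform of the vacuum expectation value $(\Omega|\T(\cL^{(u_1)}_{l_1}(x_1),\ldots,\cL^{(u_k)}_{l_k}(x_k))\Omega)$, using the translation invariance of this distribution (axiom \ref{axiom1}) to write it as $(2\pi)^4\delta(q_1+\ldots+q_k)\,t(q_1,\ldots,q_{k-1})$ in accordance with Remark~\ref{rem:transl}. Condition \ref{norm:wAL} demands that $t$ has zero of order $\omega'$ at the origin in the sense of {\L}ojasiewicz, where $\omega'=4-\sum_{j=1}^k\dim(A^{u_j})$; since each $\cL_l^{(u_j)}$ is a polynomial in massless fields with $\dim(\cL_l^{(u_j)})=\dim(A^{u_j})$ (the derivatives counted in the super-quadri-index $u_j$ are those applied to generators, so this is just the canonical dimension bookkeeping), this is exactly $\omega'=\omega=4-\sum_{j}(4-\dim(B_j))$ with $B_j=\cL_{l_j}^{(u_j)}$ and $\CC=0$. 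So the task is to show that almost homogeneous scaling of the position-space distribution forces this {\L}ojasiewicz-type vanishing of its Fourier transform.

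First I would record the scaling behaviour of $t$. By \ref{norm:sc} the distribution $(\Omega|\T(B_1(x_1),\ldots,B_k(x_k))\Omega)$ scales almost homogeneously with degree $D=\sum_{j=1}^k\dim(B_j)$ and some power $P\in\N_0$, i.e. $\rho^D\,(\Omega|\T(B_1(\rho x_1),\ldots)\Omega)$ is a polynomial in $\log\rho$ of degree $P$. Pulling out the overall momentum-conservation delta, which itself scales homogeneously, one gets that $\F{\underline t}\equiv t$ on $\R^{4(k-1)}$ scales almost homogeneously with degree $4(k-1)-D=-\omega$ (this is the computation $\sd(\F t)=N-D$ in the excerpt, adapted to the reduced distribution: the reduced variable lives in $\R^{4(k-1)}$ and $D$ drops by the degree $4$ of the delta, or rather the exponent is read off from the displayed identity $(\rho\partial_\rho)^{P+1}(\rho^{D-N}\F t(q/\rho))=0$ with $N=4k$ and an extra $\rho^{-4}$ from the delta). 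Thus $t(q)$ satisfies $(\rho\partial_\rho)^{P+1}\big(\rho^{-\omega}t(q/\rho)\big)=0$; equivalently $\rho^{-\omega}t(\rho q)$ is a polynomial in $\log\rho$, so $t$ is a finite sum of terms each homogeneous of degree $\omega$ up to logarithmic factors.

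Next I would invoke the {\L}ojasiewicz theory to translate this scaling into a zero of order $\omega$ at the origin. The key elementary fact is: if $t\in\cS'(\R^N)$ scales almost homogeneously with degree $-\omega$ (power $P$) and $\omega>0$, then $t$ has {\L}ojasiewicz value $0$ at the origin together with all its derivatives of order $<\omega$. Concretely, for any $g\in\cS(\R^N)$ with $\int g=1$ one computes $\int t(q)g_\epsilon(q)\,\rd^N q/(2\pi)^N$: inserting a test function and rescaling $q\mapsto q/\epsilon$, almost homogeneity gives $\int t(q)g_\epsilon(q) = \epsilon^{\omega}\times(\text{polynomial in }\log\epsilon)\times\int t(q)g(q)$ up to the harmless contribution of a cutoff away from the origin (which vanishes faster than any power of $\epsilon$, exactly as in the proof of Theorem~\ref{thm:math_adiabatic_limit}). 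Since $\omega>0$ this tends to $0$, and the same argument applied to $\partial^\gamma t$, which scales almost homogeneously with degree $-\omega+|\gamma|$, shows $\partial^\gamma t(0)=0$ for $|\gamma|<\omega$; when $\omega\le 0$ there is nothing to prove because \ref{norm:wAL} imposes no condition (the exponent $\omega'$ is then $\le 0$ and $t$ is merely required to be a continuous function, resp.\ a constant, which is automatic, or the distribution vanishes by axiom \ref{axiom2}). This is precisely the statement that $t$ has zero of order $\omega'=\omega$ at the origin in the sense of {\L}ojasiewicz, which is \ref{norm:wAL}.

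The main obstacle I anticipate is handling the logarithmic corrections carefully: with $P\ge 1$ the scaled distribution is not homogeneous but a polynomial in $\log\rho$, so the estimate $\int t(q)g_\epsilon(q)=O(\epsilon^\omega\log^P\epsilon)$ must be extracted by differentiating the identity $(\rho\partial_\rho)^{P+1}(\rho^{-\omega}t(\rho q))=0$ and integrating, rather than by a naive scaling argument; one should verify that $\epsilon^\omega(\log\epsilon)^P\to 0$ as $\epsilon\searrow 0$ for $\omega>0$, which is true, and that the derivatives $\partial^\gamma t$ inherit almost homogeneous scaling with the shifted degree and the same (or smaller) power. A secondary, more bookkeeping-type point is to make sure the reduction from the $4k$-variable distribution with its delta factor to the $4(k-1)$-variable reduced distribution $t$ shifts the scaling degree by exactly $4$, so that the target exponent comes out as $\omega$ and not $\omega\pm 4$; this follows from $\F{\delta(q_1+\ldots+q_k)}$ being homogeneous of degree $-4k$ in momenta paired against $\delta$ homogeneous of degree $-4k$... more simply from the relation \eqref{eq:dist_trans_inv_F} together with $\sd$ additivity, but it must be stated cleanly. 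Everything else is a direct application of Definition~\ref{def:lojasiewicz}, Definition~\ref{def:aHS}, and the scaling identities already displayed in Section~\ref{sec:hom_sc}.
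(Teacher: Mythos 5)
Your proposal is correct and takes essentially the same route as the paper's proof: pass to momentum space, factor out the momentum-conservation delta, extract from \ref{norm:sc} that the reduced distribution $t$ scales almost homogeneously so that $t(\rho q)\sim\rho^{\omega}$ up to powers of $\log\rho$ (the paper records only the weaker consequence $\sd(t)=-\omega$ and then invokes the definition of the Steinmann scaling degree), and conclude that the {\L}ojasiewicz derivatives $\partial^\gamma t(0)$ vanish for $|\gamma|<\omega$, which is \ref{norm:wAL}. The only slip is in your dimensional bookkeeping: $\dim(\cL_l^{(u_j)})=\dim(\cL_l)-\dim(A^{u_j})=4-\dim(A^{u_j})$, not $\dim(A^{u_j})$; the identification $\omega'=\omega$ that you actually use is nevertheless correct, since $\dim(\cL_l)=4$ and $\CC=0$ for the purely massless models under consideration.
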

\begin{proof}
By the comment below Definition \ref{def:aHS} the following equality
\begin{equation}
 \sd\left( (\Omega|\T(\F{B}_1(q_1),\ldots,\F{B}_k(q_k))\Omega) \right) = \sum_{j=1}^{k} (4 -\dim(B_j))
\end{equation}
follows from the normalization condition \ref{norm:sc}. As a~result, we obtain
\begin{equation}
 \sd\big( t(q_1,\ldots,q_{k-1}) \big)  = \sum_{j=1}^{k} (4 -\dim(B_j))-4 = -\omega',
\end{equation}
where $\omega'$ is given by \eqref{eq:proof_omega_prime} and $t\in\cS'(\R^{4(k-1)})$ is such that
\begin{equation}
 (\Omega|\T(\F{B}_1(q_1),\ldots,\F{B}_k(q_k))\Omega) = (2\pi)^4 \delta(q_1+\ldots+q_k)\,t(q_1,\ldots,q_{k-1}).
\end{equation}
Thus, for any $g\in\cS(\R^{4(k-1)})$ and multi-index $\gamma$, $|\gamma|<\omega$ it holds
\begin{equation}
 \lim_{\epsilon\searrow 0} \int\mP{q_1}\ldots\mP{q_{k-1}} \, \partial^\gamma_qt(q_1,\ldots,q_{k-1})~ g_\epsilon(q_1,\ldots,q_{k-1}) = 0, 
\end{equation}
where $g_\epsilon$ is defined in terms $g$ as in Definition \ref{def:lojasiewicz} of the value of a~distribution at a~point in the sense of {\L}ojasiewicz. This implies the normalization condition \ref{norm:wAL}.
\end{proof}

\subsection{Central normalization condition}\label{sec:central}

In this section we formulate the normalization condition for the time-ordered products which is a generalization of a condition introduced by Epstein and Glaser in \cite{epstein1973role} for purely massive models. Let us describe the content of the latter condition. It involves the VEVs of the advanced products of the form 
\begin{equation}
 (\Omega|\Adv(B_1(x_1),\ldots,B_n(x_n);B_{n+1}(0))\Omega)
\end{equation}
and says that their Fourier transforms (which in the case of purely massive models is an analytic function in some neighborhood of the origin) have zero of order $\omega+1$ at the origin (the {\it central} and the most {\it symmetrical} point), where $\omega$ is given by \eqref{eq:omega}. The advanced products which fulfill the above condition are known in the literature as the {\it central} or {\it symmetrical} solutions of the splitting problem, or sometimes the {\it central} or {\it symmetrical} extensions.\footnote{Finding a solution of {\it the splitting problem}, which is {\it an extension} of a certain distribution initially defined outside the origin, is a part of the EG construction of the time-ordered products.} This condition fixes uniquely all the time-ordered products and is compatible with the standard normalization conditions listed in the next section. So far, the above normalization condition has been formulated rigorously only for theories without massless fields~\cite{epstein1973role}. However, an expectation that a condition of a similar type can also be imposed in the massive spinor QED has been expressed by some authors \cite{scharf2014,dutsch1990gauge,dutsch1996finite,hurth1995nonabelian}. 

Using the method of the proof of Theorem \ref{thm:main1} one can show \cite{duch2017massless} that it is possible to define the time-ordered products such that the following condition holds. It may be viewed as a generalization of the condition which was discussed above.
\begin{enumerate}[label=\bf{N.C},leftmargin=*]
\item\label{norm:gc}
Let $k\in\N_+$ and $A^{r_1},\ldots,A^{r_k}\in\Fa$ be arbitrary monomials. Consider the following distribution
\begin{equation}\label{eq:cent_dist}
 (\Omega|\T(A^{r_1}(x_1),\ldots,A^{r_k}(x_k))\Omega).
\end{equation}
\begin{enumerate}[label=(\arabic*),leftmargin=*]
\item The distribution \eqref{eq:cent_dist} has \underline{IR}-index  
\begin{equation}\label{eq:gc_index1}
  d_1 = 4 + \sum_{j=1}^{k} \dim(A^{r_j}) - 4k.
\end{equation}
\item If all of the super-quadri-indices $r_1,\ldots,r_k$ involve at least one massive field, then the distribution \eqref{eq:cent_dist} has \underline{IR}-index
\begin{equation}\label{eq:gc_index2}
 d_2=5+\sum_{j=1}^{k} (\dim(A^{r_j})+\CC) - 4k.
\end{equation}
\item 
If all but one of the super-quadri-indices $r_1,\ldots,r_k$ involve at least one massive field, then the distribution \eqref{eq:cent_dist} has \underline{IR}-index
\begin{equation}\label{eq:gc_index3}
 d_3=5+\sum_{j=1}^{k} \dim(A^{r_j}) -4k.
\end{equation}
\end{enumerate}
\end{enumerate}
The above condition significantly restricts the normalization freedom of the time-ordered products. Its usefulness lies in the fact that it respects many symmetries, e.g. the Poincar{\'e} symmetry (cf. Definition \ref{def:IR1} of the \underline{IR}-index). Part~(2) of \ref{norm:gc} fixes uniquely the time-ordered products to which it refers. Parts (1) and (2) of \ref{norm:gc} imply the normalization condition \ref{norm:wAL}. In the case of monomials which are products of massless fields the condition \ref{norm:gc} is equivalent to the normalization condition \ref{norm:sc}. 

Let us consider the application of the condition \ref{norm:gc} in QED. We recall that the interaction vertex in QED is $\cL=\overline{\psi}\slashed{A}\psi$, where $\psi$ is a massive Dirac spinor field and $A_\mu$ is a massless vector field. Since $\dim(\cL)=4$ we set $\CC=0$ in the axiom \ref{axiom7}. The time-ordered products satisfying the condition \ref{norm:gc} have the following property.
\begin{enumerate}[label=\bf{N.C${}_\textrm{QED}$},leftmargin=*]
 \item\label{norm:qed_spinor} The distribution
\begin{equation}\label{eq:norm_c_qed_dist}
 (\Omega|\T(B_1(x_1),\ldots,B_{n+1}(x_{n+1}))\Omega) 
\end{equation} 
has \underline{IR}-index 
\begin{equation}\label{eq:norm_qed_d}
 d  = 1 + \sum_{j=1}^{n+1} \dim(B_j) - 4n 
\end{equation}
for any $n\in\N_+$ and any sub-polynomials $B_1,\ldots,B_n$ of of the interaction vertex $\cL$ of QED such that $d \geq 0$.
\end{enumerate}
By Lemma \ref{lem:IR_der} the above normalization condition implies that the Fourier transform of the distribution \eqref{eq:norm_c_qed_dist} is of the form 
\begin{equation}\label{eq:qed_form}
 (2\pi)^4\delta(q_1+\ldots+q_{n+1}) \,t(q_1,\ldots,q_n),
\end{equation}
where $t(q_1,\ldots,q_n)$ has zero of order $d$ at $q_1=\ldots=q_n=0$ in the sense of {\L}ojasiewicz. The Fourier transform of the VEV of the corresponding advanced product 
\begin{equation}\label{eq:qed_adv}
 (\Omega|\Adv(B_1(x_1),\ldots,B_n(x_n),B_{n+1}(x_{n+1}))\Omega)
\end{equation}
has the same form. Consequently, the advanced product defined in terms of the time-ordered products satisfying \ref{norm:qed_spinor} may be interpreted as the central splitting solution.

It follows from the results of Section \ref{sec:freedom} that \ref{norm:qed_spinor} fixes uniquely all the time-ordered products of the sub-polynomials of the interaction vertex $\cL$. As we will show in the next section this condition implies all the standard normalization conditions: \ref{norm:u}, \ref{norm:p}, \ref{norm:pct}, \ref{norm:one} and \ref{norm:ward} for the time-ordered products of the sub-polynomials of the interaction vertex $\cL$ of QED. 

\subsection{Standard normalization conditions}\label{sec:std_norm_con}

In this section we list the standard normalization conditions which are usually imposed on the time-ordered products and argue that they are compatible with the normalization condition \ref{norm:gc}. 

\begin{enumerate}[label=\bf{N.U},leftmargin=*]
\item\label{norm:u} Unitarity:
\begin{equation}\label{eq:n_U}
 \T(B_1(x_1),\ldots,B_k(x_k))^* = \aT(B_k^*(x_k),\ldots,B_1^*(x_1))
\end{equation}
for all $B_1,\ldots,B_k\in\Fa$. The definitions of the adjoints in $\Fa$ and $L(\cD_0)$ are given in Section~\ref{sec:ff} and \ref{sec:Wick}, respectively.
\end{enumerate}
\begin{enumerate}[label=\bf{N.P},leftmargin=*]
\item\label{norm:p} Poincar\'e covariance: 
\begin{multline}\label{eq:n_P}
 U(a,\Lambda)\T(B_1(x_1),\ldots,B_k(x_k))U(a,\Lambda)^{-1} 
 \\
 = \T((\rho(\Lambda)B_1)(\Lambda x_1+a),\ldots,(\rho(\Lambda)B_k)(\Lambda x_k+a)),
\end{multline}
where $B_1,\ldots,B_k\in\Fa$, $\rho$ is the representation of $SL(2,\C)$ acting on $\mathcal{F}$ and $U$ is the unitary representation of the Poincar{\'e} group on $\cD_0$.
\end{enumerate}
\begin{enumerate}[label=\bf{N.CPT},leftmargin=*]
\item\label{norm:pct} Covariance with respect to the discrete group of CPT transformations (the charge conjugation, the spatial inversion and the time reversal). 
% \begin{multline}\label{eq:n_CPT}
%  U(g)\T(B_1(x_1),\ldots,B_k(x_k))U(g)^{-1} 
%  \\
%  = \T((\rho(g)B_1)(\Lambda(g)x_1),\ldots,(\rho(g)B_k)(\Lambda(g)x_k)),
% \end{multline}
% where $g$ is element of the CPT group, $U(g)$ is its representation on $\cD_0$, $\rho(g)$ -- its representation on $\Fa$.
\end{enumerate}

\begin{enumerate}[label=\bf{N.FE},leftmargin=*]
\item\label{norm:one} Let $B_1,\ldots,B_k\in\Fh$ be sub-polynomials of the interaction vertex, and let $A_i\in\mathcal{G}_0$ be a basic generator. It holds: 
\begin{multline}\label{eq:one}
 \hspace{-2mm}(\Omega|\T(A_i(x),B_1(x_1),\ldots,B_k(x_k))\Omega)
 \!= \!
 \sum_{j=1}^k\! \sum_{C\in\mathcal{G}} (-1)^{\mathbf{f}(C)(\mathbf{f}(B_1)+\ldots+\mathbf{f}(B_{j-1}))} \!
 \\
 \times(\Omega|\T(A_i(x),C(x_j))\Omega)
 ~(\Omega|\T\big(B_1(x_1),\ldots,\frac{\partial B_j}{\partial C}(x_j),\ldots,B_k(x_k)\big)\Omega),
\end{multline}
where the second sum is over generators $C\in\mathcal{G}$. This condition says that the time-ordered products with a basic generator among its arguments are uniquely determined by the time-order products with less arguments. It also implies that the interacting fields satisfy the field equations.
\end{enumerate}

\begin{enumerate}[label=\bf{N.W},leftmargin=*]
\item\label{norm:ward} Ward identities in QED~\cite{dutsch1999local}: For any $B_1,\ldots,B_{k}\in\Fh$ which are sub-polynomials of the interaction vertex $\mathcal{L}$ it holds
\begin{multline}\label{eq:n_W}
 \partial^x_{\mu}\T(j^\mu(x),B_1(x_1),\ldots,B_{k}(x_{k}))
 \\
 = \ri \sum_{j=1}^{k} \mathbf{q}(B_j) \,\delta(x_j-x)\,\T(B_1(x_1),\ldots,B_{k}(x_{k})),
\end{multline}
where $\mathbf{q}(B)$ is the charge number (cf. Section \ref{sec:ff}) and $j^\mu=\overline{\psi}\gamma^\mu\psi$ is the free electric current.
\end{enumerate}

For the proof that \ref{norm:u}, \ref{norm:p}, \ref{norm:pct}, \ref{norm:one}, \ref{norm:ward} (the last condition holds only in QED) can be simultaneously imposed see, for example, ~\cite{boas2000gauge,scharf2014}. The compatibility of \ref{norm:u}, \ref{norm:p}, \ref{norm:pct} with \ref{norm:gc} follows from the invariance of \ref{norm:gc} under Poincar{\'e} and CPT transformations and the fact that the condition \ref{norm:gc} implies the analogous condition for the VEVs of the anti-time-ordered products. 

The compatibility of \ref{norm:one} and \ref{norm:gc} is less straightforward. We will outline its proof in the case of QED. Assume that the time-ordered products with at most $n$ arguments satisfy \ref{norm:one} and \ref{norm:gc}. Define the time-ordered products with $n+1$ arguments such \ref{norm:gc} holds. We have
\begin{multline}\label{eq:one_A}
(\Omega|\T(B_1(x_1),\ldots,B_n(x_n),A_\mu(x))\Omega)
 =  \sum_{\substack{\gamma\\|\gamma|\leq d}} c_\gamma \partial^\gamma \delta(x_1-x)\ldots \delta(x_{n}-x) 
 \\
 + \ri \sum_{k=1}^n D_0^F(x_k-x)~
 (\Omega|\T\left(B_1(x_1),\ldots,\frac{\partial B_k}{\partial A^\mu}(x_k),\ldots,B_n(x_n) \right) \Omega),
\end{multline}
where $D_0^F(x)$ is the massless Feynman propagator, $c_\gamma\in\C$ are some constants, and
\begin{equation}
  d = 1+\sum_{j=1}^n\dim(B_j)-4n.
\end{equation}
Equation \eqref{eq:one_A} implies that
\begin{multline}\label{eq:one_A2}
 \hspace{-1mm}\square_x\,(\Omega|\T(B_1(x_1),\ldots,B_n(x_n),A_\mu(x))\Omega)
 =  \hspace{-0.5mm} \sum_{\substack{\gamma\\|\gamma|\leq d}}\hspace{-1mm} c_\gamma \partial^\gamma\square_x \delta(x_1-x)\ldots \delta(x_{n}-x) 
 \\
  -\ri \sum_{k=1}^n  \delta(x_k-x)~
 (\Omega|\T\left(B_1(x_1),\ldots,\frac{\partial B_k}{\partial A^\mu}(x_k),\ldots,B_n(x_n) \right) \Omega).
\end{multline}
Assume that $B_1,\ldots,B_n$ are sub-polynomials of the interaction vertex and $ d\geq 0$. The condition \ref{norm:gc} implies that the distribution in the last line of the above equation has \underline{IR}-index $ d+3$. Since the distribution on the LHS of \eqref{eq:one_A} has \underline{IR}-index $ d+1$, the distribution on the LHS of \eqref{eq:one_A2} has \underline{IR}-index $ d+3$. This implies that $c_\gamma=0$ and \ref{norm:one} holds if the basic generator $A_i$ is the vector potential $A_\mu$. To prove that \ref{norm:one} is satisfied when $A_i$ is the spinor field $\psi_a$ or $\overline{\psi}_a$ it is enough to use the fact that the Feynman propagator of a massive particle is smooth in the vicinity of zero.

Now let us suppose that the time-ordered products with at most $n$ arguments satisfy \ref{norm:ward} and \ref{norm:gc}. Define the time-ordered products with $n+1$ arguments such that \ref{norm:gc} holds. We will show that \ref{norm:ward} is satisfied for $k=n$. By the results of Appendix B of~\cite{dutsch1999local} it is enough to prove that
\begin{multline}\label{eq:proof_ward}
 \partial^x_{\mu}(\Omega|\T(j^\mu(x),B_1(x_1),\ldots,B_{n}(x_{n}))\Omega)
 \\
 -  \ri \sum_{j=1}^{n} \mathbf{q}(B_j) \,\delta(x_j-x)\,(\Omega|\T(B_1(x_1),\ldots,B_{n}(x_{n}))\Omega)
\end{multline}
vanishes if $B_1,\ldots,B_n$ are sub-polynomials of the interaction vertex of QED. We also know that the above distribution is of the form
\begin{equation}
 \sum_{\substack{\gamma\\|\gamma|\leq d+1}} c_\gamma \partial^\gamma \delta(x_1-x)\ldots \delta(x_{n}-x),
\end{equation}
where 
\begin{equation}
  d =3+\sum_{j=1}^n \dim(B_j) -4n.
\end{equation}
If $ d+1\geq 0$, then the constants $c_\gamma$ must be zero since by \ref{norm:gc} the distribution \eqref{eq:proof_ward} has \underline{IR}-index $ d+2$.
 
\subsection{Properties of Wightman and Green functions}\label{sec:properties}

If the time-ordered products satisfy the normalization conditions formulated in Section \ref{sec:std_norm_con}, then the Wightman and Green functions have a~number of important properties. In fact, the Wightman functions fulfill almost all the standard axioms, which are listed in~\cite{streater2000pct}, in the sense of formal power series in the coupling constants. It is only not clear whether the cluster decomposition property holds. Let $C_1,\ldots,C_m\in\Fa$ be arbitrary polynomials. The following conditions are satisfied.
\begin{enumerate}[leftmargin=*,label={(\arabic*)}]
\item Poincar{\'e} covariance:
\begin{equation}
 \Wig(C_1(x_1),\ldots,C_m(x_m))=\Wig((\rho(\Lambda)C_1)(\Lambda x_1+a),\ldots,(\rho(\Lambda)C_m)(\Lambda x_k+a)),
\end{equation}
where $\rho$ is the representation of $SL(2,\C)$ acting on $\Fa$. This property follows from the normalization condition \ref{norm:p}, Part (B) of Theorem \ref{thm:main2} and the fact that if $t\in \cS'(\R^{4m})$, $t(q_1,\ldots,q_m)=c+O^{\textrm{dist}}(|(q_1,\ldots,q_m)|^\delta)$ for some $c\in\C$ and $\delta>0$, then \begin{equation}
t(\Lambda q_1,\ldots,\Lambda q_n) \exp(\ri (\Lambda q_1 + \ldots + \Lambda q_n)\cdot a) = c+O^{\textrm{dist}}(|(q_1,\ldots,q_m)|^\delta).
\end{equation}
\item Spectrum condition: The Fourier transform of the Wightman function
\begin{equation}
 \Wig(\F{C}_1(p_1),\ldots,\F{C}_m(p_m))
\end{equation}
has the support contained in 
\begin{equation}
 \left\{(p_1,\ldots,p_m)\in\R^{4m}\,:\, \sum_{j=1}^m p_j =0, ~~
 \forall_{k}~\sum_{j=1}^k p_j \in \overline{V}^+ \right\}.
\end{equation}
For the proof of this property see \cite{epstein1973role} (p. 267).
\item Hermiticity:
\begin{equation}
 \overline{\Wig(C_1(x_1),\ldots,C_m(x_m))}=\Wig(C^*_m(x_m),\ldots,C^*_1(x_1)).
\end{equation}
The definition of the adjoint in $\Fa$ is given in Section~\ref{sec:ff}. In the proof of the above property we use the identity $C_\adv(g;h)^* =C^*_\adv(\bar g;\bar h)$, which follows from the normalization condition \ref{norm:u}.
\item Local (anti-)commutativity: For $C_k,C_{k+1}\in \Fh$ it holds
\begin{multline}
 \Wig(\ldots,C_k(x_k),C_{k+1}(x_{k+1}),\ldots)
 \\
 = (-1)^{\mathbf{f}(C_k)\mathbf{f}(C_{k+1})} \Wig(\ldots,C_{k+1}(x_{k+1}),C_k(x_k),\ldots)
\end{multline}
if $x_k$ and $x_{k+1}$ are spatially separated, i.e. $(x_k-x_{k+1})^2\leq 0$.
\item Positive definiteness condition: Let $j_0\in\N_+$, $\{1,\ldots,j_0\} \ni j\mapsto f_j \in S(\R^{4n_j})$ and $C_{j,1},\ldots,C_{j,n_j}\in\mathcal{F}$, where $n_j\in\N_+$. Then the formal power series  
\begin{multline}
 \sum_{j,k\in\N_0}\int\rd^4 x_1,\ldots\rd^4 x_{n_j}\rd^4 y_1,\ldots\rd^4 y_{n_k}\, \overline{f_j(x_1,\ldots,x_{n_j})}f_k(y_1,\ldots,y_{n_k})\,
 \\
 \Wig(C^*_{j,1}(x_1),\ldots,C^*_{j,n_j}(x_{n_j}),C_{k,1}(y_1),\ldots,C_{k,n_k}(y_{n_k}))
\end{multline}
is nonnegative in the sense of Definition~\ref{dfn:positive_formal} stated in Section~\ref{sec:poincare_state}. The above condition is satisfied in models which can be defined on the Fock space with a~positive-definite inner product. It follows from Theorem~\ref{thm:state_positive} formulated in Section~\ref{sec:poincare_state}.
\item Field equations: For example in the massless $\varphi^4$ theory it holds
\begin{multline}
 \square_x \Wig(C_1(x_1),\ldots,\varphi(x),\ldots,C_m(x_m)) 
 \\
 + \frac{\lambda}{3!} \Wig(C_1(x_1),\ldots,\varphi^3(x),\ldots,C_m(x_m)) = 0.
\end{multline}
Similar equations are satisfied in other models. This condition is a direct consequence of the normalization condition \ref{norm:one}. Note that it is not included in the list of the standard axioms~\cite{streater2000pct}.
\end{enumerate}

\noindent Now let us state the properties of the Green functions.
\begin{enumerate}[leftmargin=*,label={(\arabic*)}]
\item Poincar{\'e} covariance:
\begin{equation}
 \Gre(C_1(x_1),\ldots,C_m(x_m))=\Gre((\rho(\Lambda)C_1)(\Lambda x_1+a),\ldots,(\rho(\Lambda)C_m)(\Lambda x_m+a)),
\end{equation}
where $\rho$ is the representation of $SL(2,\C)$ acting on $\Fa$.

\item Graded-symmetry:
\begin{equation}
 \Gre(C_1(x_1),\ldots,C_m(x_m)) = (-1)^{\mathbf{f}(\pi)}\Gre(C_{\pi(1)}(x_{\pi(1)}),\ldots,C_{\pi(m)}(x_{\pi(m)})).
\end{equation}
for all $C_1,\ldots,C_m\in\Fh$, where $\mathbf{f}(\pi)\in\Z/2\Z$ is the number of transpositions in the permutation $\pi\in\mathcal{P}_m$ that involve a~pair of fields with odd fermion number.
\item Causality:
\begin{equation}
 \Gre(C_1(x_1),\ldots,C_m(x_m)) = \Wig(C_1(x_1),\ldots,C_m(x_m))
\end{equation}
if for all $j\in\{1,\ldots,m-1\}$ the point $x_{j}$ is not in the causal past of any of the points $x_{j+1},\ldots,x_m$.
\end{enumerate}

\section{Vacuum state}\label{sec:vacuum_state}

In this section we consider the framework of perturbative algebraic quantum field theory --- the formulation of perturbative QFT in terms of local abstract algebras \cite{fredenhagen2015perturbative,rejzner2016perturbative}. In the first subsection we introduce the notation and remind the reader the construction of the net of local abstract algebras of interacting fields \cite{brunetti2000microlocal}, known in the literature under the name of the algebraic adiabatic limit. In Section \ref{sec:poincare_state} we define a Poincar{\'e}-invariant functional on the algebra of interacting fields. In the case of models which are constructed on the Fock space with a positive-definite covariant inner product this functional is positive. As a result, it is a state which can be interpreted as an interacting vacuum state.

\subsection{Algebraic adiabatic limit}\label{sec:algebraic_adiabatic}

The construction of the local abstract algebras of interacting operators is based on the following observation made in \cite{brunetti2000microlocal}. For simplicity, we assume that there is only one interaction vertex $\mathcal{L}$. The switching function is denoted by $g$, the coupling constant -- by $e$ and the interacting operators by $C_\adv(g;h)=C_\adv(\mathbf{g};h)$, where $\mathbf{g}=eg\otimes \mathcal{L}$. Let $\mathcal{O}$ be an open bounded subset of $\R^4$ and $g,g'\in\mathcal{D}(\R^4)$ coincide in a neighborhood of $J^+(\mathcal{O})\cap J^-(\mathcal{O})$, where $J^\pm(\mathcal{O}):=\mathcal{O}+\overline{V}^\pm$ is the causal future/past of the set $\mathcal{O}$. Then, as a result of the causal factorization formula \eqref{eq:causal_fact}, there exists a unitary operator $V(g',g)$ such that
\begin{equation}
 V(g',g) C_\adv(g;h) V(g',g)^{-1} = C_\adv(g';h)
\end{equation}
for any $C\in\Fa$ and $h\in\mathcal{D}(\R^4)$, $\supp\,h\subset\mathcal{O}$. Let us introduce the following denotation
\begin{equation}
 \mathcal{D}_\mathcal{O} := \{ g\in \mathcal{D}(\R^4) \,:\, g \equiv 1 \textrm{ on a neighborhood of } J^+(\mathcal{O})\cap J^-(\mathcal{O})\}
\end{equation}
for bounded open sets $\mathcal{O}\subset\R^4$ and define for $h\in\mathcal{D}(\R^4)$ such that $\supp\, h\subset \mathcal{O}$ the function
\begin{equation}\label{eq:algebraic_adiabatic_int_field}
 C_\adv(\cdot;h): \mathcal{D}_\mathcal{O} \ni g \mapsto C_\adv(g;h)\in L(\mathcal{D}_0)\llbracket e\rrbracket,
\end{equation}
where $L(\mathcal{D}_0)\llbracket e\rrbracket$ is the space of formal power series in $e$ with coefficients in $L(\mathcal{D}_0)$. We define the addition, multiplication and conjugation of these maps by the pointwise operations. The local algebra of interacting fields localized in $\mathcal{O}$ denoted by $\mathfrak{F}(\mathcal{O})$ is by definition the ${}^*$-algebra over $\C\llbracket e\rrbracket$ with unity $\id$ generated by $C_\adv(\cdot;h)$ with $h\in\mathcal{D}(\R^4)$, $\supp \,h\subset \mathcal{O}$ and $C\in\Fa$. For $\mathcal{O}'\subset\mathcal{O}$ we define the embedding
\begin{equation}
 \iota_{\mathcal{O}\mathcal{O}'}:~\mathfrak{F}(\mathcal{O}')\rightarrow \mathfrak{F}(\mathcal{O})
\end{equation}
by the restriction of the function \eqref{eq:algebraic_adiabatic_int_field} to $\mathcal{D}_{\mathcal{O}}\subset\mathcal{D}_{\mathcal{O}'}$. We have 
\begin{equation}
 \iota_{\mathcal{O}\mathcal{O}'}\circ\iota_{\mathcal{O}'\mathcal{O}''}= \iota_{\mathcal{O}\mathcal{O}''}.
\end{equation}
The global algebra of interacting fields $\mathfrak{F}$ is the inductive limit of the net $\mathcal{O}\to\mathfrak{F}(\mathcal{O})$ with canonical embeddings
\begin{equation}
 \iota_\mathcal{O}: \mathfrak{F}(\mathcal{O}) \to \mathfrak{F}.
\end{equation}
The action of the Poincar{\'e} transformations on $\mathfrak{F}$ is given by the automorphisms defined in the following way on the generators of $\mathfrak{F}$
\begin{equation}
 \alpha_{a,\Lambda}(C_\adv(\cdot;h)) = (\rho(\Lambda^{-1})C)_\adv(\cdot;h_{a,\Lambda}),
\end{equation}
where $h_{a,\Lambda}(x) = h(\Lambda^{-1}(x-a))$ and $\rho$ is the representation of the Lorentz group acting on $\Fa$ introduced in Section~\ref{sec:ff}.

As shown in \cite{dutsch2001algebraic,fredenhagen2015perturbative} the net $\mathcal{O}\to\mathfrak{F}(\mathcal{O})$ constructed above fulfills the axioms: (1)~isotony, (2) Poincar{\'e} covariance and (3) Einstein causality (see \cite{haag2012local} for their definition) in the sense of formal power series. The construction of the algebra $\mathfrak{F}$ and the net $\mathcal{O}\to\mathfrak{F}(\mathcal{O})$ outlined above is called in the literature the algebraic adiabatic limit.

In the case of theories with local symmetries the algebra $\mathfrak{F}$ plays only an auxiliary role. From physical point of view more important is the algebra of observables, which is a certain quotient algebra consisting of equivalence classes of gauge-invariant fields. In the case of QED the construction of the algebra of observables was carried out by D{\"u}tsch and Fredenhagen in \cite{dutsch1999local} and in the case of non-abelian Yang--Mills theories without matter -- by Hollands in \cite{hollands2008renormalized}.

\subsection{Poincar{\'e}-invariant state}\label{sec:poincare_state}

Elements of $\mathfrak{F}$ will be denoted by ${\bf B}(\cdot)$. They are linear combinations of products of $C_\adv(\cdot;h)$ for some $C\in\Fa$ and $h\in\cD(\R^4)$. 
\begin{dfn}
A linear functional $\sigma:\,\mathfrak{F}\to\C\llbracket e\rrbracket$ is called a state if it is normalized $\sigma(\id)=1$, real $\sigma(\mathbf{B}(\cdot)^*)=\overline{\sigma(\mathbf{B}(\cdot))}$ and positive $\sigma(\mathbf{B}(\cdot)^*\, \mathbf{B}(\cdot))\geq 0$ for all $\mathbf{B}(\cdot)\in\mathfrak{F}$.
\end{dfn}
\begin{dfn}[\cite{dutsch1999local}]\label{dfn:positive_formal}
A formal power series $a\in\C\llbracket e\rrbracket$ is nonnegative iff there exists $b\in\C\llbracket e\rrbracket$ such that $a=\overline{b} b$. Equivalently, $a\in\R\llbracket e\rrbracket$ and the first non-vanishing coefficient of $a$ (if exists) is of even order and positive.
\end{dfn}
It turns out that the existence of the weak adiabatic limit allows to define a real, normalized and Poincar{\'e} invariant functional on the algebra of interacting fields $\mathfrak{F}$.
\begin{thm}
Let us assume that a model under consideration is purely massive or satisfies Assumption~\ref{asm} formulated in Section~\ref{sec:PROOF}. Suppose that the time-ordered products satisfy the standard normalization conditions listed in Section \ref{sec:std_norm_con} and the normalization condition \ref{norm:wAL} (the last condition is required only in the case of models with massless fields). The unique linear functional $\sigma:\mathfrak{F}\to\C\llbracket e\rrbracket$ given by
\begin{equation}\label{eq:state}
 \sigma(C_{1,\adv}(\cdot;h_1)\ldots C_{n,\adv}(\cdot;h_n)) = \lim_{\epsilon\searrow0} \, (\Omega|C_{1,\adv}(g_\epsilon;h_1)\ldots C_{n,\adv}(g_\epsilon;h_n) \Omega)
\end{equation}
for any $n\in\N_0$, $h_1,\ldots,h_n\in\mathcal{D}(\R^4)$ is real, normalized and Poincar{\'e} invariant.
\end{thm}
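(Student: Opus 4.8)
The plan is to verify that the functional $\sigma$ defined by \eqref{eq:state} is well defined, and then to check the three properties — normalization, reality, and Poincar\'e invariance — one at a time. First I would address well-definedness. The right-hand side of \eqref{eq:state} is a formal power series in $e$; expanding $C_{j,\adv}(g_\epsilon;h_j)$ according to the Bogoliubov formula \eqref{eq:bogoliubov_adv} and collecting terms of fixed order $e_{l_1}\cdots e_{l_n}$ gives exactly an expression of the type \eqref{eq:w_g_adv} with $P=(0,1,2,\ldots,m)$ (the Wightman-type partition), smeared against $g_\epsilon$ in the vertex variables and against the fixed $h_j$ in the external variables. By Part (B) of Theorem~\ref{thm:main2} — applicable since the model is purely massive (Section~\ref{sec:weak_massive_proof}) or satisfies Assumption~\ref{asm} with the time-ordered products normalized by \ref{norm:wAL} — this smeared distribution, as a function of the vertex momenta $q_1,\ldots,q_k$, is of the form $c + O^{\mathrm{dist}}(|(q_1,\ldots,q_k)|^{1-\varepsilon})$, so by Part~(B) of Theorem~\ref{thm:math_adiabatic_limit} the $\epsilon\searrow0$ limit exists and equals $c$, independently of the choice of $g$ with $g(0)=1$. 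Hence each coefficient of the formal power series converges and $\sigma$ is well defined; moreover it does not depend on the representative $g\in\mathcal{D}_\mathcal{O}$ used to present an element of $\mathfrak{F}$, because changing $g$ only rescales the adiabatic cutoff and the limit washes this out — this is precisely the independence statement built into the definition of the weak adiabatic limit. Linearity and $\sigma(\id)=1$ (the $n=0$ term) are then immediate, giving normalization.

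Next I would prove reality, $\sigma(\mathbf{B}(\cdot)^*)=\overline{\sigma(\mathbf{B}(\cdot))}$. It suffices to check this on products of generators. The key input is the unitarity normalization condition \ref{norm:u}, from which one derives the identity $C_\adv(g;h)^* = C^*_\adv(\bar g;\bar h)$ already used in Section~\ref{sec:properties} for the hermiticity of the Wightman functions; more generally $(C_{1,\adv}(g;h_1)\cdots C_{n,\adv}(g;h_n))^* = C^*_{n,\adv}(\bar g;\bar h_n)\cdots C^*_{1,\adv}(\bar g;\bar h_1)$. Applying this inside \eqref{eq:state}, taking the complex conjugate of the vacuum expectation value (which reverses the order of the operators and conjugates), and using that the switching functions $g_\epsilon$ may be chosen real (so $\bar g = g$), one sees that $\overline{\sigma(\mathbf{B}(\cdot))}$ equals the limit of $(\Omega|C^*_{n,\adv}(g_\epsilon;\bar h_n)\cdots C^*_{1,\adv}(g_\epsilon;\bar h_1)\Omega)$, which is $\sigma(\mathbf{B}(\cdot)^*)$.

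For Poincar\'e invariance, I would use the automorphisms $\alpha_{a,\Lambda}$ of $\mathfrak{F}$ from Section~\ref{sec:algebraic_adiabatic}, defined on generators by $\alpha_{a,\Lambda}(C_\adv(\cdot;h)) = (\rho(\Lambda^{-1})C)_\adv(\cdot;h_{a,\Lambda})$, and show $\sigma\circ\alpha_{a,\Lambda}=\sigma$. On the level of the regularized expectation values, the Poincar\'e covariance normalization \ref{norm:p} of the time-ordered products, together with the invariance of the vacuum $U(a,\Lambda)\Omega=\Omega$ and of the Lebesgue measure, gives
\begin{equation}
(\Omega|C_{1,\adv}(g_\epsilon;h_{1,a,\Lambda})\ldots C_{n,\adv}(g_\epsilon;h_{n,a,\Lambda})\Omega)
= (\Omega|(\rho(\Lambda)C_1)_\adv(g_{\epsilon,a,\Lambda}^{-1};h_1)\ldots\Omega),
\end{equation}
up to the precise bookkeeping of how the affine change of variables acts on the switching function; crucially the transformed switching function still satisfies $g_{\epsilon,a,\Lambda}^{-1}(0)=1$ and is of the admissible scaling form, so in the limit $\epsilon\searrow0$ the independence of the weak adiabatic limit from the choice of switching function removes it. This is exactly the mechanism already invoked in property~(1) of the Wightman functions in Section~\ref{sec:properties}: if $t(q)=c+O^{\mathrm{dist}}(|q|^\delta)$ then $t(\Lambda q)\exp(\ri(\sum\Lambda q_j)\cdot a)$ has the same form with the same constant $c$. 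Taking $\epsilon\searrow0$ then yields $\sigma(\alpha_{a,\Lambda}(\mathbf{B}(\cdot)))=\sigma(\mathbf{B}(\cdot))$. Uniqueness of $\sigma$ is clear since $\mathfrak{F}$ is generated as a ${}^*$-algebra by the $C_\adv(\cdot;h)$ and \eqref{eq:state} prescribes $\sigma$ on all their products.

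I expect the main obstacle to be the careful handling of well-definedness across the different presentations of an element of $\mathfrak{F}$: an element ${\bf B}(\cdot)$ is really an equivalence class of functions on $\mathcal{D}_\mathcal{O}$, and one must check that evaluating \eqref{eq:state} along the adiabatic family $g_\epsilon$ — which does not lie in $\mathcal{D}_\mathcal{O}$ for any fixed $\mathcal{O}$, only approaches the constant $1$ — produces a value depending only on the class and not on the intermediate cutoff. This is where Theorem~\ref{thm:main2}(B) does the real work, reducing the question to the $O^{\mathrm{dist}}$ behaviour near zero momentum; the rest (reality, covariance, normalization) is then a routine transcription of the normalization conditions \ref{norm:u} and \ref{norm:p} combined with the same limiting lemma.
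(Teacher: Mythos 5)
Your overall route is the same as the paper's: existence and $g$-independence of each coefficient of the series via Theorem~\ref{thm:main2}(B) (or the massive-case argument of Section~\ref{sec:weak_massive_proof}) combined with Part~(B) of Theorem~\ref{thm:math_adiabatic_limit}; reality from \ref{norm:u} through $C_\adv(g;h)^*=C^*_\adv(\bar g;\bar h)$ (hermiticity of the Wightman functions); Poincar\'e invariance from \ref{norm:p} via the Poincar\'e covariance of the Wightman functions; normalization trivially from the $n=0$ term.

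However, the one point you yourself single out as the main obstacle — that the value assigned by \eqref{eq:state} depends only on the element of $\mathfrak{F}$ and not on its presentation as a linear combination of products of generators — is handled incorrectly. You assert that $g_\epsilon$ ``does not lie in $\mathcal{D}_\mathcal{O}$ for any fixed $\mathcal{O}$'' and then claim that Theorem~\ref{thm:main2}(B) does the real work there. The first claim is false: membership in $\mathcal{D}_\mathcal{O}$ does not require support inside $\mathcal{O}$, only that the test function be identically $1$ on a neighborhood of $J^+(\mathcal{O})\cap J^-(\mathcal{O})$. Choosing $g\in\mathcal{D}(\R^4)$ with $g\equiv1$ on a ball centred at the origin which contains $J^+(\mathcal{O})\cap J^-(\mathcal{O})$, the dilation $g_\epsilon(x)=g(\epsilon x)$ equals $1$ on the enlarged ball, so $g_\epsilon\in\mathcal{D}_\mathcal{O}$ for every $\epsilon\in(0,1)$ — and this is precisely the observation with which the paper's proof begins. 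It is also what actually settles representation independence: for such $g$ the number $(\Omega|\mathbf{B}(g_\epsilon)\Omega)$ is the evaluation of the algebra element (a function on $\mathcal{D}_\mathcal{O}$) at a point of $\mathcal{D}_\mathcal{O}$, hence depends only on $\mathbf{B}(\cdot)$; in particular, a combination of products of generators that vanishes as an element of $\mathfrak{F}$ gives $0$ identically in $\epsilon$. Theorem~\ref{thm:main2}(B) cannot supply this by itself: it yields existence of the limit and its independence of the choice of $g$ with $g(0,\ldots,0)=1$ for each fixed string of generators, but says nothing about two different presentations of the same element of $\mathfrak{F}$. With this correction — restricting to such $g$, which is harmless exactly because the limit is $g$-independent — the rest of your argument goes through and coincides with the paper's proof.
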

\begin{proof}
Let $\mathcal{O}\subset\R^4$ be a~bounded region such that $\supp\,h_1,\ldots,\supp\,h_n\subset\mathcal{O}$.  For any $\mathcal{O}$ and $g\in\mathcal{D}(\mathcal{O})$ we have $g_\epsilon\in\mathcal{D}(\mathcal{O})$ for all $\epsilon\in(0,1)$, where $g_\epsilon(x):=g(\epsilon x)$. By the result of Section \ref{sec:weak_massive_proof} or Theorem~\ref{thm:main2} the weak adiabatic limit exists. This, in particular, implies that the limit on the RHS of Equation \eqref{eq:state} exists and is independent of $g$. As a result, the functional $\sigma$ is well defined. It is evident that $\sigma$ is normalized. Hermiticity of the Wightman functions implies that the functional $\sigma$ is real. Poincar{\'e} invariance of $\sigma$,
\begin{equation}
 \sigma(\,\alpha_{a,\Lambda}(\,\mathbf{B}(\cdot)\,)\,) = \sigma(\,\mathbf{B}(\cdot)\,)~~~\textrm{for all}~~ \mathbf{B}\in\mathfrak{F},
\end{equation}
is a consequence of Poincar{\'e} covariance of the Wightman functions. 
\end{proof}

\begin{thm}\label{thm:state_positive}
In the case of models defined on the Fock space with a positive-definite covariant inner product the functional $\sigma$ given by \eqref{eq:state} is positive (hence, it is a state). 
\end{thm}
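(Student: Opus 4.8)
The plan is to follow the strategy used by D\"utsch and Fredenhagen for QED: reduce the positivity of $\sigma$ to the positive-definiteness of the covariant inner product on $\cD_0$ together with the existence of the weak adiabatic limit (Theorem~\ref{thm:main2}, or its massive counterpart from Section~\ref{sec:weak_massive_proof}). By the equivalent characterization in Definition~\ref{dfn:positive_formal}, it suffices to show that for every $\mathbf{B}(\cdot)\in\mathfrak{F}$ the formal power series $\sigma(\mathbf{B}(\cdot)^*\mathbf{B}(\cdot))$ is nonnegative. Choose a bounded open $\mathcal{O}\subset\R^4$ containing the supports of all test functions occurring in $\mathbf{B}(\cdot)$ and pick $g\in\mathcal{D}_\mathcal{O}$; then $g_\epsilon\in\mathcal{D}_\mathcal{O}$ for all $\epsilon\in(0,1)$, and, expanding the operator $\mathbf{B}(g_\epsilon)\in L(\cD_0)\llbracket e\rrbracket$ as $\mathbf{B}(g_\epsilon)=\sum_{k\geq0}e^k\,\mathbf{B}^{(k)}(g_\epsilon)$ and setting $\Psi^{(k)}_\epsilon:=\mathbf{B}^{(k)}(g_\epsilon)\Omega\in\cD_0$, one has (since $e$ is a real parameter)
\begin{equation}
 (\Omega|\mathbf{B}(g_\epsilon)^*\mathbf{B}(g_\epsilon)\Omega)=(\mathbf{B}(g_\epsilon)\Omega|\mathbf{B}(g_\epsilon)\Omega)=\sum_{N\geq0}e^N\sum_{k=0}^N(\Psi^{(k)}_\epsilon|\Psi^{(N-k)}_\epsilon).
\end{equation}
Because the inner product on $\cD_0$ is positive-definite by hypothesis, the right-hand side is a nonnegative formal power series for each fixed $\epsilon$: if it does not vanish, its first non-vanishing coefficient sits at the even order $2m(\epsilon)$, where $m(\epsilon)=\min\{k:\Psi^{(k)}_\epsilon\neq0\}$, and equals $\|\Psi^{(m(\epsilon))}_\epsilon\|^2>0$. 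By the definition of $\sigma$ and the existence of the weak adiabatic limit, $\sigma(\mathbf{B}(\cdot)^*\mathbf{B}(\cdot))=\lim_{\epsilon\searrow0}(\Omega|\mathbf{B}(g_\epsilon)^*\mathbf{B}(g_\epsilon)\Omega)$ coefficientwise in $e$, so it remains to pass nonnegativity through this limit.

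Two facts make this work. First, at zeroth order in $e$ the interacting fields coincide with the free fields, so $\Psi^{(0)}_\epsilon=\mathbf{B}^{(0)}\Omega$ is independent of $\epsilon$; hence the order-zero coefficient of $\sigma(\mathbf{B}(\cdot)^*\mathbf{B}(\cdot))$ is the fixed number $\|\mathbf{B}^{(0)}\Omega\|^2\geq0$, and if it is positive we are done. Second, for each $n$ the quantity $\|\Psi^{(n)}_\epsilon\|^2$ is a finite sum of integrals of the switching functions $g_\epsilon$ against vacuum expectation values of products of generalized advanced/retarded and anti-time-ordered products --- use the unitarity normalization \ref{norm:u} to rewrite the adjoints --- and these are precisely of the type controlled in Section~\ref{Sec:existence_wAL}; consequently $\|\Psi^{(n)}_\epsilon\|^2$ converges as $\epsilon\searrow0$, and in particular $\|\Psi^{(n)}_\epsilon\|$ is bounded uniformly in $\epsilon$. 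Granting these, let $N_0$ be the order of the first non-vanishing coefficient of $\sigma(\mathbf{B}(\cdot)^*\mathbf{B}(\cdot))$. Proceeding by induction on the vanishing coefficients: for each $j<N_0/2$ one finds $\lim_{\epsilon\searrow0}\|\Psi^{(j)}_\epsilon\|^2=0$ (the diagonal term of the order-$2j$ coefficient), hence $\Psi^{(j)}_\epsilon\to0$ in $\cD_0$; every cross term $(\Psi^{(j)}_\epsilon|\Psi^{(j')}_\epsilon)$ appearing in a coefficient of order $<2\lceil N_0/2\rceil$ then has $\min(j,j')<N_0/2$, and by the Cauchy--Schwarz inequality it is bounded by $\|\Psi^{(j)}_\epsilon\|\,\|\Psi^{(j')}_\epsilon\|$, in which the factor with the smaller index tends to $0$ while the other stays bounded, so it tends to $0$. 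Therefore $N_0$ cannot be odd, $N_0=2k$, and its coefficient equals $\lim_{\epsilon\searrow0}\|\Psi^{(k)}_\epsilon\|^2$, which is $\geq0$ and, being non-zero, is strictly positive. Hence $\sigma(\mathbf{B}(\cdot)^*\mathbf{B}(\cdot))$ is nonnegative, i.e.\ $\sigma$ is a state; the positive-definiteness of the Wightman functions (property~(5) in Section~\ref{sec:properties}) follows by applying this to $\mathbf{B}(\cdot)$ taken to be a sum of smeared products of interacting fields.

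The main obstacle is the second fact above: nonnegativity of formal power series is not preserved under arbitrary coefficientwise limits, and one must rule out that divergent infrared contributions conspire to produce, in the limit, a non-vanishing coefficient of odd order or of negative sign. This is what the finiteness of each individual norm $\|\mathbf{B}^{(n)}(g_\epsilon)\Omega\|^2$ prevents. Establishing that finiteness is of the same nature as, and is provable by the same methods used for, the existence of the weak adiabatic limit: once the adjoints are eliminated via \ref{norm:u}, one is left with vacuum expectation values of products of advanced, retarded and anti-time-ordered products smeared against $g_\epsilon$, i.e.\ exactly the objects estimated in Sections~\ref{sec:prod}--\ref{sec:PROOF} (IR-index of products of $\Fa$-products, and the splitting theorems). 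Everything else is the elementary Cauchy--Schwarz bookkeeping described above.
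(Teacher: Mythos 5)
Your overall reduction (show that the first non-vanishing coefficient of $\sigma(\mathbf{B}(\cdot)^*\mathbf{B}(\cdot))$ is of even order and positive, in the sense of Definition~\ref{dfn:positive_formal}) is the right target, but the route you take has a genuine gap at exactly the point you flag as the ``second fact''. You need, for every $n$, convergence (or at least uniform boundedness in $\epsilon$) of $\|\mathbf{B}^{(n)}(g_\epsilon)\Omega\|^2$, and you claim this is ``precisely of the type controlled'' by Theorem~\ref{thm:main2}. It is not: Theorem~\ref{thm:main2} controls the coefficients of the $e$-expansion of the IR-regularized Wightman functions, i.e.\ VEVs of generalized products $\Adv(I;J;P)$ in which the interaction vertices are summed over \emph{all} distributions among the factors; the coefficient of order $e^{2n}$ of $(\Omega|\mathbf{B}(g_\epsilon)^*\mathbf{B}(g_\epsilon)\Omega)$ is $\sum_{a+b=2n}(\Psi^{(a)}_\epsilon|\Psi^{(b)}_\epsilon)$, and the paper's results give the limit of this sum, not of the individual $(n,n)$ piece $\|\Psi^{(n)}_\epsilon\|^2$. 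Isolating that piece would require redoing the inductive IR-index analysis with a finer bookkeeping (a fixed split of the perturbative order between the starred and unstarred halves), which is nowhere established; moreover, statements about norms of vectors like $\mathbf{B}^{(n)}(g_\epsilon)\Omega$ in the adiabatic limit have a strong-adiabatic-limit flavour, and this is precisely the kind of quantity that is delicate (and in general problematic) in theories with massless particles. Without that input your Cauchy--Schwarz induction collapses, because the cross terms are bounded by a factor that tends to zero times a factor you cannot control.

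The paper avoids all limit interchange by a different observation, which you might incorporate: by Lemma~\ref{lem:adiabatic_first} (a consequence of the intertwiners $V(g',g)=\id+O(e)$ coming from causal factorization), the \emph{first} non-vanishing coefficient $B(\cdot)$ of $\mathbf{B}(\cdot)$ is independent of the switching function, hence $B(g_\epsilon)=B(g)$ for all $\epsilon$. All coefficients of $(\Omega|\mathbf{B}(g_\epsilon)^*\mathbf{B}(g_\epsilon)\Omega)$ below order $2k$ (where $e^k$ is the order of $B$) vanish identically, and the order-$2k$ coefficient is the $\epsilon$-independent number $\|B(g)\Omega\|^2$, which is strictly positive because $B(g)$ is a nonzero local (smeared Wick) operator and such operators do not annihilate the vacuum, the covariant inner product being positive definite. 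Since Definition~\ref{dfn:positive_formal} only requires the first non-vanishing coefficient to be positive and of even order, this finishes the proof with no control over higher-order norms needed. If you want to keep your limiting strategy, you must first prove the adiabatic-limit statement for the fixed-split VEVs; as it stands, that step is asserted rather than proved and does not follow from the results of Sections~\ref{sec:prod}--\ref{sec:PROOF} as stated.
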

\begin{proof}
Let $\mathcal{O}$ be a~bounded open set in $\R^4$, ${\bf B}(\cdot)\in\mathfrak{F}(\mathcal{O})$, $g\in\mathcal{D}_\mathcal{O}$ and $g_\epsilon(x):=g(\epsilon x)$. We have to show that the formal power series
\begin{equation}
 \lim_{\epsilon\searrow 0}\,(\Omega|{\bf B}(g_\epsilon)^*\, {\bf B}(g_\epsilon)  \Omega) 
\end{equation}
is nonnegative in the sense of Definition~\ref{dfn:positive_formal}. First note that the coefficients of this series are real-valued. The series is clearly nonnegative if ${\bf B}(\cdot)=0$. Assume that ${\bf B}(\cdot)\neq0$ and denote by $B(\cdot)$ the first non-vanishing coefficient of the expansion of ${\bf B}(\cdot)$ in powers of $e$. It follows from Lemma~\ref{lem:adiabatic_first} that
\begin{equation}
 \lim_{\epsilon\searrow 0}\,B(g_\epsilon) = B(g).
\end{equation}
The operator $B(g) \in L(\mathcal{D}_0)$ is a nonzero local operator, i.e. a Wick polynomial (but not necessarily a Wick polynomial at a single point) smeared with some test function of compact support. Since nonzero local operators do not annihilate the vacuum and the covariant inner product on the Fock space is positive definite \cite{streater2000pct}, we have 
\begin{equation}
 (\Omega|B(g)^* B(g)  \Omega) > 0,
\end{equation}
which finishes the proof.
\end{proof}
\begin{lem}\label{lem:adiabatic_first}
Let $\mathcal{O}$ be a~bounded open set in $\R^4$ and ${\bf B}(\cdot)\in\mathfrak{F}(\mathcal{O})$. 
\begin{enumerate}[leftmargin=*,label={(\Alph*)}]
 \item If ${\bf B}(g)=O(e^n)$ for some $g\in\mathcal{D}_\mathcal{O}$, then ${\bf B}(g')=O(e^n)$ for all $g'\in\mathcal{D}_\mathcal{O}$.
 \item Assume that ${\bf B}(\cdot)\neq 0$ and let $B(\cdot)$ be the first non-vanishing coefficient in the expansion of ${\bf B}(\cdot)$ in powers of $e$. Then $B(g)=B(g')$ for all $g,g'\in\mathcal{D}_\mathcal{O}$.
\end{enumerate}
\end{lem}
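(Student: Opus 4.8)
The plan is to reduce both parts to the unitary intertwining operators $V(g',g)$ recalled in Section~\ref{sec:algebraic_adiabatic}. If $g,g'\in\mathcal{D}_\mathcal{O}$, then both functions equal $1$ on a neighborhood of $J^+(\mathcal{O})\cap J^-(\mathcal{O})$, so they coincide there; hence the causal factorization property \eqref{eq:causal_fact} supplies a unitary $V(g',g)$ with
\begin{equation}
 V(g',g)\, C_\adv(g;h)\, V(g',g)^{-1} = C_\adv(g';h)
\end{equation}
for every $C\in\Fa$ and every $h\in\mathcal{D}(\R^4)$ with $\supp\,h\subset\mathcal{O}$. Since conjugation by a fixed operator is a unital ${}^*$-algebra homomorphism and every ${\bf B}(\cdot)\in\mathfrak{F}(\mathcal{O})$ is, by definition of $\mathfrak{F}(\mathcal{O})$, a polynomial over $\C\llbracket e\rrbracket$ in the generators $C_\adv(\cdot;h)$ with $\supp\,h\subset\mathcal{O}$, the intertwining relation lifts to
\begin{equation}\label{eq:lem_intertwine}
 {\bf B}(g') = V(g',g)\,{\bf B}(g)\,V(g',g)^{-1}.
\end{equation}

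The key quantitative input is that $V(g',g)$, being assembled from the generating functional $S(\cdot)$ which satisfies $S(0)=\id$, is a formal power series in the coupling constant $e$ with $V(g',g)=\id+O(e)$, and likewise $V(g',g)^{-1}=\id+O(e)$.

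Part~(A) is then immediate from \eqref{eq:lem_intertwine}: its right-hand side is a product of three formal power series whose lowest non-vanishing orders are $0$, $n$ and $0$, respectively, so ${\bf B}(g')=O(e^n)$. For Part~(B), note first that by Part~(A) the order $n$ at which ${\bf B}(\cdot)$ first fails to vanish is independent of the argument in $\mathcal{D}_\mathcal{O}$, so $B(\cdot)$ is precisely the coefficient of $e^n$. Writing $V(g',g)=\id+\sum_{j\geq 1}e^j V_j$ and ${\bf B}(g)=e^n B(g)+O(e^{n+1})$, substitute into \eqref{eq:lem_intertwine} and collect the coefficient of $e^n$: every term containing at least one factor $V_j$ with $j\geq 1$ contributes at order $\geq e^{n+1}$, so only $\id\cdot B(g)\cdot\id$ survives. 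Hence the $e^n$-coefficient of ${\bf B}(g')$ equals $B(g)$, i.e. $B(g')=B(g)$.

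There is no serious obstacle here: the only points needing a little care are the passage from the generator-level intertwining relation of Section~\ref{sec:algebraic_adiabatic} to \eqref{eq:lem_intertwine} for arbitrary elements of $\mathfrak{F}(\mathcal{O})$ (a formal consequence of $V(g',g)$ being a fixed invertible operator and $\mathfrak{F}(\mathcal{O})$ being generated by the $C_\adv(\cdot;h)$), and the observation that $V(g',g)=\id+O(e)$ (a consequence of $S(0)=\id$ in the construction recalled above); everything else is elementary order counting in $\C\llbracket e\rrbracket$.
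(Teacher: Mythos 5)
Your proof is correct and follows essentially the same route as the paper: the paper also deduces the lemma from the intertwining relation $V(g',g)\,{\bf B}(g)\,V(g',g)^{-1}={\bf B}(g')$ together with $V(g',g)=\id+O(e)$, with the rest being the order counting in $\C\llbracket e\rrbracket$ that you spell out.
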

\noindent The above lemma follows immediately from the following fact.
\begin{lem}
Let $\mathcal{O}$ be a~bounded region in $\R^4$ and ${\bf B}(\cdot)\in\mathfrak{F}(\mathcal{O})$. For any $g,g'\in\mathcal{D}_\mathcal{O}$ it holds
\begin{equation}
 V(g',g) {\bf B}(g) V(g',g)^{-1} = {\bf B}(g'),
\end{equation} 
where 
\begin{equation}
 V(g',g) = \id + O(e).
\end{equation}
\end{lem}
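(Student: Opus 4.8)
The plan is to unpack the construction of the unitary $V(g',g)$ from the causal factorization formula \eqref{eq:causal_fact}, then read off both the normalization $V(g',g)=\id+O(e)$ and the intertwining property for all of $\mathfrak{F}(\mathcal{O})$. First I would set up the geometric decomposition. Since $g,g'\in\mathcal{D}_\mathcal{O}$ both equal $1$ on a neighborhood of $J^+(\mathcal{O})\cap J^-(\mathcal{O})$, the compactly supported function $g'-g$ vanishes there, so its support is covered by the two open sets $U_+:=\R^4\setminus J^-(\mathcal{O})$ and $U_-:=\R^4\setminus J^+(\mathcal{O})$. Choosing a smooth real partition of unity subordinate to $\{U_+,U_-\}$ on a neighborhood of $\supp(g'-g)$ I would write $g'-g=\chi_++\chi_-$ with $\chi_\pm$ real and compactly supported, $\supp\chi_+\cap J^-(\mathcal{O})=\emptyset$ and $\supp\chi_-\cap J^+(\mathcal{O})=\emptyset$. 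By the support property $\supp\Adv(I;J)\subset\Gamma^+_{n,m}$, the advanced interacting field $C_\adv(g;h)$ with $\supp h\subset\mathcal{O}$ depends on $g$ only through its restriction to $J^+(\mathcal{O})$; since $\chi_-$ vanishes on $J^+(\mathcal{O})$ this gives $C_\adv(g';h)=C_\adv(g+\chi_+;h)$ for all such $h$.

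Next I would apply causal factorization. Writing $\mathbf{g}=eg\otimes\mathcal{L}$, $\mathbf{e}_+=e\chi_+\otimes\mathcal{L}$ and $\mathbf{h}=h\otimes C$, the disjointness $\supp\chi_+\cap J^-(\mathcal{O})=\emptyset$ says precisely that $\supp\mathbf{h}$ does not meet the causal future of $\supp\mathbf{e}_+$, so \eqref{eq:causal_fact} yields $S(\mathbf{e}_++\mathbf{g}+\mathbf{h})=S(\mathbf{e}_++\mathbf{g})S(\mathbf{g})^{-1}S(\mathbf{g}+\mathbf{h})$, and a one-line rearrangement gives
\begin{equation}
 S(\mathbf{g}+\mathbf{e}_++\mathbf{h})\,S(\mathbf{g}+\mathbf{e}_+)^{-1}=V\,\big[S(\mathbf{g}+\mathbf{h})S(\mathbf{g})^{-1}\big]\,V^{-1},\qquad V:=S(\mathbf{e}_++\mathbf{g})\,S(\mathbf{g})^{-1}.
\end{equation}
Differentiating in $\mathbf{h}$ at $\mathbf{h}=0$ and using $C_\adv(g';h)=C_\adv(g+\chi_+;h)$ produces $V(g',g)\,C_\adv(g;h)\,V(g',g)^{-1}=C_\adv(g';h)$ with $V(g',g):=V$; and since $\mathcal{L}=\mathcal{L}^*$, the functions $g$ and $g+\chi_+$ are real, and the time-ordered products obey the unitarity condition \ref{norm:u}, each $S(\cdot)$ appearing here is unitary, hence so is $V(g',g)$.

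The normalization is then immediate: for any $\mathbf{a}$ proportional to $e$ the series $S(\mathbf{a})=\sum_{n\ge0}\frac{\ri^n}{n!}(\cdots)^n$ has $n$-th term of order $e^n$, so $S(\mathbf{a})=\id+O(e)$ and likewise $S(\mathbf{a})^{-1}=\id+O(e)$; a product of two such operators is again $\id+O(e)$, so $V(g',g)=\id+O(e)$.

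Finally I would extend the intertwining from the generators to an arbitrary ${\bf B}(\cdot)\in\mathfrak{F}(\mathcal{O})$. Conjugation $X\mapsto V(g',g)\,X\,V(g',g)^{-1}$ is a unital ${}^*$-automorphism of $L(\mathcal{D}_0)\llbracket e\rrbracket$ — unital since it fixes $\id$, and ${}^*$-preserving since $V(g',g)$ is unitary. As $\mathfrak{F}(\mathcal{O})$ is generated over $\C\llbracket e\rrbracket$ by $\id$ and the $C_\adv(\cdot;h)$ with $\supp h\subset\mathcal{O}$, and the automorphism sends each such generator evaluated at $g$ to the same generator evaluated at $g'$ (for linear combinations and products this is clear; for adjoints one uses $C_\adv(g;h)^*=C^*_\adv(g;\bar h)$ from \ref{norm:u}, whose test function $\bar h$ still has support in $\mathcal{O}$), it follows that $V(g',g)\,{\bf B}(g)\,V(g',g)^{-1}={\bf B}(g')$ for all ${\bf B}(\cdot)\in\mathfrak{F}(\mathcal{O})$. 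The main obstacle is the first step: producing the decomposition $g'-g=\chi_++\chi_-$ with the correct support disjointness and carefully tracking which piece is invisible to the advanced field and which is peeled off by causal factorization; once $V(g',g)$ is exhibited as a product of $S$-operators with arguments proportional to $e$, the two claims of the lemma are routine.
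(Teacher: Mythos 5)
Your proposal is correct and follows essentially the same route as the paper, which states this lemma without a separate proof and rests it on exactly the causal-factorization observation of Section 5.1 (taken from Brunetti--Fredenhagen): split $g'-g$ into a piece invisible to the advanced fields localized in $\mathcal{O}$ and a piece peeled off by the factorization property \eqref{eq:causal_fact}, yielding $V(g',g)$ as a product of $S$-matrices with arguments of order $e$, hence $V(g',g)=\id+O(e)$, and then extend the intertwining from the generators $C_\adv(\cdot;h)$ to all of $\mathfrak{F}(\mathcal{O})$ by the homomorphism property of conjugation. Your verification of the support conditions (that $\chi_-$ does not affect $C_\adv(\cdot;h)$ with $\supp h\subset\mathcal{O}$, and that $\supp h$ avoids the causal future of $\supp\chi_+$ so that \eqref{eq:causal_fact} applies) is exactly the content the paper leaves implicit.
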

In the case of models with vector fields such as QED the covariant inner product on $\mathcal{D}_0$ is indefinite and the functional $\sigma$ given by \eqref{eq:state} is not positive. Nevertheless, it is expected that it can be used to define a Poincar\'e-invariant state on the algebra of observables.

\section{Summary and outlook}

We proved the existence of the weak adiabatic limit in the Epstein--Glaser approach to perturbative QFT in a large class of models. The result implies the existence of the Wightman and Green functions, which have most of the standard properties following from the Wightman axioms. It can also be used to construct a real, normalized and Poincar\'e-invariant functional on the algebra of interacting fields obtained by means of the algebraic adiabatic limit. In the case of models which are defined on the Fock space with a~positive-definite covariant inner product we proved that this functional is positive in the sense of formal power series. Consequently, it is a state which we interpret as an interacting vacuum state.

In the construction of models containing vector fields one uses a covariant inner product which is not positive definite \cite{scharf2016gauge}. Consequently, in such models the above-mentioned functional is indefinite and in particular is not a state. However, in the case of models with vector fields besides the algebra of interacting fields one considers also the algebra of observables. In fact, only the latter algebra has a direct physical interpretation. It is defined as a quotient algebra consisting of equivalence classes of gauge-invariant fields. The algebras of observables in quantum electrodynamics and non-abelian Yang--Mills theories without matter fields were constructed in \cite{dutsch1999local} and \cite{hollands2008renormalized}, respectively. Moreover, it was shown that each of the sub-algebras of these algebras consisting of observables localized in a~bounded open subset of the Minkowski spacetime can be represented on a pre-Hilbert space with a positive-definite inner product (in fact, in~\cite{hollands2008renormalized} this was proved even in the case of curved spacetime). The definition of a state on the global algebra of observables, in particular the definition of a Poincar{\'e}-invariant state, is an open problem. An interesting generalization of our results would be a construction of a Poincar\'e-invariant functional on the algebra of observables in the case of quantum electrodynamics and non-abelian Yang--Mills theories without matter fields, and a proof that resulting functional is positive.

The techniques developed in the proof of the existence of the weak adiabatic limit were also used to establish that one can define the time-ordered products in such a way that they satisfy the central normalization condition. This result implies, in particular, the existence of the central splitting solution in quantum electrodynamics. The time-ordered products of sub-polynomials of the interaction vertex of quantum electrodynamics normalized in this way are fixed uniquely and satisfy the standard normalization conditions, including the Ward identities.

\section*{Acknowledgements}
I would like to express my gratitude to Andrzej Herdegen for discussions and many helpful comments. I am grateful to the referees for careful reading of the manuscript and a number of insightful suggestions. I would also like to thank Michael D{\"u}tsch for his kind and useful remarks. This work was supported by the Polish Ministry of Science and Higher Education, grant number 7150/E-338/M/2017.
\appendix

\section{Second order of perturbation theory}\label{sec:mass}

In this this appendix we show that in models with massless particles, in contrast to purely massive models, the existence of the weak adiabatic limit requires appropriate normalization of some of the time-ordered products. More specifically, we prove that the weak adiabatic limit does not exist in the second order of the perturbation theory unless the self-energies of all massless fields (which are used in the definition of a given model) are normalized in such a way that the physical masses of these fields vanish. Note that this condition is a~part of the normalization condition \ref{norm:wAL}, which is needed in our proof of the existence of the weak adiabatic limit.

For simplicity we consider the model defined in terms of two real scalar fields: a~massive field $\psi$ and a~massless field $\varphi$ with the interaction vertex $\mathcal{L}=\frac{1}{2}\psi^2\varphi$ and the coupling constant $e$. By the result of Section \ref{sec:PROOF} it is possible to normalize the time-ordered products such that the weak adiabatic limit exists in the model. In particular, the correction of order $e^2$ to the Green function
\begin{equation}\label{eq:app_green}
 \int\rd^4 x_1\rd^4 x_2 \,f(x_1,x_2)\,\Gre(\varphi(x_1),\varphi(x_2))
\end{equation}
is certainly well defined for any $f\in\cS(\R^8)$ if the Fourier transform of the distribution
\begin{equation}\label{eq:app1_vev}
 (\Omega|\T(\psi^2(x_1),\psi^2(x_2))\Omega),
\end{equation}
which (up to a multiplicative constant) is the second order correction to the self-energy of the field $\varphi$, is of the form
\begin{equation}\label{eq:app1_sigma}
 (2\pi)^4\delta(q_1+q_2) \Sigma(q_1),
\end{equation}
where $\Sigma(0)=0$ and $\partial_\mu \Sigma(0)=0$. The distribution \eqref{eq:app1_sigma} coincides in some neighborhood of zero with the Fourier transform of $(\Omega|\Adv(\psi^2(x_1);\psi^2(x_2))\Omega)$. Since this distribution involves only massive fields, the function $\Sigma:\,\R^4\to\C$ is analytic in some neighborhood of the origin \cite{epstein1973role}. The second-order correction to \eqref{eq:app_green}, obtained by taking the weak adiabatic limit, is proportional to
\begin{equation}
 \int\mP{k}\,\F{f}(k,-k)\frac{1}{k^2+\ri\zerop}\Sigma(k)\frac{1}{k^2+\ri\zerop}.
\end{equation}
The above expression is well defined for all $f\in\cS(\R^8)$ since $(k^2+\ri\zerop)^{-2}$ is well defined as a distribution for all test functions vanishing at the origin. The distribution $(k^2+\ri\zerop)^{-2}$ can be extended to the space of all test functions, but the extension is not unique. This indicates that the weak adiabatic limit does not exist if $\Sigma(0)\neq 0$.

In what follows, we shall show that the condition $\Sigma(0)=0$, which expresses the correct mass normalization of the field $\varphi$, is in fact necessary for the existence of the weak adiabatic limit in the model in the second order of the perturbation theory. To this end, let us redefine the VEV of the time-ordered product \eqref{eq:app1_vev} by adding to it $c\,\delta(x_1-x_2)$ with $c\neq 0$. Note that after this redefinition $\Sigma(0)=c\neq 0$. In the EG approach the second-order contribution to \eqref{eq:app_green} is defined as the value at $q_1=q_2=0$ in the sense of {\L}ojasiewicz of the Schwartz distributions
\begin{equation}\label{eq:app_adv}
 \int\rd^4 x_1\rd^4 x_2\,f(x_1,x_2)\,(\Omega|\Adv(\F{\cL}(q_1),\F{\cL}(q_2);\varphi(x_1),\varphi(x_2))\Omega)
\end{equation}
or
\begin{equation}\label{eq:app_ret}
 \int\rd^4 x_1\rd^4 x_2\,f(x_1,x_2)\,(\Omega|\Ret(\F{\cL}(q_1),\F{\cL}(q_2);\varphi(x_1),\varphi(x_2))\Omega).
\end{equation}
The differences between the contributions to \eqref{eq:app_adv} and \eqref{eq:app_ret} before and after the redefinition of \eqref{eq:app1_vev} are given by
\begin{multline}
 d^\pm(q_1,q_2) = 2c \int\mP{k_1}\mP{k_2}\,\F{f}(k_1,k_2)
 \\ 
 \left[\frac{\ri}{k_1^2+\ri\zerop}\frac{\ri}{k_2^2+\ri\zerop}
 -(2\pi)\theta(\pm k_1^0)\delta(k_1^2)(2\pi)\theta(\pm k_2^0)\delta(k_2^2)\right](2\pi)^4\delta(k_1+k_2+q_1+q_2),
\end{multline}
where $+$ and $-$ correspond to \eqref{eq:app_adv} and \eqref{eq:app_ret}, respectively. The difference between the contributions to 
\begin{equation}
 \int\rd^4 x_1\rd^4 x_2\,f(x_1,x_2)\,(\Omega|\Dif(\F{\cL}(q_1),\F{\cL}(q_2);\varphi(x_1),\varphi(x_2))\Omega)
\end{equation}
before and after the redefinition of \eqref{eq:app1_vev} is given by
\begin{multline}
 d(q_1,q_2)
 =2c\int\mH{0}{k_1}\mH{0}{k_2}\,\F{f}(k_1,k_2)\,(2\pi)^4\delta(k_1+k_2+q_1+q_2)
 \\
 - 2c\int\mH{0}{k_1}\mH{0}{k_2}\,\overline{\F{f}(k_1,k_2)}\,(2\pi)^4\delta(k_1+k_2-q_1-q_2),
\end{multline}
where $\rd\mu_0(k) = \frac{1}{(2\pi)^3} \rd^4 k\,\theta(k^0) \delta(k^2)$. If $\F{f}(0,0)=0$, then 
\begin{equation}
 |d(q_1,q_2)|\leq\const\, |(q_1,q_2)|.
\end{equation}
Since $d$ is continuous we have $d(q_1,q_2)=O^\mathrm{dist}(|(q_1,q_2)|)$. Consequently, using Theorem \ref{thm:math_splitting} and the decomposition \eqref{eq:decomposition_weak_massive} one shows that the distributions $d^\pm(q_1,q_2)$ have values at $q_1=q_2=0$ in the sense of {\L}ojasiewicz and these values coincide. As a~result, it is enough to consider $f\in\cS(\R^8)$ such that $\F{f}=\const\neq0$ in some neighborhood of $0$. In this case we have 
\begin{equation}
 d(q_1,q_2) =\const\, \sgn(q_1^0+q_2^0) \theta((q_1+q_2)^2)
\end{equation}
in a neighborhood of $0$ (the constant in the above expression is nonzero). It is evident that the distribution $d(q_1,q_2)$ does not have value at $0$ in the sense of {\L}ojasiewicz. Thus, if $\F{f}(0,0)\neq0$, then at most one of the distributions \eqref{eq:app_adv} and \eqref{eq:app_ret} has value at $0$. Actually, none of these distributions has value at $0$ if $\F{f}(0,0)\neq0$. Indeed, since we know that the distributions \eqref{eq:app_adv} and \eqref{eq:app_ret} have value at $0$ if $\F{f}(0,0)=0$ it is enough to consider $f$ such that $\F{f}(0,0)\neq0$ and $\F{f}(k_1,k_2)=\F{f}(-k_1,-k_2)$. For such $f$ it holds $d^+(q_1,q_2)=d^-(-q_1,-q_2)$ and it is not possible that exactly one of the distributions $d^\pm$ has value at $0$, hence none of them has value at $0$. This implies that the weak adiabatic limit does not exist after the above-mentioned redefinition of the distribution~\eqref{eq:app1_vev}.

\section{Interaction vertices of different dimensions}\label{sec:general}

In this appendix we investigate models with the interaction vertices of different canonical dimensions. We describe modifications in the basic assumptions about the time-ordered products which facilitate the proof of the existence of the weak adiabatic limit in the case of such models. 

As an example, let us first consider the model defined in terms of two real scalar fields: a~massive field $\psi$ and a~massless field $\varphi$ with the interaction vertices: $\cL_1=\frac{1}{4!}\varphi^4$, $\cL_2=\frac{1}{2}\varphi^2\psi$. We have $\dim(\cL_1)=4$, $\dim(\cL_2)=3$. Note that models which contain at least one vertex of dimension~$4$ are renormalizable only if $\CC=0$ in the axiom \ref{axiom7}. However, if there are vertices of dimension strictly less than~$4$, then the axiom \ref{axiom7} with $\CC=0$ is too restrictive and cannot be satisfied simultaneously with the normalization condition \ref{norm:wAL}, which is needed in our proof of the existence of the weak adiabatic limit. It is possible to show (using the method of the proof of Theorem \ref{thm:main1}) that \ref{norm:wAL} can be fulfilled if the axiom \ref{axiom7} is modified in such a way that it acquires the form \eqref{eq:sd_bound2} in the case of the VEVs of the sub-polynomials of the interaction vertices for any super-quadri-indices $u_1,\ldots,u_n$ involving only massless fields. In the present model this may be achieved by replacing \ref{axiom7} with the following bound on the scaling degree
\begin{equation}
\sd(\, (\Omega|\T(B_1(x_1),\ldots,B_{k-1}(x_{k-1}),B_k(0))\Omega)\,) \leq \sum_{j=1}^k \mathrm{d}(B_j),
\end{equation}
where $\mathrm{d}(B)=\dim(B) + 1$ if $B$ contains the field $\psi$ and $\mathrm{d}(B)=\dim(B)$ otherwise. However, if we added $\cL_3=\frac{1}{4!}\psi^4$ to the list of the interaction vertices of the model, then the model would not be renormalizable under present assumptions since $\mathrm{d}(\cL_3)=5$. As we will indicate the above problem may be avoided by weakening the axiom \ref{axiom3}.

Now let the interaction vertices $\cL_1,\ldots,\cL_{\mathrm{q}}$ of a model under consideration be such that $\dim(\cL_l)\leq 4$ for any $l\in\{1,\ldots,\mathrm{q}\}$. If $\dim(\cL_l)\leq 3$ we demand in addition that $\cL_l$ contains at least one massive field. We will argue that after appropriate modification of the axioms \ref{axiom3} and \ref{axiom7} it is possible to normalize the time-ordered products such that they satisfy the normalization condition \ref{norm:wAL}. We first replace the condition \eqref{eq:T_expansion} in the axiom \ref{axiom3} by
\begin{multline}
 \T(B_1(x_1),\ldots,B_k(x_k)) 
 =
 \\
 \sum_{s_1,\ldots,s_k} t^{s_1,\ldots,s_k}(B_1(x_1),\ldots,B_k(x_k))\, 
 ~\frac{\normord{A^{s_1}(x_1)\ldots A^{s_k}(x_k)}}{s_1!\ldots s_k!},
\end{multline}
where $t^{s_1,\ldots,s_n}(B_1(x_1),\ldots,B_n(x_n))$ is some translationally invariant Schwartz distribution for every super-quadri-indices $s_1,\ldots,s_n$ and $B_1,\ldots,B_n\in\Fa$. Next, we impose the following condition as a substitute of the axiom \ref{axiom7} 
\begin{multline}
 \sd(\, t^{s_1,\ldots,s_k}(B_1(x_1),\ldots,B_{k-1}(x_{k-1}),B_k(0))\,) 
 \\
 \leq \sum_{j=1}^k \textbf{d}(B_j) - \sum_{i=1}^{\mathrm{p}} (\ext_{\mathbf{s}}(A_i) \dim(A_i) + \der_{\mathbf{s}}(A_i)),
\end{multline}
where $\mathbf{d}:\Fh\to \frac{1}{2}\Z$ is a~function such that $\mathbf{d}(B)\geq\dim(B)$ and $\mathbf{d}(cB)=\mathbf{d}(B)$ for any $c\in\C$ and $B\in\Fh$. It can be proved that if the time-ordered products with at most $n$ arguments satisfy the modified axioms then it is possible to define the time-ordered products with $n+1$ arguments which again satisfy them. If we chose the map $\mathbf{d}$ such that $\mathbf{d}(\cL_l)=4$ for all $l\in\{1,\ldots,\mathrm{q}\}$, then one can prove, along the lines of Theorem \ref{thm:main1}, that the normalization freedom allows to define the time-ordered products such that the condition \ref{norm:wAL} holds. The existence of the weak adiabatic limit follows then from Theorem \ref{thm:main2}. 

\section{Gell-Mann and Low formula}\label{sec:magic_formula}

In the standard approach to the quantum field theory the Green functions are defined by the following formula due to Gell-Mann and Low \cite{gell1951bound}
\begin{multline}
 \Gre(C_1(x_1),\ldots,C_m(x_m)) =
 \\
 \frac{\sum_{n=0}^\infty \frac{\ri^ne^n}{n!} \int\rd^4 y_1\ldots\rd^4 y_n\,(\Omega|\T(I_n,C_1(x_1),\ldots,C_m(x_m))\Omega)}{\sum_{n=0}^\infty \frac{\ri^ne^n}{n!} \int\rd^4 y_1\ldots\rd^4 y_n\,(\Omega|\T(I_n)\Omega)},
\end{multline}
where $I_n:=(\mathcal{L}(y_1),\ldots,\mathcal{L}(y_n))$. For simplicity, we assumed that there is only one interaction vertex. The Gell-Mann and Low formula is not well defined in its standard form stated above because of the presence of the formal integrals. For this reason, we consider instead the IR-improved Gell-Mann and Low formula, already introduced in \cite{dutsch1997slavnov}, which states that
\begin{equation}\label{eq:app_G_GL_lim}
 \Gre^{\textrm{GL}}(C_1(x_1),\ldots,C_m(x_m)) := \lim_{\epsilon\searrow0} \Gre^{\textrm{GL}}_\epsilon(C_1(x_1),\ldots,C_m(x_m)),
\end{equation}
where
\begin{multline}\label{eq:app_G_GL_def}
 \Gre^{\textrm{GL}}_\epsilon(C_1(x_1),\ldots,C_m(x_m)) :=
 \\
 \frac{\sum_{n=0}^\infty\! \frac{\ri^ne^n}{n!}\! \int\rd^4 y_1\ldots\rd^4 y_n g_\epsilon(y_1)\ldots g_\epsilon(y_n)(\Omega|\T(I_n,C_1(x_1),\ldots,C_m(x_m))\Omega)}{\sum_{n=0}^\infty \frac{\ri^ne^n}{n!} \int\rd^4 y_1\ldots\rd^4 y_n\,g_\epsilon(y_1)\ldots g_\epsilon(y_n)\,(\Omega|\T(I_n)\Omega)}.
\end{multline}
The above distribution is well defined for any $\epsilon>0$, but the existence of the limit on the RHS of \eqref{eq:app_G_GL_lim} is non-trivial. 

We will show that for models satisfying Assumption \ref{asm} the limit \eqref{eq:app_G_GL_lim} exists and coincides with the EG definition of the corresponding Green function. This is a generalization of the results of  \cite{dutsch1997slavnov}, where a similar statement was proved for purely massive models. Let us first note that
\begin{multline}\label{eq:app_G_GL}
 \Gre^{\textrm{GL}}_\epsilon(C_1(x_1),\ldots,C_m(x_m))=
 \\
 (-\ri)^{m} \frac{\delta}{\delta h_m(x_m)}\ldots\frac{\delta}{\delta h_1(x_1)} 
  \frac{(\Omega|S(\mathbf{g}_\epsilon+\mathbf{h})\Omega)}{(\Omega|S(\mathbf{g}_\epsilon)\Omega)}  \bigg|_{\mathbf{h}=0}, 
\end{multline}
where $S(\mathbf{g})$ is given by \eqref{eq:def_Texp}, $\mathbf{g}_\epsilon=e g_\epsilon \otimes \cL$ and $\mathbf{h}=\sum_{j=1}^m h_j\otimes C_j$. On the other hand, the Green functions in the EG approach are given by
\begin{equation}
 \Gre^\rEG(C_1(x_1),\ldots,C_m(x_m)) := \lim_{\epsilon\searrow0} \Gre^\rEG_\epsilon(C_1(x_1),\ldots,C_m(x_m)),
\end{equation}
where
\begin{multline}\label{eq:app_G_EG}
 \Gre^\rEG_\epsilon(C_1(x_1),\ldots,C_m(x_m)):=
 \\
 (-\ri)^{m} \frac{\delta}{\delta h_m(x_m)}\ldots\frac{\delta}{\delta h_1(x_1)} 
 (\Omega|S(\mathbf{g}_\epsilon+\mathbf{h}) S(\mathbf{g}_\epsilon)^{-1}\Omega)  \bigg|_{\mathbf{h}=0}. 
\end{multline}

\begin{thm}
Suppose that a model under consideration satisfies Assumption \ref{asm} formulated in Section~\ref{sec:PROOF}. If the time-ordered products fulfill the normalization condition~\ref{norm:wAL}, then it holds
\begin{equation}\label{eq:app_thm}
 \lim_{\epsilon\searrow0} \Gre^{\mathrm{GL}}_\epsilon(C_1(x_1),\ldots,C_m(x_m))
 =
 \lim_{\epsilon\searrow0}\Gre^\rEG_\epsilon(C_1(x_1),\ldots,C_m(x_m)) ,
\end{equation}
where the limit on the RHS of the above equation exists by Theorem \ref{thm:main2} and defines the Green function in the EG approach.
\begin{proof}
Consider the expression
\begin{multline}\label{eq:GL_diff}
  (-\ri)^{m} \frac{\delta}{\delta h_m(x_m)}\ldots\frac{\delta}{\delta h_1(x_1)}
  \\
  \left[ (\Omega|S(\mathbf{g}_\epsilon+\mathbf{h})\Omega)  - (\Omega|S(\mathbf{g}_\epsilon+\mathbf{h})S(\mathbf{g}_\epsilon)^{-1}\Omega)\,(\Omega|S(\mathbf{g}_\epsilon)\Omega) \right] \bigg|_{\mathbf{h}=0}.
\end{multline}
It follows from \eqref{eq:time_ordered_adv} and \eqref{eq:def_Texp} and 
\begin{equation}
 (\Omega|S(\mathbf{g}_\epsilon+\mathbf{h})\Omega) = (\Omega|S(\mathbf{g}_\epsilon+\mathbf{h})S(\mathbf{g}_\epsilon)^{-1}S(\mathbf{g}_\epsilon)\Omega)
\end{equation}
that the coefficient of order $e^n$ of the formal power series \eqref{eq:GL_diff} is a linear combination of terms of the form
\begin{multline}
 \int\rd^4 y_1\ldots\rd^4 y_n\,g_\epsilon(y_1)\ldots g_\epsilon(y_n)  
 \\
 [ (\Omega|\Adv(I'_n;J)\T(I''_n)\Omega) -
 (\Omega|\Adv(I'_n;J)\Omega)\,(\Omega|\T(I''_n)\Omega)],
\end{multline}
where the concatenation of the sequences $I'_n$ and $I''_n$ is some permutation of the sequence $I_n=(\mathcal{L}(y_1),\ldots,\mathcal{L}(y_n))$ and $J=(C_1(x_1),\ldots,C_m(x_m))$. 

Let us show that 
\begin{equation}\label{eq:GL_dist}
 (\Omega|\Adv(I'_n;J)\T(I''_n)\Omega) -
 (\Omega|\Adv(I'_n;J)\Omega)\,(\Omega|\T(I''_n)\Omega)
\end{equation}
has IR-index $d=1$ with respect to all variables $y_1,\ldots,y_n$. To this end, we use the following variant of Theorem \ref{thm:product_F}: Assume that the $\mathcal{F}$ products \eqref{eq:product_families} satisfy the assumptions (1)--(3) of Theorem \ref{thm:product_F} for all super-quadri-indices $s_1, \ldots, s_n$, $s'_1, \ldots, s'_{n'}$ which involve only massless fields such that at least one of them is nonzero, then the distribution
\begin{multline}
 (\Omega|F(B_1(x_1),\ldots,B_n(x_n))F'(B_1^{\prime}(x'_1),\ldots,B_{n'}^{\prime}(x'_{n'}))\Omega) 
 \\
 - (\Omega|F(B_1(x_1),\ldots,B_n(x_n))\Omega)\,(\Omega|F'(B_1^{\prime}(x'_1),\ldots,B_{n'}^{\prime}(x'_{n'}))\Omega)
\end{multline}
fulfills the conditions (1')-(3') stated in this theorem. The above statement about \eqref{eq:GL_dist} follows from Part~(A) of Theorem \ref{thm:main2}, the normalization condition \ref{norm:wAL2} with $F$ being the time-ordered product and the above reformulation of Theorem \ref{thm:product_F} (more precisely we use the statement (3')).

Using the above result, Definition \ref{def:IR2} of the IR-index and Part (A) of Theorem~\ref{thm:math_adiabatic_limit} we obtain that all coefficients in the formal expansion in powers of the coupling constant $e$ of \eqref{eq:GL_diff} after integrating with arbitrary Schwartz function $f(x_1,\ldots,x_m)$ are of order $O(\epsilon^{1-\varepsilon})$ for any $\varepsilon>0$. It follows from the normalization condition \ref{norm:wAL} that
\begin{equation}
 (\Omega|S(\mathbf{g}_\epsilon)\Omega) =  O(\epsilon^{-\varepsilon})
\end{equation}
for any $\varepsilon>0$. The above equation means that all coefficients in the formal expansion in powers of the coupling constant $e$ of the LHS are of order $O(\epsilon^{-\varepsilon})$ for any $\varepsilon>0$. Since the first term of this expansion is equal $1$ we have
\begin{equation}
 (\Omega|S(\mathbf{g}_\epsilon)\Omega)^{-1} =  O(\epsilon^{-\varepsilon})
\end{equation}
for any $\varepsilon>0$. Consequently, each term of the $e$ expansion of the difference between the RHS of Equations \eqref{eq:app_G_GL} and \eqref{eq:app_G_EG} integrated with arbitrary Schwartz function $f(x_1,\ldots,x_m)$ is of the order $O(\epsilon^{1-2\varepsilon})$ for any $\varepsilon>0$. This finishes the proof.
\end{proof}
\end{thm}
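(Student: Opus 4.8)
The plan is to reduce the equality of the two limits to a pair of quantitative bounds, valid order by order in the coupling constant $e$ and for every $\varepsilon>0$: a bound of order $O(\epsilon^{1-\varepsilon})$ on the defect between the un-normalized numerators of the Gell-Mann--Low and Epstein--Glaser formulas, and a slow-growth bound $(\Omega|S(\mathbf{g}_\epsilon)\Omega)=O(\epsilon^{-\varepsilon})$ for the vacuum normalization factor. First I would rewrite $\Gre^{\mathrm{GL}}_\epsilon$ and $\Gre^{\rEG}_\epsilon$ through the generating functional $S$ as in \eqref{eq:app_G_GL} and \eqref{eq:app_G_EG}. Since $(\Omega|S(\mathbf{g}_\epsilon)\Omega)$ does not depend on $\mathbf{h}$, it factors out of the functional derivatives, so that
\[
 \Gre^{\mathrm{GL}}_\epsilon(C_1(x_1),\ldots,C_m(x_m)) - \Gre^{\rEG}_\epsilon(C_1(x_1),\ldots,C_m(x_m)) = \frac{\mathcal{E}_\epsilon(C_1(x_1),\ldots,C_m(x_m))}{(\Omega|S(\mathbf{g}_\epsilon)\Omega)},
\]
where $\mathcal{E}_\epsilon$ is the expression \eqref{eq:GL_diff}, namely the $m$-fold functional derivative at $\mathbf{h}=0$ of $(\Omega|S(\mathbf{g}_\epsilon+\mathbf{h})\Omega)-(\Omega|S(\mathbf{g}_\epsilon+\mathbf{h})S(\mathbf{g}_\epsilon)^{-1}\Omega)(\Omega|S(\mathbf{g}_\epsilon)\Omega)$. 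Using $(\Omega|S(\mathbf{g}_\epsilon+\mathbf{h})\Omega)=(\Omega|S(\mathbf{g}_\epsilon+\mathbf{h})S(\mathbf{g}_\epsilon)^{-1}S(\mathbf{g}_\epsilon)\Omega)$ together with \eqref{eq:time_ordered_adv} and \eqref{eq:def_Texp}, the coefficient of $e^n$ in $\mathcal{E}_\epsilon$ is a finite linear combination of the connected brackets
\[
 \int\rd^4 y_1\ldots\rd^4 y_n\,g_\epsilon(y_1)\ldots g_\epsilon(y_n)\,\big[(\Omega|\Adv(I'_n;J)\T(I''_n)\Omega) - (\Omega|\Adv(I'_n;J)\Omega)\,(\Omega|\T(I''_n)\Omega)\big],
\]
where the concatenation of $I'_n$ and $I''_n$ is a permutation of $(\mathcal{L}(y_1),\ldots,\mathcal{L}(y_n))$ and $J=(C_1(x_1),\ldots,C_m(x_m))$.

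The central point is to show that each such connected bracket has IR-index $d=1$ with respect to $y_1,\ldots,y_n$. For this I would invoke the connected variant of Theorem~\ref{thm:product_F}: the difference of the VEV of a product of two $\Fa$-products and the product of their VEVs satisfies conclusions (1')--(3') of that theorem provided the hypotheses (1)--(3) hold for all lists of massless super-quadri-indices having at least one nonzero entry. This weakened hypothesis is exactly what is available, because subtracting the product of VEVs kills precisely the vacuum (all-zero) massless contribution. For the $\Adv$-factor the required hypothesis is Part~(A) of Theorem~\ref{thm:main2}, which gives IR-index $1-\sum_i[\dim(A_i)\ext_{\mathbf{s}}(A_i)+\der_{\mathbf{s}}(A_i)]$ with respect to the $y$-variables, so the constant $d_0$ equals $1$; for the $\T$-factor it is the normalization condition \ref{norm:wAL2}, equivalent to \ref{norm:wAL} by Theorem~\ref{thm:main1}, which gives \underline{IR}-index $\omega'=4-\sum_i[\dim(A_i)\ext_{\mathbf{s}'}(A_i)+\der_{\mathbf{s}'}(A_i)]$, so the constant $d'_0$ equals $4$. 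Statement (3') of Theorem~\ref{thm:product_F} then produces IR-index $d_0+d'_0-4=1$ with respect to all the $y$-variables. Smearing the bracket against $g_\epsilon^{\otimes n}$ and against an arbitrary $f\in\cS(\R^{4m})$ in the variables $x_1,\ldots,x_m$ and applying Part~(A) of Theorem~\ref{thm:math_adiabatic_limit}, one gets $O(\epsilon^{1-\varepsilon})$; hence every coefficient of $\mathcal{E}_\epsilon$ tested with $f$ is $O(\epsilon^{1-\varepsilon})$.

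It remains to bound the denominator. For $n\geq1$ the coefficient of $e^n$ in $(\Omega|S(\mathbf{g}_\epsilon)\Omega)$ is proportional to $\int g_\epsilon(y_1)\ldots g_\epsilon(y_n)\,(\Omega|\T(\mathcal{L}(y_1),\ldots,\mathcal{L}(y_n))\Omega)$; by \ref{norm:wAL} with all super-quadri-indices equal to zero the Fourier transform of $(\Omega|\T(\mathcal{L}(y_1),\ldots,\mathcal{L}(y_n))\Omega)$ equals $(2\pi)^4\delta(q_1+\ldots+q_n)\,t(q_1,\ldots,q_{n-1})$ with $t$ vanishing to order $\omega'=4$ at the origin, equivalently of \underline{IR}-index $4$, so Part~(A) of Theorem~\ref{thm:math_adiabatic_limit} makes this coefficient $O(\epsilon^{-\varepsilon})$. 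Because the $e^0$-coefficient of $(\Omega|S(\mathbf{g}_\epsilon)\Omega)$ is $1$, inverting the formal power series gives $(\Omega|S(\mathbf{g}_\epsilon)\Omega)^{-1}=O(\epsilon^{-\varepsilon})$ order by order. Combining, each coefficient of $\Gre^{\mathrm{GL}}_\epsilon-\Gre^{\rEG}_\epsilon$ smeared with $f$ is $O(\epsilon^{1-2\varepsilon})$ and tends to $0$; since $\lim_{\epsilon\searrow0}\Gre^{\rEG}_\epsilon$ exists by Theorem~\ref{thm:main2}, the limit $\lim_{\epsilon\searrow0}\Gre^{\mathrm{GL}}_\epsilon$ exists as well and the two coincide.

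I expect the main obstacle to be the IR-index bookkeeping for the connected bracket. The key observation is that the subtraction $(\Omega|\Adv(I'_n;J)\T(I''_n)\Omega)-(\Omega|\Adv(I'_n;J)\Omega)(\Omega|\T(I''_n)\Omega)$ removes exactly the all-zero massless contribution, which is what makes the hypotheses of Part~(A) of Theorem~\ref{thm:main2} (requiring at least one nonzero super-quadri-index) applicable, and the gain ``$+1$'' survives the subtraction of $4$ in Theorem~\ref{thm:product_F} only because \ref{norm:wAL2} endows the $\T$-factor with the full \underline{IR}-index $\omega'=4$. If the vacuum contribution were not subtracted, or if \ref{norm:wAL} failed, the bracket would carry too low an IR-index and both the $O(\epsilon^{1-\varepsilon})$ estimate and the $O(\epsilon^{-\varepsilon})$ control of the denominator would be lost. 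The remaining ingredients --- the reduction through $S$, the combinatorial expansion into $\Adv$--$\T$ products, and the $\epsilon$-power counting --- follow the purely massive treatment of \cite{dutsch1997slavnov}.
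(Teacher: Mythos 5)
Your proposal is correct and follows essentially the same route as the paper: the same reduction to the expression \eqref{eq:GL_diff}, the same expansion into the connected brackets $(\Omega|\Adv(I'_n;J)\T(I''_n)\Omega)-(\Omega|\Adv(I'_n;J)\Omega)(\Omega|\T(I''_n)\Omega)$, the same connected variant of Theorem~\ref{thm:product_F} (statement (3')) fed by Part~(A) of Theorem~\ref{thm:main2} and \ref{norm:wAL2}, and the same $O(\epsilon^{1-\varepsilon})$ versus $O(\epsilon^{-\varepsilon})$ power counting for numerator and vacuum factor. The only cosmetic caveat is that ``zero of order $4$ in the sense of {\L}ojasiewicz'' and ``\underline{IR}-index $4$'' are equivalent as normalization conditions (via Theorem~\ref{thm:main1}), not as properties of an individual distribution, but since you invoke Theorem~\ref{thm:main1} this does not affect the argument.
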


If Assumption \ref{asm} is satisfied and the interaction vertex contains at least one massive field, then according to Part~(2) of the normalization condition \ref{norm:gc} in Section~\ref{sec:central} the distribution
\begin{equation}\label{eq:app_T}
 (\Omega|\T(\mathcal{L}(x_1),\ldots,\mathcal{L}(x_n))\Omega)
\end{equation}
can be normalized in such a way that its \underline{IR}-index equals $d=5$. As a result 
\begin{equation}
 (\Omega|S(\mathbf{g}_\epsilon)\Omega) -1 =  O(\epsilon^{1-\varepsilon})
\end{equation}
for any $\varepsilon >0$ and both sides of \eqref{eq:app_thm} are equal to
\begin{equation}
 \lim_{\epsilon\searrow0} 
 ~(-\ri)^{m} \frac{\delta}{\delta h_m(x_m)}\ldots\frac{\delta}{\delta h_1(x_1)} 
 (\Omega|S(\mathbf{g}_\epsilon+\mathbf{h})\Omega)  \bigg|_{\mathbf{h}=0}.
\end{equation}
This shows that for the above-mentioned normalization of \eqref{eq:app_T} the denominators in \eqref{eq:app_G_GL_def} and \eqref{eq:app_G_GL} may be omitted.

\sloppy

\printbibliography

\end{document}